\documentclass[12pt,a4paper]{article}
\usepackage{amsthm,amsmath,amsfonts,amssymb,amsxtra,bookmark}

\setlength{\voffset}{-.75truein}
\setlength{\textheight}{9.25truein}
\setlength{\textwidth}{6.5truein}
\setlength{\hoffset}{-.5truein}


\theoremstyle{plain}
\newtheorem{theorem}{Theorem}[section]
\newtheorem{lemma}[theorem]{Lemma}

\newtheorem{conjecture}[theorem]{Conjecture}

\theoremstyle{definition}
\newtheorem*{definition}{Definition}
\newtheorem{example}[theorem]{Example}

\theoremstyle{remark}
\newtheorem*{remark}{Remark}



\DeclareMathOperator{\Tr}{Tr}
\DeclareMathOperator{\Ker}{Ker}

\DeclareMathOperator{\Span}{Span}

\def\geqslant{\ge}
\def\leqslant{\le}
\def\bq{\begin{eqnarray}}
\def\eq{\end{eqnarray}}
\def\bqq{\begin{eqnarray*}}
\def\eqq{\end{eqnarray*}}
\def\nn{\nonumber}
\def\minus {\backslash}
\def\eps{\varepsilon}
\def\wto{\rightharpoonup}


\def\cV {\mathcal{V}}
\def\R {\mathbb{R}}
\def\C {\mathbb{C}}
\def\N {\mathcal{N}}
\def\U {\mathbb{U}}
\def\cJ {\mathcal{J}}
\def\cS {\mathcal{S}}

\def\E {\mathcal{E}}

\def\cA{\mathcal{A}}

\def\F {\mathcal{F}}
\def\B {\mathcal{B}}
\def\H{\mathcal{H}}
\def\h{\mathfrak{h}}
\def\fh{\mathfrak{h}}
\def\V {\mathcal{V}}
\def\cV {\mathcal{V}}
\def\R {\mathbb{R}}
\def\C {\mathbb{C}}
\def\N {\mathcal{N}}
\def\U {\mathbb{U}}
\def\J {\mathcal{J}}
\def\cJ {\mathcal{J}}
\def\S {\mathcal{S}}
\def\cS {\mathcal{S}}
\def\G {\mathcal{G}}
\def\E {\mathcal{E}}
\def\A{\mathcal{A}}
\def\cA{\mathcal{A}}

\title{\bf \Large Bogoliubov theory and bosonic atoms}
\author{Phan Thanh Nam \\\\
\small Department of Mathematical Sciences, University of Copenhagen,\\
\small Universitetsparken 5, 2100 Copenhagen, Denmark. E-mail: ptnam@math.ku.dk}

\begin{document}
\date{{}}
\maketitle

\begin{abstract} 
We formulate the Bogoliubov variational principle in a mathematical framework similar to the generalized Hartree-Fock theory. Then we analyze the Bogoliubov theory for bosonic atoms in details. We discuss heuristically why the Bogoliubov energy should give the first correction to the leading energy of large bosonic atoms. 
\end{abstract}

\tableofcontents
\addcontentsline{toc}{section}{Contents}

\section{Bogoliubov theory}

In this section we formulate the Bogoliubov variational principle in the same spirit of the generalized Hartree-Fock theory \cite{BLS94}. Our formulation bases on the earlier discussions in \cite{So06,So07}.

\subsection{One-body density matrices}

We start by introducing some conventional notations. Let $\h$ be a complex separable Hilbert space with the inner product $(.,.)$ which is linear in the second variable and anti-linear in the first. Let $\h_N:=\bigotimes _{\text{sym}}^N \h$ be the symmetric tensor product space of $N$ particles and let $\F=\F(\h):=\bigoplus_{N=0}^\infty \h_N$ be the bosonic {\it Fock space}.
 
Let $\B(\F)$ be the space of linear bounded operators on $\F$. Any  {\it quantum mechanical state} ({\it state} for short) $\rho:\B(\F)\to \C$ is identified with a positive semi-definite trace class operator $P$ on $\F$ with $\Tr(P)=1$ in such a way that 
$$\rho(B)=\text{Tr}(BP)~~\text{for all}~B\in \B(\F).$$
For example, a {\it pure state} is a state corresponding to the one-dimensional projection $\left| {\Psi} \right\rangle \left\langle {\Psi} \right|$ of a unit vector $\Psi\in \F$, and a {\it Gibbs state} is a state corresponding to $\Tr(\exp(-H))^{-1}\exp (-H)$ for some Hamiltonian $H:\F\to \F$ such that $\exp (-H)$ is trace class.

The {\it dual space} $\h^*$ can be identified to $\h$ by the {\it anti-unitary } $J:\h\to \h^*$, $$J(x)(y)=(x,y)_\h,~~\text{for all}~x,y\in \h.$$
It is convenient to introduce the {\it generalized annihilation} and {\it creation} operators on $\h\oplus \h^*$ by
\bqq A(f\oplus Jg)&=&a(f)+a^*(g),\hfill\\
A^*(f\oplus Jg)&=&a^*(f)+a(g),~~\text{for all}~f,g\in \h
\eqq
where $a(f)$ and $a^{*}(f)$ are the usual annihilation and creation operators. Note that if we denote 
\[
\S = \left( \begin{gathered}
  1~{\text ~~~}0 \hfill \\
  0~~-1 \hfill \\ 
\end{gathered}  \right)~~, \J = \left( {\begin{array}{*{20}c}
   0 & {J^* }  \\
   J & 0  \\

 \end{array} } \right),
\]
then we have the {\it conjugate relation} and the {\it canonical commutation relation} (CCR)
\bqq \label{eq:conjugate-CCR}
A^*( F_1)=A(\J F_1)~~, \left[ {A(F_1 ),A^* (F_2 )} \right] = (F_1 ,\S F_2 )~~\text{for all}~ F_1,F_2\in  \h\oplus \h^*
\eqq 
where $[X,Y]=XY-YX$.

Now we can define the {\it one-particle density matrix} ({\it 1-pdm} for short) $\Gamma:\h\oplus \h^* \to \h\oplus \h^*$ of a state $\rho$ by 
\[
(F_1 ,\Gamma F_2 ) = \rho (A^ *  (F_2 )A(F_1 ))~~~~\text{for all}~F_1,F_2\in \h\oplus \h^*.
\]
Such a 1-pdm may be also written as
\bq \label{eq:1pdm}
\Gamma  = \left( {\begin{array}{*{20}c}
   \gamma  & \alpha   \\
   {J\alpha J } & {1 + J\gamma J^*}  \\

 \end{array} } \right)
\eq
where $\gamma:\h\to \h$ and $\alpha:\h^*\to \h$ are {linear bounded} operators defined by 
\bqq
  (f,\gamma g)  = \rho (a^* (g)a(f)), ~  (f,\alpha Jg) = \rho (a(g)a(f))~~\text{for all}~f,g\in \h.
\eqq
It is obvious that any 1-pdm is {\it positive semi-definite}. The following lemma expresses the condition $\Gamma\ge 0$ in terms of $\gamma$ and $\alpha$. Its proof is provided in the Appendix.

\begin{lemma}\label{le:relation-gamma-alpha} Let $\Gamma$ be of the form (\ref{eq:1pdm}). Then $\Gamma\ge 0$ if and only if $\gamma\ge 0$, $\alpha^*=J\alpha J$ and 
\bq \label{eq:relation-gamma-alpha}
\gamma  \geqslant \alpha J (1 + \gamma )^{ - 1} J^* \alpha ^* .
\eq
\end{lemma}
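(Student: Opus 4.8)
\medskip
\noindent\emph{Proof idea.}
The plan is to deduce the positivity of $\Gamma$ from a Schur complement criterion applied to the lower-right block $1+J\gamma J^*$: this is the convenient corner to eliminate, because $1+J\gamma J^*\ge 1$ is automatically invertible with bounded inverse, so no invertibility hypothesis on $\gamma$ is needed along the way.

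First I would dispose of the self-adjointness bookkeeping. Reading off the blocks of (\ref{eq:1pdm}) and using that $J$ is anti-unitary (so $J^*J=1_\h$, $JJ^*=1_{\h^*}$, $(J\alpha J)^*=\alpha$), one sees that $\Gamma=\Gamma^*$ holds exactly when $\gamma=\gamma^*$ and $\alpha^*=J\alpha J$. Hence, if $\Gamma\ge 0$ (so in particular $\Gamma=\Gamma^*$), the condition $\alpha^*=J\alpha J$ is forced; testing $(F,\Gamma F)\ge 0$ on $F=f\oplus 0$ gives $\gamma\ge 0$; and on $F=0\oplus Jg$ it reduces to $\|g\|^2+(g,\gamma g)\ge 0$, which is then automatic. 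Conversely, the three facts $\gamma=\gamma^*$, $\gamma\ge0$, $\alpha^*=J\alpha J$ make $\Gamma$ self-adjoint with $D:=1+J\gamma J^*\ge 1>0$, so that only the ``cross-term'' inequality remains to be matched.

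Next, with $D$ boundedly invertible, I would use the congruence factorization
\[
\begin{pmatrix}\gamma & \alpha\\ J\alpha J & D\end{pmatrix}
=\begin{pmatrix}1 & \alpha D^{-1}\\ 0 & 1\end{pmatrix}
\begin{pmatrix}\gamma-\alpha D^{-1}J\alpha J & 0\\ 0 & D\end{pmatrix}
\begin{pmatrix}1 & 0\\ D^{-1}J\alpha J & 1\end{pmatrix},
\]
legitimate in this bounded setting since all entries are bounded and the two (mutually adjoint) triangular factors are boundedly invertible; together with $D\ge0$ it yields $\Gamma\ge 0\iff\gamma-\alpha D^{-1}J\alpha J\ge 0$. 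The single computation that converts this into (\ref{eq:relation-gamma-alpha}) is the identity $(1+J\gamma J^*)^{-1}=J(1+\gamma)^{-1}J^*$, checked in one line by multiplying out and using $J^*J=1_\h$ and $(1+\gamma)^{-1}(1+\gamma)=1$; combined with $J^*\alpha^*=\alpha J$ (that is, $\alpha^*=J\alpha J$ once more), one gets $\alpha D^{-1}J\alpha J=\alpha J(1+\gamma)^{-1}J^*\alpha^*$. Running the equivalence in both directions then gives the lemma.

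I expect the only real friction to be keeping the anti-unitary maps $J$, $J^*$ straight — tracking on which of $\h$, $\h^*$ each operator acts and verifying the small identities above — plus one sentence certifying that the Schur complement criterion remains valid for bounded operators on an infinite-dimensional space, which it does precisely because $D$ is bounded below. A purely computational alternative, inserting $F=f\oplus Jg$ into $(F,\Gamma F)$ and completing the square, also works, but it is messier: completing the square in the $f$-variable would require $\gamma$ itself to be invertible, whereas eliminating the lower-right block avoids that.
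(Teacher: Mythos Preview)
Your argument is correct. The route, however, differs from the paper's. You use the block $LDL^*$ (Schur complement) factorization with respect to the lower-right block $D=1+J\gamma J^*$, which is automatically bounded below, and then read off $\Gamma\ge 0\iff \gamma-\alpha D^{-1}\alpha^*\ge 0$ in one stroke; the identity $D^{-1}=J(1+\gamma)^{-1}J^*$ converts this into (\ref{eq:relation-gamma-alpha}). The paper instead works directly with the quadratic form: it expands $\langle f\oplus Jg,\Gamma(f\oplus Jg)\rangle$, optimizes over a complex scaling $g\mapsto tg$ to obtain the discriminant condition $(f,\gamma f)(g,(1+\gamma)g)\ge |(\alpha Jf,g)|^2$, substitutes $g\mapsto(1+\gamma)^{-1}g$, and then closes the equivalence by choosing $g=\alpha Jf$ in one direction and invoking Cauchy--Schwarz for the form $(u,(1+\gamma)^{-1}v)$ in the other. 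Your approach is cleaner and more conceptual---the Schur complement criterion packages the scaling/Cauchy--Schwarz step into a single congruence---while the paper's computation is elementary and self-contained, requiring no appeal to an operator-theoretic Schur lemma. Both ultimately exploit the same invertibility of $1+\gamma$; your observation that completing the square in $f$ would have required $\gamma$ to be invertible is exactly why the paper, like you, effectively eliminates the $g$-block.
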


\begin{remark} The fermionic analogue of the inequality (\ref{eq:relation-gamma-alpha}) is 
$\alpha \alpha ^*\le \gamma (1 - \gamma )$ \cite{BLS94}. We do not know if (\ref{eq:relation-gamma-alpha}) can be reduced to $\alpha \alpha ^*\le \gamma (1 + \gamma )$ or not.
\end{remark} 

Of primary physical interest are the states  with {\it finite particle number expectation}. Recall the {\it particle number operator}
$$\N:=\sum_{N=0}^\infty N 1_{\h_N}= \sum\limits_n {a^* (u_n )a(u_n )}$$
for any orthonormal basis $\{u_n\}_{n=1}^{\infty}$ for $\h$. It is straightforward to see that if a state $\rho$ has the 1-pdm of the form (\ref{eq:1pdm}) then 
$$\rho(\N)=\Tr(\gamma).$$
Hence $\rho$ has finite particle number expectation if and only if $\gamma$ is {trace class}.

\subsection{Bogoliubov transformations}

\begin{definition}[Bogoliubov transformations] A bosonic {\it Bogoliubov transformation} is a {linear} bounded isomorphism $\V:\h\oplus \h^*\to \h\oplus \h^*$ satisfying
$$\J\V\J=\V~~\text{and}~\V^*\S\V=\S.$$
\end{definition}

These conditions ensure that the Bogoliubov transformations preserve the conjugate relation and the canonical commutation relation, namely
\[
A^*(\V F_1)=A(\V\J F_1)~~\text{and}~\left[ {A(\V F_1 ),A^* (\V F_2 )} \right] = (F_1 ,\S F_2 ),~~\forall F_1 ,F_2\in  \h\oplus \h^*.
\]

The Bogoliubov transformations form a subgroup of the isomorphisms in  $\h\oplus \h^*$; in particular, if $\V$ is a Bogoliubov transformation then $\V^{-1}$ and $\V^*$ are also Bogoliubov transformations. Note that any mapping $\V$ satisfying $\J\V\J=\V$ must have the form
\bq
\V = \left( {\begin{array}{*{20}c}
   U & V  \\
   {JVJ} & {JUJ^* }  \\

 \end{array} } \right)\label{eq:Bogoliubovmap} 
\eq
for some linear operators $U:\h\to \h$, $V:\h^*\to \h$. 

We say that a Bogoliubov transformation $\V$ is {\it unitarily implementable} if it is {implemented} by a unitary mapping $\U_\V:\F\to \F$, namely
\bq \label{eq:Bogoliubov-unitary-mapping}
A(\V F) = \U_\V A(F)\U_\V^* ~~\text{for all}~F\in \h\oplus \h^*.
\eq
The following result determines whenever a Bogoliubov transformation is unitarily implementable. This result is well-known and we provide its proof in the Appendix for the reader's convenience. For the fermionic analogue, see \cite{BLS94} (Theorem 2.2) .  

\begin{theorem}[Unitarily implementable Bogoliubov transformations]\label{thm:unitary-implementation} A Bogoliubov transformation $\V:\h\oplus\h^*\to\h\oplus\h^*$ of the form (\ref{eq:Bogoliubovmap}) is unitarily implementable if and only if the Shale-Stinespring condition $\Tr_\h(VV^*)<\infty$ holds. 
\end{theorem}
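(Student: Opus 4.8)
The plan is to prove the two directions of the Shale-Stinespring criterion separately, and the main work is in the sufficiency direction.

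\medskip

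\textbf{Necessity.} Suppose $\V$ is implemented by a unitary $\U_\V$, and set $\Phi := \U_\V^*\Omega$ where $\Omega$ is the Fock vacuum. The vacuum is characterized by $a(f)\Omega = 0$ for all $f$, i.e.\ $A(0\oplus Jg)\Omega = 0$; conjugating by $\U_\V$ shows $A(\V(0\oplus Jg))\Phi = 0$ for all $g$. Writing out $\V(0\oplus Jg) = Vg \oplus JUJ^*Jg$ and using the definition of $A$, this says $(a(\overline{Vg}) + a^*(Ug))\Phi = 0$ — in other words $\Phi$ is annihilated by a family of operators of the form $a(\cdot) + a^*(\cdot)$. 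I would then compute the particle-number expectation $(\Phi,\N\Phi)$ by pairing the relations $a(f_n)\Phi = -a^*(\ldots)\Phi$ against $\Phi$ for an orthonormal basis, producing $\sum_n \|(\text{something})\Phi\|^2$, and identify the resulting quantity with $\Tr_\h(VV^*)$ up to constants using the CCR $\V^*\S\V = \S$ (equivalently the relations $U^*U - (JVJ)^*(JVJ) = 1$ and the off-diagonal identity). Since $\Phi$ lies in $\F$ it has finite norm, but one must be slightly careful: finiteness of $(\Phi,\N\Phi)$ is what gives $\Tr(VV^*)<\infty$, and this requires justifying that $\Phi$ actually lies in the domain of $\N^{1/2}$, which follows because $\Phi$ is a coherent/quasi-free vector built from $\V$ — alternatively one argues directly that the explicit candidate vector below must coincide with $\U_\V^*\Omega$.

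\medskip

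\textbf{Sufficiency.} Assume $\Tr_\h(VV^*) < \infty$. The strategy is to construct $\U_\V$ explicitly. By a polar-type/Takagi decomposition one can write $\V$ in a normal form: there is a unitary $W$ on $\h$ and an operator $V U^{-1}$ (when $U$ is invertible; in general one handles the kernel separately) whose matrix is Hilbert-Schmidt, so that $\U_\V$ is, up to a one-particle unitary $\Gamma(W)$ (second quantization of $W$, which is manifestly unitary on $\F$), the exponential of a quadratic expression $\exp\bigl(\tfrac12\sum_{m,n} \overline{c_{mn}}\, a^*(u_m)a^*(u_n) - \text{h.c.}\bigr)$ where $(c_{mn})$ is Hilbert-Schmidt. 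The candidate vacuum vector is the formal state $\Phi = C\exp\bigl(-\tfrac12 \sum c_{mn} a^*(u_m)a^*(u_n)\bigr)\Omega$; the Shale-Stinespring condition is exactly what makes $\|\Phi\|^2 = C^{-2} = \det(1 - cc^*)^{-1/2}$ (or the permanent-type analogue) finite, so $\Phi \in \F$. One then \emph{defines} $\U_\V$ on the dense domain of finite-particle vectors by $\U_\V a^*(f_1)\cdots a^*(f_k)\Omega := A^*(\V(f_1\oplus 0))\cdots A^*(\V(f_k\oplus 0))\Phi$, checks using the CCR (valid because $\V$ preserves $\S$) that this is well-defined and isometric, hence extends to an isometry of $\F$, and verifies it is surjective by running the same construction with $\V^{-1}$ to produce a two-sided inverse. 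Finally one checks the intertwining relation $A(\V F)\U_\V = \U_\V A(F)$ on the dense domain, which is immediate from the definition for creation-type $F$ and follows for annihilation-type $F$ by taking adjoints / using the conjugate relation $A^*(\V F) = A(\V\J F)$.

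\medskip

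\textbf{Main obstacle.} The delicate point is the sufficiency direction: making rigorous sense of the formal Gaussian vector $\exp(-\tfrac12\sum c_{mn}a^*a^*)\Omega$ and proving its norm is finite precisely under $\Tr(VV^*)<\infty$. This requires (i) the reduction of a general $\V$ to normal form, which must cope with the case where $U$ is not invertible — the kernel and cokernel of $U$ are finite-dimensional because $VV^*$ being trace class forces $V$ to be compact and the CCR relation $U^*U = 1 + (JVJ)^*(JVJ)$ then controls $U$ — and (ii) the convergence estimate, where one expands the exponential in the number of creation pairs, uses the Hilbert-Schmidt norm of $(c_{mn})$ to bound each term, and sums the resulting series (a standard but somewhat intricate combinatorial estimate involving pairings). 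Everything else — the CCR bookkeeping, the intertwining, surjectivity — is routine once the candidate vector is in hand.
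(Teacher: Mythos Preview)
Your sufficiency argument is essentially the paper's approach: reduce $\V$ to a diagonal normal form and write the new vacuum as a Gaussian $\exp(\text{quadratic in }a^*)|0\rangle$, with the Shale--Stinespring condition ensuring normalizability. Two small remarks. First, the worry about $\Ker U$ is unnecessary: the Bogoliubov relations give $U^*U = 1 + J^*V^*VJ \ge 1$ and $UU^* = 1 + VV^* \ge 1$, so $U$ is always boundedly invertible. Second, the paper makes the normal form concrete by choosing $\{u_i\}$ to be simultaneous eigenvectors of $U^*U$ and of the (symmetric) conjugate-linear map $C = U^*VJ$; this reduces the new annihilators to the one-mode form $\mu_i a(f_i) + \nu_i a^*(f_i)$ with $\mu_i^2 - \nu_i^2 = 1$ and $\sum \nu_i^2 = \Tr(VV^*)$, from which the new vacuum and its norm are explicit. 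Your Takagi-style decomposition amounts to the same thing.

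The necessity argument, however, has a genuine gap. You set $\Phi$ to be the transformed vacuum (there is also a sign slip: with the implementation convention $A(\V F) = \U_\V A(F)\U_\V^*$ it is $\U_\V|0\rangle$, not $\U_\V^*|0\rangle$, that is annihilated by $A(\V(f\oplus 0)) = a(Uf)+a^*(VJf)$). You then want to read off $\Tr(VV^*)$ from the particle-number expectation $(\Phi,\N\Phi)$. The problem is that $\Phi \in \F$ only says $\sum_N \|\Psi_N\|^2 < \infty$; it does \emph{not} give $\sum_N N\|\Psi_N\|^2 < \infty$. Your proposed fixes are circular: saying ``$\Phi$ is a quasi-free vector'' presumes the structure you are trying to establish, and matching $\Phi$ to the explicit Gaussian from the sufficiency part presumes the Shale--Stinespring condition you are trying to prove.

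The paper sidesteps this completely. Writing $\Phi = \bigoplus_N \Psi_N$, the annihilation relation $a(Uu)\Psi_{N+2} + a^*(VJu)\Psi_N = 0$ first forces all odd components to vanish and $\Psi_0 \neq 0$. Then the $N=0$ case alone, $a(Uu)\Psi_2 = -\Psi_0\, VJu$, lets you define a conjugate-linear map $H$ by $(H\varphi_1,\varphi_2) = (\Psi_2,\varphi_1\otimes\varphi_2)$; one checks $\Tr(H^*H) = \|\Psi_2\|^2$ and $-\Psi_0 VJ = \sqrt{2}\,HU$. Since $\|\Psi_2\|^2 \le \|\Phi\|^2 < \infty$ is automatic and $U$ is bounded, $VV^* = 2\Psi_0^{-2} HUU^*H^*$ is trace class. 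The key point is that only the \emph{two-particle} component of $\Phi$ is needed, and its finiteness requires nothing beyond $\Phi \in \F$.
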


Unlike to the fermionic case \cite{BLS94}, the bosonic Bogoliubov transformations are {\it not} unitary mappings on $\fh\oplus\fh^*$. However, we can still use the Bogoliubov transformations to diagonalize some certain operators on $\fh\oplus\fh^*$. Of our particular interest is the diagonalization of the 1-pdm's.   
   
\begin{theorem}[Diagonalization 1-dpm's by Bogoliubov transformations]\label{thm:diagonalizing-Gamma} If $\Gamma$ has the form (\ref{eq:1pdm}) with $\Gamma\ge 0$ and $\text{Tr}(\gamma)<\infty$ then for an arbitrary orthonormal basis $\{u_n\}$ for $\h$, there is a unitarily implementable Bogolubov transformation $\V:\h\oplus \h^*\to \h\oplus \h^*$ diagonalizing $\Gamma$ in in the basis $u_1\oplus 0$, $u_2\oplus 0$, ..., $0\oplus Ju_1$, $0\oplus Ju_2$, ..., namely 
\bq
\V^* \Gamma \V = \left( {\begin{array}{*{20}c}
   {\lambda _1 } & {} & {} & {} & {} & {}  \\
   {} & {\lambda _2 } & {} & {} & 0 & {}  \\
   {} & {} &  \ddots  & {} & {} & {}  \\
   {} & {} & {} & {1 + \lambda _1 } & {} & {}  \\
   {} & 0 & {} & {} & {1 + \lambda _2 } & {}  \\
   {} & {} & {} & {} & {} &  \ddots   \\

 \end{array} } \right),\label{eq:diagonalizingGamma}
\eq
\end{theorem}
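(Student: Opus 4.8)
The plan is to treat $\Gamma$ as a positive, trace-class-perturbation of the identity on the lower-right block and to use a two-step diagonalization: first an ordinary unitary change of basis, then a genuine Bogoliubov rotation mixing the $\h$ and $\h^*$ parts. The starting observation is that $\Gamma - \mathrm{diag}(0, 1)$ is trace class (since $\Tr(\gamma)<\infty$ and $\alpha$ is then Hilbert--Schmidt by Lemma~\ref{le:relation-gamma-alpha}), so standard spectral theory applies to $\Gamma$ minus a reference operator. However, because $\S$ (not the identity) is the relevant inner-product-like object under which Bogoliubov transformations are "unitary" ($\V^*\S\V=\S$), I will not diagonalize $\Gamma$ itself but rather the operator $\S\Gamma$, whose eigenvalue equation $\S\Gamma F = \mu F$ is what a Bogoliubov transformation can bring to diagonal form. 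A short computation shows $\S\Gamma$ is self-adjoint with respect to a modified inner product, or equivalently that $\S\Gamma\S$ equals $\Gamma$ up to sign patterns, and one checks that its spectrum consists of pairs $\{-\lambda_n, 1+\lambda_n\}$ with $\lambda_n\ge 0$; the inequality (\ref{eq:relation-gamma-alpha}) from Lemma~\ref{le:relation-gamma-alpha} is exactly what guarantees $\lambda_n\ge 0$ and that no spectrum is "lost" in the gap $(0,1)$ in a way that would obstruct pairing.

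Concretely, the steps I would carry out are: (1) reduce to the case where $\gamma$ is already diagonal in the given basis $\{u_n\}$ by applying the block-diagonal Bogoliubov transformation $\mathrm{diag}(W, JWJ^*)$ for a suitable unitary $W$ on $\h$ — this is manifestly a unitarily implementable Bogoliubov transformation (its $V$-block is zero, so Shale--Stinespring is trivial) and it does not change whether $\Gamma$ is of the form (\ref{eq:1pdm}); (2) analyze the resulting $2\times 2$-block structure mode by mode when $\alpha$ commutes with $\gamma$ — in the generic case each pair $(u_n, Ju_n)$ spans an invariant $2$-dimensional subspace on which $\Gamma$ acts as $\begin{pmatrix} \gamma_n & \alpha_n \\ \bar\alpha_n & 1+\gamma_n\end{pmatrix}$, and one diagonalizes this explicitly by a $2\times 2$ Bogoliubov matrix $\begin{pmatrix} \cosh\theta_n & \sinh\theta_n \\ \sinh\theta_n & \cosh\theta_n\end{pmatrix}$ (after absorbing phases), obtaining eigenvalues $\lambda_n$ and $1+\lambda_n$; (3) assemble these $2\times2$ blocks into a global $\V$ and verify that $\J\V\J=\V$ and $\V^*\S\V=\S$ hold and that $\Tr_\h(VV^*)=\sum_n \sinh^2\theta_n<\infty$, which is where one uses that $\alpha$ is Hilbert--Schmidt and $\gamma$ is trace class so that the $\theta_n$ are summably small. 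Unitary implementability then follows from Theorem~\ref{thm:unitary-implementation}.

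The main obstacle is step (2) in the \emph{non-generic} case: a priori $\alpha$ need not commute with $\gamma$, so the pairs $(u_n, Ju_n)$ are not invariant and the problem does not decouple into $2\times2$ blocks in the given basis. The fix is to first choose the basis $\{u_n\}$ adapted to the problem — but the theorem insists on an \emph{arbitrary} orthonormal basis $\{u_n\}$, so one must be careful. The resolution is that the freedom in step (1) is exactly a unitary $W$ on $\h$, and one shows that $W$ can be chosen so that $W^*\gamma W$ is diagonal \emph{and} $W^*\alpha J W J^*$ is, if not diagonal, at least block-diagonal with blocks pairing each eigenspace of $\gamma$ to itself; within a degenerate eigenspace of $\gamma$ one then uses the constraint $\alpha^* = J\alpha J$ together with a singular-value/Autonne--Takagi type decomposition of the symmetric operator $\alpha$ restricted to that eigenspace to split it further into $2\times2$ and $1\times1$ pieces. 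The key structural input making all of this consistent is again Lemma~\ref{le:relation-gamma-alpha}: the inequality $\gamma \geq \alpha J(1+\gamma)^{-1}J^*\alpha^*$ forces $\ker\gamma \subseteq \ker\alpha^*$, so on the kernel of $\gamma$ the matrix $\Gamma$ is already diagonal ($\lambda_n = 0$ there, $\alpha_n=0$), and on the range of $\gamma$ one has the strict positivity needed to solve for $\theta_n$ with the correct sign. Once the block decomposition is in place, the remaining verifications — the Bogoliubov relations and Shale--Stinespring — are routine.
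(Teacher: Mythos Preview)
Your scheme has a genuine gap at the ``non-generic'' step, and it cannot be repaired within the framework you set up. The claim is that a block-diagonal unitary $\mathrm{diag}(W,JWJ^*)$ can be chosen so that $W^*\gamma W$ is diagonal \emph{and} $\alpha$ becomes block-diagonal along the eigenspaces of $\gamma$. But a block-diagonal Bogoliubov map transforms $\gamma\mapsto W^*\gamma W$ (Hermitian congruence) and $\alpha\mapsto W^*\alpha JWJ^*$, which in a real structure is $A\mapsto W^T A W$ (complex-symmetric congruence). Simultaneous diagonalization of a Hermitian matrix under $W^*(\cdot)W$ and a complex-symmetric matrix under $W^T(\cdot)W$ by one unitary $W$ is \emph{not} possible in general, and nothing in Lemma~\ref{le:relation-gamma-alpha} forces $\alpha$ to respect the eigenspace decomposition of $\gamma$. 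A two-dimensional counterexample: take $\gamma=\mathrm{diag}(1,2)$ and $\alpha=\left(\begin{smallmatrix}0&a\\ a&0\end{smallmatrix}\right)$ with $0<a<\sqrt{3}$; then $\Gamma\ge 0$ (the inequality~(\ref{eq:relation-gamma-alpha}) reduces to $a^2\le 3$), yet since $\gamma$ has simple spectrum the only $W$ keeping $W^*\gamma W$ diagonal are diagonal unitaries times permutations, none of which kill the off-diagonal entry of $\alpha$. So after your step~(1) the problem does \emph{not} decouple into $2\times2$ blocks, and the Autonne--Takagi argument never gets off the ground.

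The paper's route avoids this entirely by not separating $\gamma$ and $\alpha$. It works with $\Gamma_1:=\Gamma+\tfrac12\S$ on $\h\oplus\h^*$ as a single object: Lemma~\ref{le:properties-Gamma1} shows $\Gamma_1$ is positive \emph{definite} (this is where the shift by $\tfrac12\S$ is essential --- your $\S\Gamma$ is only semi-definite and can have kernel), commutes with $\cJ$, and that $\S\Gamma_1$ admits an eigenbasis (proved via the Hermitian auxiliary $C=\Gamma_1^{1/2}\S\Gamma_1^{1/2}$, which is a Hilbert--Schmidt perturbation of $\tfrac12 I$). Then Lemma~\ref{lm:bogoliubov-diag} --- an abstract Williamson-type symplectic diagonalization --- produces the Bogoliubov map $\V$ directly from the eigenvectors of $\S\Gamma_1$, paired by $\cJ$. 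The $\gamma$--$\alpha$ coupling is handled automatically because one never tries to diagonalize them separately. Unitary implementability is then extracted \emph{a posteriori} from the fact that the upper-left block of $\V^*\Gamma\V$ is trace class, via a Cauchy--Schwarz argument. If you want to salvage your mode-by-mode picture, the correct object to diagonalize first is not $\gamma$ but (say) $\gamma(1+\gamma)-\alpha J\alpha J$, which is what emerges from $(\S\Gamma_1)^2$; but at that point you are essentially redoing the paper's argument.
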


\begin{remark}  The finite-dimensional case is Theorem 9.8 in \cite{So07}. See \cite{BLS94} (the proof of Theorem 2.3) for the fermionic analogue.
\end{remark}

To prove Theorem \ref{thm:diagonalizing-Gamma}, we start with a simple diagonalization lemma. This is a generalization to infinity dimensions of Lemma 9.6 in \cite{So07}.

\begin{lemma}
\label{lm:bogoliubov-diag} Let $\A$ be a positive definite operator on $\fh\oplus\fh^*$ such that $\cJ\cA\cJ=\cA$ and $\S \cA$ admits an eigenbasis 
on $\fh\oplus \fh^*$. Then for any orthonormal basis $u_1,u_2,...$ for $\fh$, there exists a Bogoliubov transformation $\cV$ such that the operator $\cV ^*\cA \cV$ has eigenvectors of the form $\{u_n\oplus 0\}\cup \{0\oplus Ju_n\}$.
  \end{lemma}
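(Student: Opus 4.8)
The plan is to reduce the problem to a standard spectral decomposition of the (non-self-adjoint but $\S$-self-adjoint) operator $\S\A$, and then to manufacture a Bogoliubov transformation out of a suitably normalized eigenbasis. First I would record the elementary fact that $\A>0$ together with $\S\A^{-1}\S = (\S\A\S)^{-1}>0$ gives a well-defined positive square root, and that the hypothesis ``$\S\A$ admits an eigenbasis'' means there is a basis $\{F_n\}$ of $\h\oplus\h^*$ with $\S\A F_n = \mu_n F_n$. Since $(\S\A)^* = \A\S = \S(\S\A\S^{-1})\cdot\S$... more usefully: $\S\A$ is self-adjoint with respect to the (indefinite) inner product $(\cdot,\S\cdot)$, equivalently $\A^{1/2}\S\A^{1/2}$ is genuinely self-adjoint on $\h\oplus\h^*$. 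So the cleanest route is to diagonalize the self-adjoint operator $M:=\A^{1/2}\S\A^{1/2}$, whose eigenvalues are real and whose nonzero spectrum splits into a positive part and a negative part; the eigenvectors of $\S\A$ are $\A^{-1/2}$ applied to those of $M$.

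Next I would use the symmetry $\cJ\A\cJ = \A$ (and $\cJ\S = -\S\cJ$, $\cJ^2 = 1$) to see that $\cJ$ intertwines the $+\mu$ and $-\mu$ eigenspaces of $\S\A$: if $\S\A F = \mu F$ then $\S\A(\cJ F) = -\cJ\S\A F = -\mu(\cJ F)$. Hence eigenvectors come in conjugate pairs $(F_n, \cJ F_n)$ with eigenvalues $(\mu_n, -\mu_n)$, $\mu_n>0$ (the indefiniteness of $(\cdot,\S\cdot)$ forces the spectrum of $\S\A$ to be symmetric about $0$ in exactly this way, so there is no zero eigenvalue since $\A>0$). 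Now normalize: for each pair pick $F_n$ with $(F_n,\S F_n) = 1$ — this is possible after scaling because $\A>0$ makes $(F_n,\S F_n)$ nonzero and it can be arranged positive by swapping $F_n\leftrightarrow\cJ F_n$ if needed — and then automatically $(\cJ F_n,\S\cJ F_n) = -(F_n,\S F_n) = -1$ and $(F_n,\S\cJ F_m)=0$ by the eigenvalue separation. Thus $\{F_n\}\cup\{\cJ F_n\}$ is an ``$\S$-orthonormal'' basis adapted to $\cJ$, which is precisely the column data of a Bogoliubov transformation: define $\cV_0$ by $\cV_0(u_n\oplus 0) = F_n$, $\cV_0(0\oplus Ju_n) = \cJ F_n$; then $\cJ\cV_0\cJ = \cV_0$ and $\cV_0^*\S\cV_0 = \S$ follow from the relations just listed, so $\cV_0$ is a Bogoliubov transformation and $\cV_0^*\A\cV_0$ is diagonal in the basis $\{u_n\oplus 0\}\cup\{0\oplus Ju_n\}$ with entries $(F_n,\A F_n)>0$ and $(\cJ F_n,\A\cJ F_n)>0$.

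The last step is cosmetic but needs care: the eigenbasis of $\S\A$ we obtained need not have its $\h$-components equal to the prescribed $u_n$. To fix this, apply a further block-diagonal Bogoliubov transformation $\cV_1 = \mathrm{diag}(W, JWJ^*)$ with $W$ a unitary on $\h$ (every such $\cV_1$ trivially satisfies $\cJ\cV_1\cJ=\cV_1$ and $\cV_1^*\S\cV_1=\S$); choosing $W$ to be the unitary sending the orthonormal ``$\h\oplus 0$-shadow'' of our diagonalizing basis back to $\{u_n\}$, the composition $\cV := \cV_0\cV_1$ (or the appropriate order) does the job. The main obstacle I anticipate is purely infinite-dimensional bookkeeping: justifying that $(\cdot,\S F_n)$ is nonzero and that the normalized $\{F_n\}\cup\{\cJ F_n\}$ is genuinely a (Riesz) basis for which $\cV_0$ extends to a \emph{bounded} isomorphism — in finite dimensions (Lemma 9.6 of \cite{So07}) this is automatic, but here one must invoke boundedness of $\A$ and $\A^{-1}$, equivalently the fact that $\A^{1/2}$ and $\A^{-1/2}$ are bounded, to control the passage between the orthonormal eigenbasis of the self-adjoint $M$ and the $\S$-orthonormal eigenbasis of $\S\A$. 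Note that this lemma does \emph{not} assert unitary implementability; that extra property (needed in Theorem \ref{thm:diagonalizing-Gamma}) comes later from the Shale–Stinespring criterion of Theorem \ref{thm:unitary-implementation} applied to the specific $\A=\Gamma$.
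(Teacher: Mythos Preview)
Your core strategy matches the paper's: organize the given eigenvectors of $\S\A$ into $\cJ$-conjugate pairs normalized so that $(F_n,\S F_n)=1$ and $(\cJ F_n,\S\cJ F_n)=-1$, then define $\cV$ by $\cV(u_n\oplus 0)=F_n$ and $\cV(0\oplus Ju_n)=\cJ F_n$. Two of your steps are superfluous, however. The detour through $M=\A^{1/2}\S\A^{1/2}$ is unnecessary, since an eigenbasis of $\S\A$ is \emph{assumed} and reality of the eigenvalues follows directly from $(v,\A v)=\lambda(v,\S v)$ with $\A>0$ and $\S$ Hermitian (this is exactly how the paper argues). Your ``cosmetic'' final block-unitary step is also unneeded and reflects a confusion: once you set $\cV_0(u_n\oplus 0)=F_n$, the vectors $u_n\oplus 0$ are already eigenvectors of $\cV_0^*\A\cV_0$, as you yourself note one sentence earlier --- the prescribed $u_n$ live in the \emph{domain} of $\cV$, not among the $F_n$.

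There is one genuine gap: your claim that the $F_n$ are mutually $\S$-orthogonal ``by eigenvalue separation'' fails when $\mu_n=\mu_m$ for $n\ne m$. The paper sidesteps this by working inductively rather than globally: it picks one pair $(v_1,\cJ v_1)$, proves that $\S\A$ leaves the $\S$-complement $W=(\S\,\mathrm{Span}\{v_1,\cJ v_1\})^\perp$ invariant, and recurses on $W$; $\S$-orthogonality between successive pairs is then automatic. This inductive route also avoids any use of $\A^{\pm 1/2}$, which matters because in the intended application $\A=\Gamma+\tfrac12\S$ is positive definite but $\A^{-1}$ need not be bounded. Your closing worry about boundedness of the resulting $\cV$ is legitimate and, to be fair, the paper's proof is equally silent on that point.
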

  
\begin{remark} In this result the Bogoliubov transformation $\V$ needs {\it not} be unitarily implementable.
\end{remark}
  
\begin{proof} 1. Let $\{u_i\}$ be an orthonormal basis for $\h$. We shall define the Bogoliubov transformation $\cV$ by  
$$
\cV (u_i\oplus0)=v_{i},~ \cV (0\oplus
    Ju_i)=\widetilde{v}_i,
$$
where $\{v_i\} \cup \{\widetilde{v}_i\}$ is an eigenbasis of $\cS \cA$ such that 

\begin{itemize}

\item[(i)] $(v_i,\cS v_j)=\delta_{ij},  (\widetilde{v}_i,\cS \widetilde{v}_j)=-\delta_{ij}$ 
and $(v_i,\cS \widetilde{v}_j)=0$ for all $i,j=1,2,\ldots$ \label{item:symplecticortho}
    
\item[(ii)] $\cJ v_j =\widetilde{v}_{j}$ for all $j=1,2,\ldots$ \label{item:Jcond}
\end{itemize} 

2. Let $v_1$ be a normalized eigenvector of $\cS \cA$ with eigenvalue $\lambda_1$. Using $\cA v_1 =\lambda_1 \cS v_1$ we find that
$$
 (v_1,\cA v_1)=\lambda_1(v_1,\cS v_1).
$$
Since $\cA$ is positive definite and $\cS$ is Hermitian, both of $\lambda_1$ and $(v_1,Sv_1)$ must be real and non-zero. Therefore, we can normalized $v_1$ in such a way that $(v_1,\cS v_1)\in \{\pm 1\}.$

Defining $\widetilde{v}_{1}=\cJ v_1$ and using $\J \A \J=\A$ we have that 
    $$
    \cS\cA \widetilde{v}_{1}=\cS\cJ\cA v_1=-\cJ\cS \cA v_1=-\cJ \lambda_1 v_1 = -\lambda_1 v_{M+1},
    $$
where we have used that $\lambda_1$ is real and that $\cJ\cS=-\cS\cJ$. Thus $\widetilde{v}_{1}$ is an eigenvector of $\S \A$ with the eigenvalue $\widetilde{\lambda}_1=-\lambda_1$.  
    
    Since $\lambda_1\ne 0$, $\widetilde{\lambda}_1$ and $\lambda_1$ must be different. On the other hand, 
    $$
   \widetilde{\lambda}_1 (v_1,\cS \widetilde{v}_1)=(v_1,\cA
    \widetilde{\lambda}_1)=(\cA v_1,\widetilde{v}_1)=\lambda_{1}(v_1,\cS \widetilde{v}_1).
    $$
Thus $(v_1,\cS \widetilde{v}_1)=0$. Moreover, since 
    $$
    (\widetilde{v}_1,\cS \widetilde{v}_1)=(\cJ v_1,\cS\cJ v_1)=(\cJ v_1,-\cJ\cS v_1)=
    -(\cS v_1,v_1)=-(v_1,\cS v_1),
    $$
    by interchanging $v_1$ and $\widetilde{v}_1$ if necessary we can assume that $(v_1,\cS v_1)=1$ and $(\widetilde{v}_1,\cS \widetilde{v}_1)=-1$. 

3. Let $V=\Span\{v_1,\widetilde{v}_1\}$ and $W=(\S V)^{\bot }=\S (V^{\bot})$. We shall show that  
$$ \fh\oplus \fh^*= V \oplus W.$$

Indeed, if $a \in V \cap W$ then $a
\in V=\Span\{v_1,\widetilde{v}_1\}$ and $(a,Sv)=0$ for all $v
\in V$. Because $(v_1,\cS v_1)=1$, $(\widetilde{v}_1,\cS \widetilde{v}_1)=-1$ and $(v_1,\cS \widetilde{v}_1)=0$, we must have $a=0$. Thus $V \cap W = \{ 0\}$. 

On the other hand, if $a \in (V \oplus W)^{\bot}\subset V^{\bot}\cap W^{\bot}$ then $\cS a\in S (V^{\bot}) \cap \cS (W^{\bot}) =W\cap V=\{0\}$, and hence $a=0$. Therefore,  $(V \oplus W)^{\bot} = \{ 0\}$.

Moreover, since $V$ is finite dimensional and $W$ is closed, the direct sum space  $V \oplus W$ is a closed subspace of $ \fh\oplus \fh^*$. Thus $ \fh\oplus \fh^*= V \oplus W.$

4. We prove that $\cS\cA$ maps $W$ into itself. Indeed, using $V = \cS \cA V$ we have $
W \bot \cS V = \cS(\cS \cA V) = \cA V$. Since $A$ is symmetric, we get $\cA W \bot V$, and hence $\cS \cA W \bot \cS V$. Thus $\cS \cA W \subset (\cS V)^{\bot }=W$.

Because $\cS \cA$ admits an eigenbasis on $\fh\oplus \fh^*= V \oplus W$ and $\cS \cA$ leaves $V$ and $W$ invariant, $\cS \cA$ also admits an eigenbasis on $W$. We then can restrict $\cS \cA$ on $W$ and conclude the desired result by an induction argument. 
\end{proof}

Next, we show that $\Gamma+\frac{1}{2}\S$ satisfies all assumptions on $\cA$ in Lemma \ref{lm:bogoliubov-diag}.   

\begin{lemma} \label{le:properties-Gamma1} Let $\Gamma$ be of the form (\ref{eq:1pdm}) with $\Gamma\ge 0$ and $\text{Tr}(\gamma)<\infty$ and let $\Gamma_1:=\Gamma+\frac{1}{2}\S$. Then $\Gamma_1$ is positive definite on $\fh\oplus\fh^*$; moreover, $\cJ\Gamma_1\cJ=\Gamma_1$ and $\S \Gamma_1$ admits an eigenbasis on $\fh\oplus \fh^*$.
\end{lemma}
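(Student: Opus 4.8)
The plan is to prove the three assertions separately, in the order stated.

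\textbf{The relation $\J\Gamma_1\J=\Gamma_1$.} First I would compute $\J\Gamma\J$ directly in block form, using only $J^*J=1_{\fh}$ and $JJ^*=1_{\fh^*}$; a short calculation gives
\[
\J\Gamma\J=\begin{pmatrix} 1+\gamma & \alpha \\ J\alpha J & J\gamma J^*\end{pmatrix}=\Gamma+\S .
\]
Combined with $\J\S\J=-\S$ (which follows from $\J\S=-\S\J$ and $\J^2=1$), this yields $\J\Gamma_1\J=\J\Gamma\J+\tfrac12\J\S\J=(\Gamma+\S)-\tfrac12\S=\Gamma_1$.

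\textbf{Positive definiteness.} Since $\J$ is anti-unitary and $\Gamma\ge0$, the identity $\J\Gamma\J=\Gamma+\S$ gives $\Gamma+\S\ge0$; averaging with $\Gamma\ge0$ gives $\Gamma_1=\Gamma+\tfrac12\S\ge0$. For strict positivity I would first check that $\Gamma_1$ is injective: if $\Gamma_1x=0$ then $(x,\Gamma x)=-\tfrac12(x,\S x)$, and combining this with $(x,\Gamma x)\ge0$ and $(x,(\Gamma+\S)x)\ge0$ forces $(x,\Gamma x)=0=(x,\S x)$; hence $\Gamma x=0$, so $\S x=2\Gamma_1x=0$, and $x=0$ because $\S^2=1$. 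To upgrade injectivity to a strictly positive lower bound, I would note that
\[
\Gamma_1-\tfrac12=\Gamma+\tfrac12(\S-1)=\begin{pmatrix}\gamma & \alpha \\ J\alpha J & J\gamma J^*\end{pmatrix}
\]
is compact: $\gamma$ (hence $J\gamma J^*$) is trace class by hypothesis, and $\alpha$ is Hilbert--Schmidt because Lemma~\ref{le:relation-gamma-alpha} gives $\alpha J(1+\gamma)^{-1}J^*\alpha^*\ge(1+\|\gamma\|)^{-1}\alpha\alpha^*$ and $\le\gamma$, so $\alpha\alpha^*\le(1+\|\gamma\|)\gamma$ is trace class. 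Thus $\Gamma_1=\tfrac12+K$ with $K$ compact self-adjoint, so $\spec(\Gamma_1)\subset\{\tfrac12\}\cup\{\text{isolated eigenvalues of finite multiplicity}\}$; since $\Gamma_1\ge0$ and $0$ is neither $\tfrac12$ nor an eigenvalue, $0\notin\spec(\Gamma_1)$, i.e.\ $\Gamma_1\ge c>0$.

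\textbf{Existence of an eigenbasis for $\S\Gamma_1$.} Because $\Gamma_1\ge c>0$, the operator $\Gamma_1^{1/2}$ is bounded and boundedly invertible, so $\S\Gamma_1=\Gamma_1^{-1/2}B\,\Gamma_1^{1/2}$ with $B:=\Gamma_1^{1/2}\S\Gamma_1^{1/2}$ bounded and self-adjoint; it then suffices to produce an orthonormal eigenbasis for $B$ and push it through $\Gamma_1^{-1/2}$. From $\Gamma_1-\tfrac12$ compact and continuity of $t\mapsto\sqrt t$ one gets $\Gamma_1^{1/2}-\tfrac{1}{\sqrt2}$ compact, hence $B=\tfrac12\S+(\text{compact self-adjoint})$ and, by Weyl's theorem, $\spec_{\ess}(B)=\spec_{\ess}(\tfrac12\S)=\{\pm\tfrac12\}$, a finite set. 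A bounded self-adjoint operator whose essential spectrum is finite has purely point spectrum --- any absolutely or singularly continuous part of its spectral measure would have to be supported in that finite set, which is impossible --- so $B$ has an orthonormal basis of eigenvectors $\{w_n\}$, and then $\{\Gamma_1^{-1/2}w_n\}$ is a (Riesz) basis of $\fh\oplus\fh^*$ consisting of eigenvectors of $\S\Gamma_1$, as required.

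I expect the third part --- diagonalizing the non-self-adjoint operator $\S\Gamma_1$ --- to be the main obstacle: it rests on the quantitative lower bound $\Gamma_1\ge c>0$ (so that the similarity $\S\Gamma_1=\Gamma_1^{-1/2}B\Gamma_1^{1/2}$ is implemented by bounded operators) and on the compact-perturbation structure $\Gamma_1=\tfrac12+K$ (which kills any continuous spectrum of $B$). By contrast, the first part is a routine block computation and the bare positivity $\Gamma_1\ge0$ drops out of the $\J$-invariance; the real work is in assembling these ingredients into the strict bound and then running the spectral-theory argument.
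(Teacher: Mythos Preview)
Your proof is correct, and the overall architecture---reduce the eigenbasis question for the non-self-adjoint $\S\Gamma_1$ to the self-adjoint operator $B=\Gamma_1^{1/2}\S\Gamma_1^{1/2}$ via a similarity, then exploit a compact-perturbation structure---matches the paper's. The execution differs in a few places, and in each case your version is at least as clean. For injectivity of $\Gamma_1$ the paper argues by contradiction via a $\J$-fixed point and an explicit quadratic-form computation, whereas your direct inequality chase ($(x,\Gamma x)=-\tfrac12(x,\S x)$ combined with $\Gamma\ge0$ and $\Gamma+\S\ge0$) is shorter. You then go further and extract a genuine gap $\Gamma_1\ge c>0$ from the compactness of $\Gamma_1-\tfrac12$; the paper stops at ``positive on nonzero vectors,'' yet later asserts that the injective map $\S\Gamma_1^{1/2}$ sends a basis to a basis---a step that really wants bounded invertibility, which your argument supplies. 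Finally, for the eigenbasis of $B$ the paper squares first, showing $C^2-\tfrac14$ is Hilbert--Schmidt and then descending to $C$ on two-dimensional invariant subspaces; your route via $B-\tfrac12\S$ compact, Weyl's theorem giving $\spec_{\ess}(B)=\{\pm\tfrac12\}$, and the observation that a self-adjoint operator with countable spectrum has no continuous spectral part, reaches the same conclusion with a slightly more abstract but equally valid argument.
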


\begin{proof} 1. It is straightforward to check that $\J \Gamma_1 \J =\Gamma_1$. We now prove that $\Gamma_1$ is positive definite. 

First at all, it follows from $\Gamma\ge 0$ that 
\[
\left\langle {f \oplus Jg,(\Gamma  + \S)f \oplus Jg} \right\rangle  = \left\langle {g \oplus Jf,\Gamma (g \oplus Jf)} \right\rangle \ge 0,
\]
namely  $\Gamma+\S\ge 0$. Thus 
$$\Gamma_1=\Gamma+\frac{1}{2}\S=\frac{1}{2}[\Gamma+(\Gamma+\S)]\ge 0.$$

Next, we check that $\Gamma_1$ is injective. Assume that there exists 
$\varphi\in \Ker(\Gamma_1)\minus \{0\}$. Then since $\J$ and $\Gamma_1$ commute, we have $\J\varphi\in \Ker(\Gamma_1)\minus \{0\}$. Because $\J$ leaves the subspace $\Span \{\varphi,\J \varphi\}\subset \Ker(\Gamma_1)$ invariant, $\J$ must have a non-trivial fixed point in this subspace. Thus there exists $f\in \h \backslash\{0\}$ such that $\Gamma_1(f\oplus Jf)=0$. Using this equation we find that
\bqq
\left\langle {f \oplus J(tf),\Gamma (f \oplus J(tf))} \right\rangle
   &=& (f,\gamma f) + t^2 (f,(1 + \gamma )f) - t\left( {f,(2\gamma  + 1)f} \right) \hfill \\
   &=& (t - 1)^2 (f,\gamma f) + (t^2  - t)\left\| f \right\|^2<0
\eqq
for some $t<1$ and near $1$ sufficiently. However, it is contrary to $\Gamma\ge 0$. Thus $\Gamma_1$ must be injective.

To see that $\Gamma_1$ is positive definite we can introduce $\Gamma_1^{1/2}$, the unique positive semi-definite square root $\Gamma_1^{1/2}$ on $\h\oplus \h^*$. Since $\Gamma_1$ is injective, $\Gamma_1^{1/2}$ is also injective, and hence 
$$(\varphi, \Gamma_1 \varphi)=||\Gamma_1^{1/2}v||^2>0~~\text{for all}~\varphi\ne 0.$$

2. We show that $\S \Gamma_1$ has an eigenbasis on $\h\oplus \h^*$. Although $\S\Gamma_1$ is not a Hermitian, we may associate it with the Hermitian $C=\Gamma_1^{1/2}\S\Gamma_1^{1/2}$. 

We can see that $C$ has an orthonormal eigenbasis for $\h\oplus \h^*$. Indeed, it is straightforward to see that 
\bqq
C^2  &=& \Gamma _1^{1/2} (S\Gamma S)\Gamma _1^{1/2}  = \Gamma _1^{1/2} \left[ {\left( {\begin{array}{*{20}c}
   \gamma  & { - \alpha }  \\
   { - \alpha ^* } & {J\gamma J^* }  \\

 \end{array} } \right) + \frac{1}
{2}\operatorname{I} } \right]\Gamma _1^{1/2}  \hfill \\
   &=& \Gamma _1^{1/2} \left( {\begin{array}{*{20}c}
   \gamma  & { - \alpha }  \\
   { - \alpha ^* } & {J\gamma J^* }  \\

 \end{array} } \right)\Gamma _1^{1/2}  + \frac{1}
{2}\Gamma _1  \hfill \\
   &=& \Gamma _1^{1/2} \left( {\begin{array}{*{20}c}
   \gamma  & { - \alpha }  \\
   { - \alpha ^* } & {J\gamma J^* }  \\

 \end{array} } \right)\Gamma _1^{1/2}  + \frac{1}
{2}\left( {\begin{array}{*{20}c}
   \gamma  & \alpha   \\
   {\alpha ^* } & {J\gamma J^* }  \\

 \end{array} } \right) + \frac{1}
{4}\operatorname{I} .
\eqq
Because $\gamma$ is trace class, $\alpha \alpha^*$ is also trace class due to inequality (\ref{eq:relation-gamma-alpha}). Thus $(C^2-\frac{1}{4}\operatorname{I})$ is a self-adjoint Hilbert-Schmidt operator, and hence it has an orthonormal eigenbasis on $\h\oplus \h^*$. Therefore, $C^2$ has an orthonormal eigenbasis. Note that if $\varphi$ is an eigenvector of $C^2$ then $C\varphi$ is also an eigenvector of $C^2$ with the same eigenvalue. Because $C$ maps the subspace $\Span \{\varphi,C\varphi \}$ into itself, we can diagonalize to obtain an orthonormal eigenbasis of $C$ on this subspace. By induction, we get an orthonormal eigenbasis of $C$ on $\h\oplus \h^*$. 

Now note that if $\varphi$ is an eigenvector of $C$ then $\S\Gamma_1^{1/2}\varphi$ is an eigenvector of $\S\Gamma_1$ with the same eigenvalue since 
\[
S\Gamma _1 (S\Gamma _1^{1/2} \varphi ) = S\Gamma _1^{1/2} (\Gamma _1^{1/2} S\Gamma _1^{1/2} )\varphi  = S\Gamma _1^{1/2} (C \varphi).
\]
Moreover, because both of $S$ and $\Gamma_1^{1/2}$ are injective, $\S\Gamma_1^{1/2}$ maps a basis on $\h\oplus \h^*$ to another basis. In particular, $\S\Gamma_1^{1/2}$ maps an eigenbasis of $C$ to an eigenbasis of $\S\Gamma_1$. 
\end{proof}

Now we can prove Theorem \ref{thm:diagonalizing-Gamma} similarly to Theorem 2.3 in \cite{BLS94}). 

\begin{proof}[Proof of Theorem \ref{thm:diagonalizing-Gamma}] 1. Apply Lemma \ref{lm:bogoliubov-diag} with $\A=\Gamma_1:=\Gamma+\frac{1}{2}\S$, we can find a Bogoliubov transformation $\V$ on $\h \oplus \h^*$ such that, with respect to the orthonormal basis $\{u_n\oplus 0\}\cup \{0\oplus Ju_n\}$, 
\bqq \label{eq:diagonalizingGamma1}
\V^* \Gamma _1 \V = \left( {\begin{array}{*{20}c}
   {\lambda _1  + \frac{1}
{2}} & {} & {} & {} & {} & {}  \\
   {} & {\lambda _2  + \frac{1}
{2}} & {} & {} & 0 & {}  \\
   {} & {} &  \ddots  & {} & {} & {}  \\
   {} & {} & {} & {\lambda _1  + \frac{1}
{2}} & {} & {}  \\
   {} & 0 & {} & {} & {\lambda _2  +\frac{1}
{2}} & {}  \\
   {} & {} & {} & {} & {} &  \ddots   \\

 \end{array} } \right),
\eqq
which is equivalent to (\ref{eq:diagonalizingGamma}). 

We claim that in (\ref{eq:diagonalizingGamma}) we must have $\lambda_n\ge 0$ and $\sum_n \lambda_n<\infty$. It follows from (\ref{eq:diagonalizingGamma}) and $\V^* \Gamma\V\ge 0$ that $\lambda_n\ge 0$. In order to prove the boundedness $\sum_n \lambda_n<\infty$ we note that 
\[
\Gamma S(\Gamma  + S) = \left( {\begin{array}{*{20}c}
   {\gamma (\gamma  + 1) - \alpha \alpha ^* } & {\gamma \alpha  - \alpha J\gamma J^* }  \\
   {\alpha ^* \gamma  - J\gamma J^* \alpha ^* } & {\alpha ^* \alpha  - J\gamma (\gamma  + 1)J^* }  \\

 \end{array} } \right)
\]
is a self-adjoint trace class operator. Using the diagonal form  
\bqq \label{GammaSGamma+Gamma}
  &~&\V^* \Gamma \S(\Gamma  + \S)\V= \left[ {\V^* \Gamma\V} \right]\S\left[ {\V^* (\Gamma  + \S)\V} \right]\nn \hfill \\
   &=& \left( {\begin{array}{*{20}c}
   {\lambda _1 (\lambda _1  + 1)} & {} & {} & {} & {} & {}  \\
   {} & {\lambda _2 (\lambda _1  + 1)} & {} & {} & 0 & {}  \\
   {} & {} &  \ddots  & {} & {} & {}  \\
   {} & {} & {} & { - \lambda _1 (\lambda _1  + 1)} & {} & {}  \\
   {} & 0 & {} & {} & { - \lambda _2 (\lambda _1  + 1)} & {}  \\
   {} & {} & {} & {} & {} &  \ddots   \\

 \end{array} } \right)
\eqq
we conclude that $\sum_n \lambda_n(\lambda_n+1)<\infty$, which is equivalent to $\sum_n \lambda_n<\infty$.

2. Finally we show that the Bogoliubov transformation $\V$ constructed above is unitarily implementable. Assume $\V$ has the form (\ref{eq:Bogoliubovmap}). Then by Theorem \ref{thm:unitary-implementation}, it suffices to prove that $VV^*$ is a trace class operator on $\h$. 

It follows from the representation (\ref{eq:diagonalizingGamma}) that the upper left block of $\V^* \Gamma \V$ is a positive semi-definite trace class operator on $\h$. By direct computation, we can see that the upper left block of
\[
\V^* \Gamma \V = \left( {\begin{array}{*{20}c}
   {U^* } & {J^* V^* J^* }  \\
   {V^* } & {JU^* J^* }  \\

 \end{array} } \right)\left( {\begin{array}{*{20}c}
   \gamma  & \alpha   \\
   {\alpha ^* } & {1 + J\gamma J^* }  \\

 \end{array} } \right)\left( {\begin{array}{*{20}c}
   U & V  \\
   {JVJ} & {JUJ^* }  \\

 \end{array} } \right),
\]
is
$$U^*\gamma U+J^*V^*J^*U+U^*\alpha JVJ+J^*V^*VJ+J^*V^*\gamma VJ.$$
Because $\gamma$ is trace class, we have $U^*\gamma U$ and $J^*V^*\gamma VJ$ are trace class. Thus $J^*V^*J^*U+U^*\alpha JVJ+J^*V^*VJ$ is trace class. 

Moreover, using the Cauchy-Schwarz inequality 
$$|\operatorname{Tr}(XY+Y^*X^*)|\le 2(\operatorname{Tr} (XX^*))^{1/2}(\operatorname{Tr}(YY^*))^{1/2}$$
we find that 
\bqq
  \infty  &>& \operatorname{Tr} \left[ {U^* \alpha JVJ + J^* V^* J^* \alpha ^* U + J^* V^* VJ} \right] \hfill \\
   &=& \operatorname{Tr} \left[ {(U^* \alpha J)(VJ)+(VJ )^*(U^* \alpha J)^*} \right] + \operatorname{Tr} (VV^* ) \hfill \\
   &\geqslant&  - 2(\operatorname{Tr} (U^* \alpha \alpha ^* U^* ))^{1/2} (\operatorname{Tr} (VV^* ))^{1/2}  + \operatorname{Tr} (VV^*).
\eqq
Note that $\operatorname{Tr} (U^* \alpha \alpha ^* U^* )<\infty$ because $\alpha \alpha^*$ is trace class. Thus $\operatorname{Tr} (VV^*)<\infty$. 
\end{proof}

\subsection{Quasi-free states and quadratic Hamiltonians}

\begin{definition}[Quasi-free states] A {\it quasi-free state} $\rho$ is a state satisfying Wick's Theorem, namely 
\bq\label{eq:Wick-odd}
\rho[A(F_1)...A (F_{2m-1})]=0~~\text{for all}~m\ge 1
\eq
and 
\bq \label{eq:Wick-even}
&~&\rho[A (F_1)...A (F_{2m})]=\sum_{\sigma \in P_{2m}}  \rho[A (F_{\sigma(1)}) A (F_{\sigma(2)})]...\rho[A (F_{\sigma(2m-1)}) A (F_{\sigma(2m)})]
\eq
where $P_{2m}$ is the set of pairings
    \begin{eqnarray*}
      P_{2m}=\{\sigma\in S_{2m}\ |\ \sigma(2j-1)<\sigma(2j+1),
      j=1,\ldots,m-1,\quad &&\\
      \sigma(2j-1)< \sigma(2j),\ j=1,\ldots,m\}.&&
    \end{eqnarray*}
\end{definition}    
A crucial point is that we have {\it one-to-one correspondence} between the set of quasi-free states with finite particle numbers and the set of 1-pdm's. If a quasi-free state is a pure state, namely a one-dimensional projection on the Fock space, we call it a {\it quasi-free pure state}. 

\begin{theorem}[Quasi-free states and quasi-free pure states]\label{thm:quasi-free-state} \text{}\begin{itemize}
\item [(i)] Any operator $\Gamma:\h\oplus \h^*\to\h\oplus \h^*$ of the form (\ref{eq:1pdm}) satisfying $\Gamma\ge 0$ and ${\rm Tr}(\gamma)<\infty$ is the 1-pdm of a quasi-free state with finite particle number expectation.

\item[(ii)] A pure state $\left| \Psi  \right\rangle \left\langle \Psi  \right|$ with finite particle number expectation is a quasi-free state if and only if $\Psi= \U _ \V \left| 0 \right\rangle $ for some Bogoliubov unitary mapping $\U _ \V$ as in (\ref{eq:Bogoliubov-unitary-mapping}).

Moreover, any operator $\Gamma:\h\oplus \h^*\to\h\oplus \h^*$ of the form (\ref{eq:1pdm}) satisfying $\Gamma\ge 0$ and ${\rm Tr}(\gamma)<\infty$ is the 1-pdm of a quasi-free pure state if and only if $\Gamma\S\Gamma =-\Gamma$. 
\end{itemize}
\end{theorem}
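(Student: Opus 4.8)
The plan is to reduce both parts to the diagonalization Theorem~\ref{thm:diagonalizing-Gamma}. For part~(i), given $\Gamma$ as in the statement, I would first invoke Theorem~\ref{thm:diagonalizing-Gamma} to get a unitarily implementable Bogoliubov transformation $\V$ and an orthonormal basis $\{u_n\}$ for $\h$ bringing $\V^*\Gamma\V$ to the diagonal form (\ref{eq:diagonalizingGamma}) with $\lambda_n\ge 0$ and $\sum_n\lambda_n<\infty$. For this diagonal 1-pdm I would write down an explicit quasi-free state: with $t_n:=\lambda_n/(1+\lambda_n)\in[0,1)$ (so $\sum_n t_n<\infty$ and $\prod_n(1-t_n)>0$), let $\rho_0$ be the density operator diagonal in the occupation-number basis of $\{u_n\}$, $\rho_0=\prod_k(1-t_k)\sum_{\vec n}\big(\prod_k t_k^{\,n_k}\big)\,|\vec n\rangle\langle\vec n|$ (equivalently, the Gibbs state of $\sum_k(-\log t_k)\,a^*(u_k)a(u_k)$). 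One checks directly that $\rho_0$ is a state with $\rho_0(\N)=\sum_k\lambda_k<\infty$, that its 1-pdm is exactly (\ref{eq:diagonalizingGamma}) ($\alpha=0$, $\gamma=\mathrm{diag}(\lambda_k)$), and that it is quasi-free: it factorizes over modes, on each mode it is a thermal oscillator state with $\rho_0(a(u_k)^2)=\rho_0(a^*(u_k)^2)=0$, and Wick's theorem for such product Gibbs states is standard. Finally I transport it back: the state $\rho$ with density operator $\U_\V P_0\U_\V^*$ satisfies $\rho(A(G_1)\cdots A(G_m))=\rho_0(A(\V^{-1}G_1)\cdots A(\V^{-1}G_m))$ by (\ref{eq:Bogoliubov-unitary-mapping}), hence inherits Wick's theorem from $\rho_0$ (the linear bijection $G\mapsto\V^{-1}G$ preserves pairings), has $\rho(\N)<\infty$, and has 1-pdm $(\V^{-1})^*(\V^*\Gamma\V)\V^{-1}=\Gamma$.

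For the characterization of pure quasi-free states in part~(ii), I would start from the observation that the vacuum state is quasi-free — its odd correlations vanish and its even ones satisfy Wick by iterating the CCR $[A(F_1),A^*(F_2)]=(F_1,\S F_2)$ — so by the same transport argument as above, $\U_\V|0\rangle$ is a quasi-free (pure) state for every unitarily implementable $\V$, and it has finite particle number since $\Tr_\h(VV^*)<\infty$ by Theorem~\ref{thm:unitary-implementation}. Conversely, let $|\Psi\rangle\langle\Psi|$ be a quasi-free pure state of finite particle number with 1-pdm $\Gamma$; pick $\V$ as in Theorem~\ref{thm:diagonalizing-Gamma}, so that $\Phi:=\U_\V^*\Psi$ is again a quasi-free pure state of finite particle number, now with \emph{diagonal} 1-pdm, entries $\lambda_n$ and $1+\lambda_n$. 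The key claim is that purity forces all $\lambda_n=0$ (discussed below). Granting it, the $\gamma$-block of $\Phi$'s 1-pdm is $0$, so $\langle\Phi,\N\Phi\rangle=\Tr\gamma=0$, whence $\N\Phi=0$, i.e.\ $\Phi=c\,|0\rangle$ with $|c|=1$, and therefore $\Psi=c\,\U_\V|0\rangle$.

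For the algebraic condition in part~(ii), I would note that the 1-pdm of the vacuum is $\Gamma^{\mathrm{vac}}=\tfrac12(\mathrm I-\S)$, which manifestly satisfies $\Gamma^{\mathrm{vac}}\S\Gamma^{\mathrm{vac}}=-\Gamma^{\mathrm{vac}}$, and that $\V^*\S\V=\S$ gives $\V^{-1}=\S\V^*\S$ and hence $\V^{-1}\S(\V^{-1})^*=\S$, so the relation $\Gamma\S\Gamma=-\Gamma$ is preserved under $\Gamma\mapsto(\V^{-1})^*\Gamma\V^{-1}$. Combined with (\ref{eq:diagonalizingGamma}), this shows that a $\Gamma$ as in the theorem satisfies $\Gamma\S\Gamma=-\Gamma$ if and only if its diagonal form does, if and only if $\lambda_n(\lambda_n+1)=0$ for all $n$, if and only if all $\lambda_n=0$, if and only if $\Gamma=(\V^{-1})^*\Gamma^{\mathrm{vac}}\V^{-1}$ is the 1-pdm of $\U_\V|0\rangle$ — which by the previous paragraph is exactly the set of 1-pdm's of quasi-free pure states.

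The one genuinely delicate point is the claim used above: a \emph{pure} quasi-free state with diagonal 1-pdm $\mathrm{diag}(\lambda_n;1+\lambda_n)$ has all $\lambda_n=0$. One way is to prove the uniqueness half of the quasi-free$\,\leftrightarrow\,$1-pdm correspondence: a state of finite particle number is normal, hence determined by its values on a total family of bounded operators (say the Weyl operators), and for a quasi-free state these values depend only on the 1-pdm; the pure state in question must then coincide with the explicit product Gibbs state $\rho_0$ of part~(i), which is a rank-one projection only if every single-mode factor is, i.e.\ only if all $\lambda_n=0$. Alternatively one argues directly that purity forces $\Gamma\S\Gamma=-\Gamma$: expand a four-point function $\rho(A(G_1)A(G_2)A(G_3)A(G_4))$, insert $\mathrm I=|\Psi\rangle\langle\Psi|+(\mathrm I-|\Psi\rangle\langle\Psi|)$ between the second and third factors, compare the outcome with Wick's theorem, and read the resulting identity off in the diagonalizing basis to get $\lambda_n(\lambda_n+1)=0$. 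I expect this to be the only step requiring real care; the remainder is just Theorem~\ref{thm:diagonalizing-Gamma} plus bookkeeping of how the implementing unitaries move states and 1-pdm's.
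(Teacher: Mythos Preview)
Your proposal is correct and follows essentially the same route as the paper: diagonalize $\Gamma$ via Theorem~\ref{thm:diagonalizing-Gamma}, realize the diagonal 1-pdm by an explicit product/Gibbs state on the Fock space, and transport back with $\U_\V$; for part~(ii), push $\Psi$ to $\U_\V^*\Psi$, whose 1-pdm is diagonal, and argue that purity forces all $\lambda_n=0$, then read off $\Gamma\S\Gamma=-\Gamma$ in the diagonal form. Your Gibbs state $\rho_0$ with weights $t_n=\lambda_n/(1+\lambda_n)$ is exactly the paper's $G=\Pi_0\exp[-\sum_{i\in I}e_i a_i^*a_i]$ with $e^{-e_i}=t_i$, and your transport identities for 1-pdm's match the paper's.

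The one place where you are actually \emph{more} careful than the paper is the step you flag as delicate. The paper simply invokes the ``one-to-one correspondence'' between quasi-free states of finite particle number and their 1-pdm's (stated just before the theorem) to identify the given pure quasi-free state with the constructed density $\Tr[G]^{-1}\U_\V G\U_\V^*$, and then observes that the only rank-one $G$ is the vacuum projection. You correctly isolate this uniqueness as the nontrivial ingredient and sketch two ways to get it (determination of a normal quasi-free state by its values on Weyl operators, or a direct four-point computation inserting $|\Psi\rangle\langle\Psi|$). Either works; the Weyl-operator argument is cleanest and is the standard way to justify the correspondence the paper asserts. So your plan is sound, and in this respect slightly tighter than the paper's own write-up.
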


\begin{remark} The characterization of quasi-free pure states were already proved in \cite{So07} (with a different proof). For the fermionic analogues see \cite{BLS94} (Theorem 2.3 and Theorem 2.6). 
\end{remark}

\begin{proof} {\bf (i)} Note that the set of quasi-free states is {\it invariant} under Bogoliubov unitary mappings. Indeed, if the Bogoliubov transformation $\V$ is implemented by the unitary mapping $\U_\V:\F\to \F$ as in (\ref{eq:Bogoliubov-unitary-mapping}) and $\Gamma$ is the 1-pdm of a quasi-free state $\rho$ then $\V^*\Gamma\V$ is the 1-pdm of the quasi-free state $\rho_{\V^*\Gamma\V}$ defined by
$$\rho_{\V^*\Gamma\V}(B):=\rho(\U_\V B \U_\V^*)~~\text{for all}~B\in \B(\F).$$

Therefore, due to the diagonalization result  in Theorem \ref{thm:diagonalizing-Gamma}, it remains to show that any operator of the form
\[
\Gamma  = \left( {\begin{array}{*{20}c}
   \xi  & 0  \\
   0 & {1 + J\xi J^* }  \\

 \end{array} } \right),
\]
where $\xi$ is a positive semi-definite trace class operator on $\h$, is indeed the 1-pdm of some quasi-free state. 

Because $\xi$ is trace class, it admits an orthogonal eigenbasis $\{u_i\}_{i=1}^\infty$ for $\h$ corresponding 
 to eigenvalues $\{\lambda_i\}_{i=1}^\infty$. Let $I=\{i\in \mathbb{N}|\lambda_i>0\}$. Then we may choose 
 $e_i\in (0,\infty)$ such that
\begin{eqnarray}
(1-\exp (-e_i))^{-1}=1+\lambda_i, \label{eq:e_i-lambda_i}~~i\in I.
\end{eqnarray}
Denote $a_i=a(u_i)$ for short. Let 
\bq \label{eq:quasi-free-state-form-G}
G= \Pi_0 \exp\left[ {-\sum_{i \in I} e_i a_i ^*a_i} \right]
\eq
where $\Pi_0$ is the orthogonal projection onto the subspace ${\rm Ker}[\sum_{i\notin I} a_i^*a_i]$. Similarly to the fermionic case (see Theorem 2.3 in \cite{BLS94}), it is straightforward to check that $\Gamma$ is the 1-pdm of the state $\rho=\Tr[G]^{-1}G$ and that $\rho$ is a quasi-free state. For the reader's convenience we provide this part of the proof in the Appendix. \\

{\bf (ii)} If $\Psi= \U _ \V \left| 0 \right\rangle$ for some  Bogoliubov unitary mapping $\U _ \V$ then using $\U_\V^*=\U_{\V^{-1}}$ and (\ref{eq:Bogoliubov-unitary-mapping}) we have $A(\V^{-1} F) = \U_\V^* A(F)\U_\V$. Therefore, 
\bqq
  \left\langle \Psi  \right|A(F_1 )A(F_2 )...A(F_n )\left| \Psi  \right\rangle & =& \left\langle 0 \right|\U_\V^* A(F_1 )\U_\V \U_\V A(F_2 )\U_\V ...\U_\V^* A(F_n )\U_\V \left| 0 \right\rangle  \hfill \\
   &=& \left\langle 0 \right|A(\V^{-1} F_1 )A(\V^{-1} F_2 )...A(\V^{-1} F_n )\left| 0 \right\rangle . 
 \eqq
 It is then obvious that the state $\left| \Psi  \right\rangle \left\langle \Psi  \right|$ satisfies equations (\ref{eq:Wick-odd})-(\ref{eq:Wick-even}) in Wick's Theorem, and hence it is a quasi-free state. 

Reversely, suppose that the pure state $\left| \Psi  \right\rangle \left\langle \Psi  \right|$ is a quasi-free state with finite particle number expectation. Then by the first statement in Theorem \ref{thm:quasi-free-state}, 
$$\left| \Psi  \right\rangle \left\langle \Psi  \right| = \Tr[G]^{-1}\U_\V G \U_\V^* $$
for some Bogoliubov unitary mapping $\U_\V$ and for some $G$ given by (\ref{eq:quasi-free-state-form-G}). On the other hand, the only rank-one operator $G$ of the form (\ref{eq:quasi-free-state-form-G}) is the vacuum projection $
\left| 0 \right\rangle \left\langle 0  \right|$ (namely $\xi=0$). Thus, up to a complex phase, $\Psi$ is equal to $\U_\V \left| 0 \right\rangle$.

Now we consider the 1-dpm's of quasi-free pure states.  Suppose that $\Psi$ is a quasi-free pure state with finite particle number expectation and its 1-dpm is $\Gamma$. Due to Theorem~\ref{thm:quasi-free-state}, there is a unitarily implementable Bogoliubov transformation $\V$ such that
    $$
    \cV^*\Gamma \cV = \left(\begin{array}{cc}0&0\\0&1
    \end{array}\right).
    $$
The identity $\Gamma\S\Gamma =-\Gamma$ follows from
   \bqq \V^*\Gamma\S\Gamma\V&=& (\V^*\Gamma\V)(\V^*\S\V)^{-1}(\V^*\Gamma\V)\\
    &=& \left(\begin{array}{cc}0&0\\0&1
    \end{array}\right) \S^{-1} \left(\begin{array}{cc}0&0\\0&1
    \end{array}\right) \\&=& - \left(\begin{array}{cc}0&0\\0&1
    \end{array}\right)=- \cV^*\Gamma \cV.
    \eqq

Reversely, let $\Gamma:\h\oplus \h^*\to\h\oplus \h^*$ be of the form (\ref{eq:1pdm}) such that $\Gamma\ge 0$, $\Tr(\gamma)<\infty$ and $\Gamma\S\Gamma =-\Gamma$. Then by Theorem~\ref{thm:quasi-free-state}, $\Gamma$ is the 1-dpm of a quasi-free state and there is a unitarily implementable Bogoliubov transformation $\V$ such that
$$
    \cV^*\Gamma \cV = \left(\begin{array}{cc}\xi&0\\0&1+J \xi J^*
    \end{array}\right)
 $$
for some positive semi-definite trace class operator $\xi$ on $\h$. The identity $\Gamma\S\Gamma =-\Gamma$ implies that 
$$ \left(\begin{array}{cc}\xi&0\\0&1+J \xi J^*
    \end{array}\right)\S \left(\begin{array}{cc}\xi&0\\0&1+J \xi J^*
    \end{array}\right)=-\left(\begin{array}{cc}\xi&0\\0&1+J \xi J^*
    \end{array}\right).$$
The only solution to this equation is $\xi=0$. Therefore, $\Gamma$ is the 1-dpm of a quasi-free pure state with finite particle number expectation. 
\end{proof}

One of the main motivation of considering the quasi-free pure states is that they minimize the quadratic Hamiltonians. 

\begin{definition}[Quadratic Hamiltonian] Let $\cA$ be a positive semi-definite operator on $\fh\oplus\fh^*$ and $\J\A\J=\A$.  The operator 
  $$
  H_\A=\sum_{i,j=1} (F_i,\cA F_j) A^*(F_i)A(F_j),
  $$
acting on $\F$ is called a {\it quadratic Hamiltonian} corresponding to $\cA$. Here $\{F_i\}_{i\ge 1}$ is an orthonormal basis for $\fh\oplus\fh^*$ (the sum is independent of the choice of $\{F_i\}_{i\ge 1}$).
\end{definition}

\begin{remark} \begin{itemize} \item[(i)] Alternatively, we can describe $H_\A$ by 
$$ (\Psi,H_\A \Psi)=\Tr[\A\Gamma_\Psi]~~\text{for all normalized vector}~\Psi\in \F,$$
where $\Gamma_\Psi$ is the 1-pdm of the pure state $\left| { \Psi } \right\rangle \left\langle { \Psi } \right|$.

\item[(ii)] The condition $\J\A\J=\A$ is just a conventional assumption since if this condition does not holds then we can consider $A'=\frac{1}{2}(\A+\J\A\J)$ which satisfies that $\J\A'\J=\A'$ and, formally, 
$$ H_{\A'}=H_{\A}+\frac{1}{2}\Tr[\A\S].$$
This formal formula makes sense when, for example, $\A$ is trace class.

\item[(ii)] As we shall see below, that $A\ge 0$ is the necessary and sufficient condition such that $H_{\A}$ is bounded from below. Moreover, in this case $H_{\A}\ge 0$.
\end{itemize}
\end{remark}

We are interested in the ground state energy of $H_{\A}$,  
\bq \label{eq:variational-quadratic-Hamiltonian} 
E(H_\A)  &:=& \inf  \{    \rho(H_\A) | \rho~\text{\rm is a state with}~\rho(\N)<\infty\} 
\eq

\begin{theorem}[Minimizing quadratic Hamiltonians]\label{thm:var-quad} Let $\A$, $H_\A$ and $E(H_\A)$ as above.
\begin{itemize}
\item[(i)] We have $E(H_\A)  = \inf  \{    \rho(H_\A) | \rho~\text{\rm is a quasi-free pure state}\} .$

\item[(ii)] If there is a unitarily implementable Bogoliubov transformation $\V_\A$ such that $\V_\A ^* \A\V_\A$ is diagonal then there is a quasi-free pure state $\rho_0$ such that $\rho_0(H_\A)=E(H_\A)$. Moreover, if $\A$ is positive definite then $\rho_0$ is unique.
\item[(iii)] If the variational problem (\ref{eq:variational-quadratic-Hamiltonian}) has a minimizer then $\A$ is diagonalized by a unitarily implementable Bogoliubov transformation $\V_\A$. Moreover, if $\Gamma$ is the 1-pdm of the minimizer then we have  
$$ \A\Gamma = -\A\S 1_{(-\infty,0)}[\A \S].$$
In particular, $\A \Gamma \S= \S \Gamma \A \le 0$. 
\end{itemize}
\end{theorem}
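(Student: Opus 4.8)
The three parts of Theorem~\ref{thm:var-quad} should be proved in order, each leveraging the diagonalization machinery already developed. For part (i), the inequality $E(H_\A) \le \inf\{\rho(H_\A)\mid \rho~\text{quasi-free pure}\}$ is trivial. For the reverse inequality, take any state $\rho$ with $\rho(\N)<\infty$ and let $\Gamma$ be its 1-pdm; by the remark, $\rho(H_\A)=\Tr[\A\Gamma]$. Using Theorem~\ref{thm:diagonalizing-Gamma} I would diagonalize $\Gamma$ by a unitarily implementable Bogoliubov transformation $\V$, giving $\V^*\Gamma\V = \mathrm{diag}(\lambda_n, 1+\lambda_n)$ with $\lambda_n \ge 0$ and $\sum_n \lambda_n < \infty$. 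Then $\Tr[\A\Gamma] = \Tr[(\V^{-1})^*\A\V^{-1}\cdot\V^*\Gamma\V]$, and since $\A' := (\V^{-1})^*\A\V^{-1}$ is still positive semi-definite with $\J\A'\J=\A'$, writing out the trace against the diagonal form shows $\Tr[\A'\,\mathrm{diag}(\lambda_n,1+\lambda_n)]$ is minimized over admissible $(\lambda_n)_{n}$ by taking all $\lambda_n=0$ (each coefficient $(F_n,\A' F_n)+(JF_n, \A' JF_n)$ multiplying $\lambda_n$ is nonnegative). The resulting $\Gamma_0$ with $\V^*\Gamma_0\V = \mathrm{diag}(0,1)$ is the 1-pdm of a quasi-free pure state by Theorem~\ref{thm:quasi-free-state}(ii), and $\Tr[\A\Gamma_0]\le\Tr[\A\Gamma]$.

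\medskip
\textbf{Part (ii).} Suppose $\V_\A$ unitarily implementable diagonalizes $\A$, so $\V_\A^*\A\V_\A = \mathrm{diag}(\mu_n, \nu_n)$ with $\mu_n,\nu_n\ge 0$. I would take $\rho_0$ to be the quasi-free pure state whose 1-pdm $\Gamma_0$ satisfies $(\V_\A^{-1})^*\Gamma_0\V_\A^{-1} = \mathrm{diag}(0,1)$ — equivalently the vacuum in the transformed representation, which exists by Theorem~\ref{thm:quasi-free-state}(ii). Then $\Tr[\A\Gamma_0] = \sum_n \nu_n$, and by part (i) combined with the computation above (every quasi-free pure state has a 1-pdm of the form $\V(\ldots)\V^*$ and the trace against $\A$ picks up $\sum_n (\text{lower diagonal of }\V_\A^*\A\V_\A\text{-conjugate})\ge \sum_n\nu_n$), this is the minimum. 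For uniqueness when $\A$ is positive definite: if $\rho_1$ is another minimizer with 1-pdm $\Gamma_1$, apply part (i)'s argument to $\Gamma_1$ to see that equality in $\Tr[\A\Gamma_1]\ge\Tr[\A\Gamma_0]$ forces all the $\lambda_n$ (in the $\Gamma_1$-diagonalizing basis, measured against the strictly positive coefficients coming from $\A>0$) to vanish, hence $\Gamma_1$ is a quasi-free pure-state 1-pdm with $\Gamma_1\S\Gamma_1=-\Gamma_1$, and one checks the minimizing such $\Gamma$ is unique.

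\medskip
\textbf{Part (iii).} Let $\rho$ be a minimizer with 1-pdm $\Gamma$. The key idea is a first-order variational (perturbation) argument: perturb $\Gamma$ within the convex set of admissible 1-pdm's and use minimality to deduce the Euler--Lagrange relation. Concretely, for any other admissible 1-pdm $\Gamma'$ the segment $(1-t)\Gamma + t\Gamma'$ is admissible for $t\in[0,1]$, so $\Tr[\A(\Gamma'-\Gamma)]\ge 0$ for all such $\Gamma'$. Diagonalizing $\Gamma$ by a unitarily implementable $\V$ (Theorem~\ref{thm:diagonalizing-Gamma}) and examining which $\Gamma'$ are available, this variational inequality should force: $\lambda_n > 0$ only where the corresponding diagonal coefficient of $\A$ (in the $\V$-basis) vanishes, and the off-diagonal structure of $\A$ relative to $\Gamma$'s eigenspaces is constrained enough that $\A$ is itself block-diagonalized by $\V$, hence diagonalized by a unitarily implementable Bogoliubov transformation. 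The relation $\A\Gamma = -\A\S\,1_{(-\infty,0)}[\A\S]$ then follows by matching eigenvalues: on the joint diagonal basis, $\A\S$ has eigenvalues $\pm\mu_n$ (with $\mu_n\ge 0$), the negative spectral projection selects the "lower" half-space, and $\S\,1_{(-\infty,0)}[\A\S]$ there equals $-\mathrm{diag}(0,1) = -\V^{-1}\Gamma(\V^{-1})^*$-conjugate, giving the formula. Finally $\A\Gamma\S = \S\Gamma\A \le 0$ is read off: $\A\Gamma\S = -\A\S\,1_{(-\infty,0)}[\A\S]\,\S = -(\A\S)\,1_{(-\infty,0)}[\A\S]\ge 0$... one takes care with signs, noting $\S^2=1$ and that $X\,1_{(-\infty,0)}(X)\le 0$ for self-adjoint $X=\A\S$ acting in the inner product where it is self-adjoint; the symmetry $\A\Gamma\S=\S\Gamma\A$ follows from $\J$-invariance and $\V^*\S\V=\S$.

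\medskip
\textbf{Main obstacle.} The hard part is part (iii): identifying the full set of admissible perturbations $\Gamma'$ of the minimizer $\Gamma$ and extracting from the scalar inequality $\Tr[\A(\Gamma'-\Gamma)]\ge 0$ the strong conclusion that $\A$ is \emph{simultaneously} block-diagonalizable with $\Gamma$ by a \emph{unitarily implementable} Bogoliubov transformation. The subtlety is twofold: one must produce enough perturbations (including ones that rotate between eigenspaces of $\Gamma$ and ones that move eigenvalues $\lambda_n$), and one must control the Shale--Stinespring trace-class condition throughout so the diagonalizing transformation stays unitarily implementable. Getting the operator $\A\S$ — which is only self-adjoint with respect to a modified inner product — into a form where the spectral projection $1_{(-\infty,0)}[\A\S]$ is meaningful and interacts correctly with $\Gamma$ will require the same $\Gamma_1^{1/}$-conjugation trick as in Lemma~\ref{le:properties-Gamma1}, or an analogous device adapted to $\A$.
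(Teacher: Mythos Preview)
Your treatment of parts (i) and (ii) is essentially the paper's: diagonalize $\Gamma$ (resp.\ $\A$) by a unitarily implementable Bogoliubov transformation, write the conjugated operator in block form, and observe that the trace $\Tr[\A\Gamma]$ reduces to $2\Tr[a\xi]+\Tr[a]$ (resp.\ $2\Tr[d\gamma]+\Tr[d]$), which is minimized at $\xi=0$ (resp.\ $\gamma=0$). Minor bookkeeping with which side $\V$ acts on aside, this matches.

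Part (iii) has a genuine gap. Your perturbation by convex combinations $(1-t)\Gamma+t\Gamma'$ gives only the one-sided inequality $\Tr[\A(\Gamma'-\Gamma)]\ge 0$, because the set of 1-pdm's is convex but the minimizer may (and typically will) sit on its boundary. From a one-sided first-order condition you cannot extract an \emph{equation}, and in particular you cannot conclude that $\A$ is block-diagonal in the $\Gamma$-basis: an off-diagonal piece of $\A$ could contribute with either sign depending on the direction $\Gamma'-\Gamma$, and you are only allowed inward directions. Your own ``main obstacle'' paragraph flags exactly this, but does not resolve it.

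The paper's device is to perturb not along affine segments but along the \emph{group orbit}: set $\Gamma_\varepsilon=e^{-i\varepsilon\S a}\,\Gamma\,e^{i\varepsilon a\S}$ for trace-class $a=a^*=\J a\J$. Since $e^{i\varepsilon a\S}$ is a unitarily implementable Bogoliubov transformation, $\Gamma_\varepsilon$ is a bona fide 1-pdm for \emph{all} $\varepsilon\in\R$, so the derivative at $\varepsilon=0$ vanishes and one obtains the equality $\Tr[a(i\S\A\Gamma-i\Gamma\A\S)]=0$ for every such $a$, hence $[\A\S,\S\Gamma]=0$. This commutation relation is the missing key step: once you have it, diagonalizing $\Gamma$ by $\V_\Gamma$ forces $\V_\Gamma^{-1}\A\S\V_\Gamma$ to preserve the blocks $\h\oplus 0$ and $0\oplus\h^*$, and $\J$-invariance then gives the diagonal form $\mathrm{diag}(d,-JdJ^*)$. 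The formula $\A\Gamma=-\A\S\,1_{(-\infty,0)}[\A\S]$ and the inequality $\A\Gamma\S\le 0$ follow by direct computation in this diagonal basis (your sign hesitation is resolved there: on the diagonal one reads $\A\Gamma\S=\S\V_\A\,\mathrm{diag}(0,-JdJ^*)\,\V_\A^*\S\le 0$).
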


\begin{remark} \begin{itemize}

\item[(i)] The above statements (i) and (ii) already appeared in \cite{So07} in the finite-dimensional case (in this case $\A$ is always diagonalizable by Lemma \ref{lm:bogoliubov-diag}). 

\item[(ii)] If the operator $W$ is not self-adjoint but $U^{-1} W U$ is self-adjoint for some invertible operator $U$ then we can still define the projection $1_{(-\infty,0)}[W]$ by
$$1_{(-\infty,0)}[W]:=U 1_{(-\infty,0)}[U^{-1}W  U]U^{-1}.$$
It is easy to check that the definition is independent of the choice of $U$.  In particular, we can define 
$$1_{( - \infty ,0)} [\A\S] := (\V_\A^* )^{ - 1} 1_{( - \infty ,0)} [\V_\A^* \A\S(\V_\A^* )^{ - 1} ]\V_\A^* $$
where $\V_\A^* \A\S(\V_\A^* )^{ - 1} $ is self-adjoint.
\end{itemize}
\end{remark}  

\begin{proof} {\bf (i)} We show that for any state $\rho$ with finite number particle expectation, there is a quasi-free pure state $\widetilde \rho$ such that $ \widetilde \rho(H_\A)\le \rho(H_\A)$. 

By Theorem \ref{thm:diagonalizing-Gamma}, there is a unitarily implementable Bogoliubov transformation $\V$ such that 
\[
\Gamma  = \cV\left( {\begin{array}{*{20}c}
   \xi & 0  \\
   0 & 1 + J\xi J^*  \\
\end{array}} \right)  \cV^*
\]
where $\xi:\fh\to \fh$ is a positive semi-definite trace class operator. Thus
\begin{eqnarray*}
 \rho (H_ \A) = \Tr [\cA \Gamma ] = \Tr \left[ {\cV ^* \cA \cV   \left( {\begin{array}{*{20}c}
   \xi  & 0  \\
   0 & {1+J\xi J^* }  \\
\end{array}} \right)} \right] .
 \end{eqnarray*}
Because $\cV ^* \cA \cV $ commutes with $\J$, it has the block form
\[
\cV ^* \cA \cV   = \left( {\begin{array}{*{20}c}
   a & {b }  \\
   JbJ & { JaJ^* }  \\
\end{array}} \right)
\]
where $0\le a:\h\to \h$ and $b:\h^*\to \h$. Thus
\[
\rho (H_\cA) = \Tr \left[ {\left( {\begin{array}{*{20}c}
   a & {b }  \\
   JbJ & { JaJ^* }  \\
\end{array}} \right)\left( {\begin{array}{*{20}c}
   \xi  & 0  \\
   0 & {1 + J\xi J^* }  \\
\end{array}} \right)} \right] = 2\Tr [a\xi ] + \Tr [a] \ge \Tr[a].
\]

By Theorem \ref{thm:quasi-free-state}, there is a quasi-free pure state $ \widetilde \rho$ whose 1-pdm is 
\[
 \cV \left( {\begin{array}{*{20}c}
   0 & 0  \\
   0 & 1  \\
\end{array}} \right) \V ^* .
\]
It follows from the above discussion that $ \widetilde \rho(H_\A)\le \rho(H_\A)$.
\\

{\bf (ii)} Assume that $\A$ is diagonalized by the unitarily implementable Bogoliubov transformation $\V_\A$, namely 
\[
 \cA  = \cV _\cA ^* \left( {\begin{array}{*{20}c}
   d & 0  \\
   0 &  JdJ^*  \\
\end{array}} \right) \cV _\cA
\]
where $d:\fh\to \fh$ is positive semi-definite. For any state $\rho$ we have
$$
\rho(H_\cA)  = \Tr [\cA \Gamma] =  \Tr \left[ {\left( {\begin{array}{*{20}c}
   d & 0  \\
   0 & { J dJ^*}  \\
\end{array}} \right) \cV _\cA \Gamma  \cV _\cA^* } \right] 
$$
where $\Gamma$ is the 1-pdm of $\rho$. We may write $\cV _\cA \Gamma _\Psi  \cV _\cA^*$ in the block form 
\[
\cV _\cA \Gamma \cV _\cA ^*  = \left( {\begin{array}{*{20}c}
   \gamma  & \alpha   \\
   {\alpha ^* } & {1 + J \gamma J^*}  \\
\end{array}} \right)
\]
where $0\le \gamma:\fh\to \fh$ and $\alpha:\h^*\to \h$. Thus
\begin{eqnarray*}
 \rho(H_\cA)  = \Tr \left[ {\left( {\begin{array}{*{20}c}
   d & 0  \\
   0 & { J dJ^*}  \\
\end{array}} \right)\left( {\begin{array}{*{20}c}
   \gamma  & \alpha   \\
   {\alpha ^* } & {1 +J\gamma J^*}  \\
\end{array}} \right)} \right]=2\Tr [d\gamma ] + \Tr [d]
 \ge  \Tr [d] 
\end{eqnarray*}

Denote by $\rho_0$ the quasi-free pure state having the 1-dpm
\[
 \cV_A^{-1} \left( {\begin{array}{*{20}c}
   0 & 0  \\
   0 & 1  \\
\end{array}} \right)V_\A ^{*-1} = \S \cV_A^{*} \left( {\begin{array}{*{20}c}
   0 & 0  \\
   0 & 1  \\
\end{array}} \right)V_\A \S .
\]
Then $\rho_0(H_\A)=\Tr[d]$ and hence $\rho_0$ is a ground state of $H_\A$.

Moreover, if $\A$ is positive definite then $\Tr [d\gamma ] >0$ unless $\gamma=0$. Therefore, $\rho_0$ is the unique ground state of $H_\A$ among the quasi-free states.   
\\

{\bf (iii)} Assume that problem (\ref{eq:variational-quadratic-Hamiltonian}) has a minimizer and $\Gamma$ is the 1-dpm of the minimizer. 

1. We first prove that $\A\S$ and $\S\Gamma$ commute. Let $a$ be an arbitrary trace class operator on $\h\oplus \h^*$ such that $a=a^*=\J a \J$. It is straightforward to check that $\exp(i\eps H\S)$ is a Bogoliubov unitarily implementable transformation for any $\eps\in \R$.  Similarly to the variational argument for Hartree-Fock-Bogoliubov theory in \cite{LL01} (p. 284), we consider the trial states 
$$\Gamma _\varepsilon  : =\exp(-i\eps \S a) \Gamma \exp(i\eps a \S)=\Gamma+ \eps[i\Gamma a \S -i\S a \Gamma]+ O(\eps^2),~~\eps\in \R.$$
Since $\eps=0$ minimizes the functional $\eps\mapsto \Tr[\A(\Gamma_\eps-\Gamma)]$ we find that  
\[
0 = \frac{d}
{{d\varepsilon }}\operatorname{Tr} [\A(\Gamma _\varepsilon   - \Gamma )] = \operatorname{Tr} [aB]~~\text{with}~B:=i\S\A\Gamma  - i\Gamma \A\S.
\]
Note that $B=B^*=\J B \J$.

Now let $b$ be any trace class operator  on $\h\oplus \h^*$. Since $a:=b+b^*+\J b \J+\J b^* \J$ satisfies that $a=a^*=\J a \J$, we have 
\[
0=\Tr [aB]=4 \Re \Tr[bB].
\] 
By changing $b$ a complex phase, we conclude that $\Tr[bB]=0$ for any trace class operator $b$. This implies that $B=0$. Thus $\A \S \Gamma= \Gamma \A \S$, namely $[\A \S, \S \Gamma]=0$. 

2. Now let $\V_\Gamma$ be the unitarily implementable Bogoliubov  transformation such that 
\[
\V_\Gamma^* \Gamma \V_\Gamma = \left( {\begin{array}{*{20}c}
   \xi  & 0  \\
   0 & {1 + J\xi J^* }  \\

 \end{array} } \right)
\]
for some trace class operator $\xi\ge 0$ on $\h$. Thus the operator $\V_\Gamma^{-1} \S \Gamma \V_\Gamma=\S \V_\Gamma^* \Gamma \V_\Gamma$ leaves the spaces $\h\oplus 0$ and $0\oplus \h^*$ invariant . Since $\V_\Gamma^{-1} \A\S \V_\Gamma$ commutes with $\V_\Gamma^{-1} \S \Gamma \V_\Gamma$, it also leaves the spaces $\h\oplus 0$ and $0\oplus \h^*$ invariant. Moreover, since $\V_\Gamma^{-1} \A \S \V_\Gamma$ commutes with $\J$, it must have the form 
\[
\V_\Gamma^{-1} \A \S \V_\Gamma= \left( {\begin{array}{*{20}c}
   d & 0  \\
   0 & { - JdJ^*}  \\

 \end{array} } \right).
\]
Using $\V_\Gamma^{-1}=\S\V_\Gamma^{*}\S$ we then conclude that
\[
\S\V_\Gamma^{*}\S \A \S\V_\Gamma\S= \left( {\begin{array}{*{20}c}
   d & 0  \\
   0 & { JdJ^*}  \\

 \end{array} } \right)
\]
where $d\ge 0$ since $\A\ge 0$. Thus the unitarily implementable Bogoliubov transformation $\V_\A:=  \S\V_\Gamma \S$ diagonalizes $\A$. 

3. Finally we prove that $\A\Gamma = -\A\S 1_{(-\infty,0)}[\A \S]$. Denote 
\[
\V_\A^{ - 1} \Gamma (\V_\A^{ - 1} )^*  = \left( {\begin{array}{*{20}c}
   {\widetilde\gamma } & {\widetilde\alpha }  \\
   {(\widetilde\alpha )^* } & {J\widetilde\gamma J^* }  \\

 \end{array} } \right)~~\text{and}~ \V_\A^{ - 1} \Gamma' (\V_\A^{ - 1} )^*  = \left( {\begin{array}{*{20}c}
   {\widetilde\gamma ' } & {\widetilde\alpha '}  \\
   {(\widetilde\alpha ')^* } & {J\widetilde\gamma 'J^* }  \\

 \end{array} } \right)
\]
for any 1-dpm $\Gamma'$. We find that    
\bqq
  0 &\leqslant& \operatorname{Tr} \left[ {\A(\Gamma ' - \Gamma )} \right] = \operatorname{Tr} \left[ {(\V_\A^* \A\V_\A)\V_\A^{ - 1} (\Gamma ' - \Gamma )(\V_\A^* )^{ - 1} } \right] \hfill\\
   &=&   \operatorname{Tr} \left[ {\left( {\begin{array}{*{20}c}
   d & 0  \\
   0 & {JdJ^* }  \\

 \end{array} } \right)\left( {\begin{array}{*{20}c}
   {\widetilde\gamma'-\widetilde\gamma } & {\widetilde\alpha'-\widetilde\alpha }  \\
   {(\widetilde\alpha'-\widetilde\alpha )^* } & {J(\widetilde\gamma'-\widetilde\gamma) J^* }  \\

 \end{array} } \right)} \right] \hfill \\
  &=&   2\operatorname{Tr} [d\widetilde\gamma']-2\operatorname{Tr} [d\widetilde\gamma].
\eqq
Because this inequality holds true for any positive semi-definite trace class operator $\widetilde\gamma'$ on $\h$, we conclude that $\Tr(d\widetilde\gamma)=0$. By writing $\Tr[d\widetilde\gamma]=\Tr [(d^{1/2}\widetilde\gamma^{1/2})^* d^{1/2}\widetilde\gamma^{1/2}]$ we obtain $d^{1/2}\widetilde\gamma^{1/2}=0$, and hence $d\widetilde\gamma=0$. This also implies that $d\widetilde\alpha=0$ since 
$$(\widetilde\alpha d)^* (\widetilde\alpha d)=d(\widetilde\alpha)^*\widetilde\alpha d \le d (1+||\widetilde\gamma||_{\mathcal{L}(\h)}) \widetilde\gamma d=0.$$
Thus
\bqq
  \left( {\begin{array}{*{20}c}
   0 & 0  \\
   0 & {JdJ^* }  \\

 \end{array} } \right) &=& \left( {\begin{array}{*{20}c}
   d & 0  \\
   0 & {JdJ^* }  \\

 \end{array} } \right)\left( {\begin{array}{*{20}c}
   {\widetilde\gamma } & {\widetilde\alpha }  \\
   {(\widetilde\alpha )^* } & {1 + J\widetilde\gamma J}  \\

 \end{array} } \right) \hfill \\
   &=& (\V_\A^* \A\V_\A)(\V_\A^{ - 1} \Gamma (\V_\A^* )^{ - 1} ) = \V_\A^* \A\Gamma (\V_\A^* )^{ - 1} .
\eqq
It can be rewritten as 
\bq \label{eq:AGammaH}
\A\Gamma  = (\V_\A^* )^{ - 1} \left( {\begin{array}{*{20}c}
   0 & 0  \\
   0 & {JdJ^* }  \\

 \end{array} } \right)\V_\A^* .
\eq

Moreover, since $d\ge 0$ we find that 
\bqq
  \left( {\begin{array}{*{20}c}
   0 & 0  \\
   0 & {JdJ^* }  \\

 \end{array} } \right) &=&  - \left( {\begin{array}{*{20}c}
   d & 0  \\
   0 & {JdJ^* }  \\

 \end{array} } \right)1_{( - \infty ,0)} \left[ {\left( {\begin{array}{*{20}c}
   d & 0  \\
   0 & { - JdJ^* }  \\

 \end{array} } \right)} \right]S \hfill \\
   &=&  - (\V_\A^* \A\V_\A)1_{( - \infty ,0)} \left[ {\S\V_\A^* \A\V_\A} \right]\S \hfill\\
&=&  - \V_\A^* \A\V_\A 1_{( - \infty ,0)} \left[ {(\S\V_\A)^{ - 1} (\A\S)\S\V_\A} \right]\S \hfill \\
   &=&  - \V_\A^* \A\V_\A(\S\V_\A)^{ - 1} 1_{( - \infty ,0)} (\A\S)\S\V_\A\S \hfill\\
&=&  - \V_\A^* \A\S 1_{( - \infty ,0)} (\A\S)(\V_\A^* )^{ - 1} .
\eqq
Thus (\ref{eq:AGammaH}) can be rewritten as 
$$\A\Gamma=- \S\A 1_{( - \infty ,0)} [\A\S].$$

Moreover, it follows from (\ref{eq:AGammaH})  that
\[
\A\Gamma \S = (\V_\A^* )^{ - 1} \left( {\begin{array}{*{20}c}
   0 & 0  \\
   0 & {JdJ^* }  \\

 \end{array} } \right)\V_\A^* \S = \S\V_\A\left( {\begin{array}{*{20}c}
   0 & 0  \\
   0 & { - JdJ^* }  \\

 \end{array} } \right)\V_\A^* \S \leqslant 0. 
\]
\end{proof}

\subsection{Bogoliubov variational theory}

The Bogoliubov variational states should include not only the quasi-free states (like the Hartree-Fock theory) but also the coherent states, which correspond to the condensation. To describe the formulation precisely we need the following result (see \cite{So07}, Theorem 13.1).  
\begin{theorem}\label{thm:coherent-states} For every $\phi\in \h$ there exists (uniquely up to a complex phase)  a {\rm coherent unitary} $\U_\phi:\F\to \F$ such that 
$$\U_\phi^* a(f)\U_\phi=a(f)+(f,\phi)~~\text{for all}~f\in \h.$$
\end{theorem}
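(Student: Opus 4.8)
\emph{Proof idea.} The operator to be constructed is the Weyl (displacement) operator associated with $\phi$, so the natural candidate is
$$\U_\phi:=\exp(\overline{B_\phi}),\qquad B_\phi:=a^*(\phi)-a(\phi),$$
where the bar denotes operator closure. The plan has three parts: first, make rigorous sense of this exponential as a unitary operator; second, verify the intertwining relation $\U_\phi^*a(f)\U_\phi=a(f)+(f,\phi)$; third, establish uniqueness up to a complex phase.

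For the first part I would work on the dense subspace $D_0\subset\F$ of vectors with only finitely many nonzero components. On $D_0$ the operator $iB_\phi$ is symmetric, and using the standard bounds $\|a(\phi)\psi\|,\|a^*(\phi)\psi\|\lesssim\|\phi\|\,\|(\N+1)^{1/2}\psi\|$ one checks that $\sum_{n\ge0}\frac{t^n}{n!}\|B_\phi^{\,n}\psi\|<\infty$ for every $\psi\in D_0$ and every $t>0$, i.e. the vectors of $D_0$ are analytic for $iB_\phi$. Nelson's analytic vector theorem then yields essential self-adjointness of $iB_\phi$ on $D_0$, so $\overline{B_\phi}$ is skew-adjoint and, by Stone's theorem, $\U_\phi=\exp(\overline{B_\phi})$ is unitary with $\U_\phi^*=\exp(-\overline{B_\phi})$.

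For the second part I would fix $f\in\h$ and $\psi\in D_0$. Since Weyl operators preserve $\mathrm{Dom}(\N^{1/2})\subset\mathrm{Dom}(a(f))$, the curve $c(t):=e^{-t\overline{B_\phi}}a(f)e^{t\overline{B_\phi}}\psi$ is well defined; differentiating it (using Stone's theorem together with the relative bound $\|\overline{B_\phi}\chi\|\lesssim\|(\N+1)^{1/2}\chi\|$) and invoking the canonical commutation relation $[B_\phi,a(f)]=-(f,\phi)$, one obtains $c'(t)=(f,\phi)\psi$ for all $t$. Integrating from $0$ to $1$ gives $\U_\phi^*a(f)\U_\phi=a(f)+(f,\phi)$ on the core $D_0$, hence as operators. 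For the third part, if $\U'$ is another unitary with $(\U')^*a(f)\U'=a(f)+(f,\phi)$ for all $f$, I would set $W:=\U'\U_\phi^*$; combining the two intertwining relations gives $W a(f)W^*=a(f)$ for all $f$, and taking adjoints gives $W a^*(f)W^*=a^*(f)$, so $W$ commutes with the whole CCR algebra. Since the Fock representation of the CCR is irreducible, $W$ is a scalar multiple of the identity; being unitary, $W=e^{i\theta}\,\mathrm{I}$, so $\U'=e^{i\theta}\U_\phi$.

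The main obstacle is the analytic bookkeeping in the first two parts: establishing essential skew-adjointness of $B_\phi$ through analytic vectors, and rigorously justifying the Baker--Campbell--Hausdorff-type differentiation on a common core (in particular that $t\mapsto e^{t\overline{B_\phi}}\psi$ is a $C^1$ curve in $\mathrm{Dom}(\N^{1/2})$ for the graph norm, so that $a(f)$ and the unitary group may be differentiated through it). Once these domain points are secured, the commutator computation and the irreducibility argument for uniqueness are routine.
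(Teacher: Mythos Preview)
Your approach is correct and is the standard Weyl-operator construction, but it is genuinely different from the paper's. The paper does not exponentiate $a^*(\phi)-a(\phi)$ via Nelson's theorem; instead, by analogy with its proof of Theorem~\ref{thm:unitary-implementation}, it defines $\U_\phi$ directly on the occupation-number basis: it writes down an explicit ``new vacuum'' $\U_\phi|0\rangle=\exp[-\tfrac12\|\phi\|^2]\exp[-a^*(\phi)]|0\rangle$ and then sets $\U_\phi|n_{i_1},\dots,n_{i_M}\rangle$ equal to the vector obtained by applying the shifted creation operators $a^*(u_{i_k})+(\phi,u_{i_k})$ to the new vacuum. The verification that this is unitary and that the intertwining relation holds is left to the reader by pointing to the parallel Bogoliubov computation in the Appendix.

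Your route trades that explicit combinatorics for functional-analytic machinery (analytic vectors, Stone's theorem, differentiation of $t\mapsto e^{-t\overline{B_\phi}}a(f)e^{t\overline{B_\phi}}$). What you gain is a clean, self-contained argument that does not piggy-back on the Bogoliubov case, and a transparent uniqueness proof via irreducibility of the Fock representation---a point the paper's sketch does not address at all. What the paper's construction buys is that one never has to worry about domains of unbounded generators or justify a BCH-type differentiation: everything is an explicit, convergent power series acting on finite-particle vectors. Both are standard; your self-diagnosed ``analytic bookkeeping'' is real but routine once the $\|a^{\#}(\phi)\psi\|\le\|\phi\|\,\|(\N+1)^{1/2}\psi\|$ bound is in hand.
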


\begin{proof} We can proceed similarly the proof of Theorem \ref{thm:unitary-implementation} (see the Appendix) by translating the orthonormal basis $\left| {n_{i_1 } ,...,n_{i_M } } \right\rangle  = \left( {n_{i_1 } !...n_{i_M } !} \right)^{ - 1/2} a ^* (u_{i_M } )^{n_{i_M } } ...a^* (u_{i_1 } )^{n_{i_1 } } \left| 0 \right\rangle$ to
\[
\U_\phi \left| {n_{i_1 } ,...,n_{i_M } } \right\rangle = \left( {n_{i_1 } !...n_{i_M } !} \right)^{ - 1/2} [a^* (u_{i_M } )+(\phi,u_{i_M})]^{n_{i_M } } ...[a^* (u_{i_1 } )+(\phi,u_{i_1})]^{n_{i_1 } } \U_\phi\left| {0 } \right\rangle
\]
with the new vacuum 
\bqq
\U_\phi\left| {0 } \right\rangle  = \exp \left[ { -\frac{1}{2} \left\| \phi  \right\|^2 } \right]\exp \left[ { - a^* (\phi ) } \right]\left| 0 \right\rangle.
\eqq
\end{proof} 

\begin{remark} \begin{itemize}

\item[(i)] The condensate vector $\phi\in \h$ needs not be normalized. Any pure state $\left| \Psi  \right\rangle \left\langle \Psi  \right|$ with $\Psi=\U_\phi \left| 0 \right\rangle \in \F$ for some $\phi\in \h$ is called  a {\it coherent state}.

\item[(ii)] For generalized annihilation operators we get
$$ \U_\phi^* A(F)\U_\phi=A(F)+(F,\phi \oplus J\phi)_{\h\oplus \h^*} ~~\text{for all}~F\in \h\oplus \h^*.$$
\end{itemize}
\end{remark}

Now we can describe the Bogoliubov variational states. Denote

\[
{\G}^{\rm B} : = \left\{ {(\gamma ,\alpha) | \Gamma_{\gamma,\alpha} =\left(  {\begin{array}{*{20}c}
   \gamma  & \alpha   \\
   {J\alpha J} & {1 + J\gamma J^* }  \\

 \end{array} } \right) \geqslant 0,\operatorname{Tr} (\gamma ) < \infty} \right\}.
\]
The {\it Bogoliubov variational state} $\rho_{\gamma,\alpha,\phi}$ associated with $(\gamma,\alpha,\phi)\in \G^{Bo}\times \h $ is defined by 
$$
\rho_{\gamma,\alpha,\phi}(B):=\rho_{\gamma,\alpha}(\U_\phi^* B \U_\phi) ~~\text{for all}~B\in \B(\F),
$$
where $\rho_{\gamma,\alpha}$ is the quasi-free state with the 1-pdm $\Gamma_{\gamma,\alpha}$. In particular, the particle number expectation of the Bogoliubov variational state $\rho _{\gamma,\alpha ,\phi }$ is 
$$\rho _{\gamma,\alpha ,\phi } (\N) = \Tr(\gamma)+  ||\phi ||^2.$$

For a given Hamiltonian $\mathbb{H}:\F\to \F$ and $\lambda\ge 0$ we can define the {\it Bogoliubov ground state energy}
\bqq \label{eq:Bogoliubov-variational-problem}
E^{\rm B}(\lambda ) = \inf \left\{ {\rho_{\gamma,\alpha,\rho}(\mathbb{H})|(\gamma,\alpha,\phi) \in \G^{\rm B}\times \h,  \Tr(\gamma)+  ||\phi ||^2=\lambda} \right\}
\eqq
where $\rho_{\gamma,\alpha,\rho}(\mathbb{H})$ is the {\it Bogoliubov energy functional}  and $\lambda$ stands for the total particle number of the system.

\begin{remark} \begin{itemize}

\item[(i)] Due to the variational principle, the Bogoliubov ground state energy $E^{\rm B}_{\mathbb{H}}(\lambda)$ is always an upper bound to the quantum grand canonical energy 
$$
E^{\rm g}(\lambda)=\inf \{(\Psi, \mathbb{H} \Psi) : \Psi\in \F, ||\Psi || = 1, (\Psi, \N \Psi)=\lambda \}.
$$

\item[(ii)] If $N\in \mathbb{N}$ then the grand canonical energy $E^{\rm g}(N)$ is always a lower bound to the canonical energy 
$$E(N)=\inf \{(\Psi, \mathbb{H} \Psi) : \Psi\in \bigotimes _{\text{sym}}^N, ||\Psi || = 1\}.$$
Moreover, if $E(N)$ is convex w.r.t. $N$ then $E^{\rm g}(N)=E (N)$ for all $N$. 
\end{itemize}
\end{remark}

\begin{example}[A toy model] Let $\h=\R$ and the Hamiltonian 
$$\mathbb{H}=a^*(1)a^*(1)a(1)a(1).$$
A straightforward computation shows that for $N\in \mathbb{N}$ then the quantum energy is
$$ E^{\rm g}(N)=E(N)=N^2-N$$
and  the Bogoliubov energy is
\bqq
  {E^B}(N) &=& \mathop {\inf }\limits_{\lambda  \geqslant 0,x \geqslant 0,\lambda  + x = N} \left[ {{x^2} + x(4\lambda  - 2\sqrt {\lambda (1 + \lambda )} ) + 2{\lambda ^2} + \lambda (1 + \lambda )} \right] \hfill \\
   &=& \mathop {\inf }\limits_{0 \leqslant \lambda  \leqslant N} \left[ {{N^2} + 2N(\lambda  - \sqrt {\lambda (1 + \lambda )} ) + \lambda  + 2\lambda \sqrt {\lambda (1 + \lambda )} } \right] \hfill \\
   &=& {N^2} - N + O({N^{2/3}})~~~~\text{as}~N\to \infty.
   \eqq

\end{example}

Of our particular interest is the Bogoliubov variational theory for interacting Bose gases which we shall describe briefly below.

Let $\h=L^2(\Omega)$ for some measure space $\Omega$ with the inner product
\[
(u,v) = \int_\Omega {\overline {u(x)} v(x)dx}.
\]
In this case the mapping $J:\h\to \h^*$ is simply the complex conjugate, i.e. $Ju(x)=\overline{u(x)}$. Therefore, for simplicity we shall use notation $\overline{\gamma}=J \gamma J^*$ and $\overline{\alpha}=J\alpha J$. 

The Hamiltonian consists of a {\it one-body kinetic operator} $T$, which is a self-adjoint operator on $\h$, and a {\it two-body potential operator} $W$ which is the multiplication operator corresponding to the funtion $W(x,y):\Omega\times \Omega\to \R$ satisfying $W(x,y)=W(y,x)$. The grand canonical Hamiltonian ${\mathbb{H}}:\F\to \F$ can be represented in the second quantization as
\bqq
{\mathbb{H}}&=&\bigoplus_{N=0}^\infty \left( {\sum\limits_{i = 1}^N {T_i }  + \sum\limits_{1 \leqslant i < j \leqslant N} {W_{ij} } } \right) \\
&=& \sum\limits_{m,n} { (u_m ,Tu_n )_\h a_m^ *  a_n }  + \frac{1}
{2}\sum\limits_{m,n,p,q} {(u_m  \otimes u_n ,Wu_p  \otimes u_q )_{\h\times \h} a_m^ *  a_n^ *  a_p a_q }
\eqq
where $a_n:=a(u_n)$ and $\{u_n\}_{n= 1}^{\infty}$ is an orthonormal basis for $\h$ (the sum is independent of the choice of $\{u_n\}$). 

To represent the Bogoliubov energy functional explicitly in terms of $(\gamma,\alpha,\phi)$, it is convenient to introduce the {\it integral kernel} $\alpha(x,y)$ of the Hilbert-Schmidt operator which satisfies
\[
(\alpha f)(x) = \int_\Omega {\alpha (x,y)f(y)dy} ~~\text{for all}~ f\in L^2(\Omega). 
\]
Similarly, we have the kernel $\gamma(x,y)$ of $\gamma$ and the density functional is {\it formally} defined by $\rho_{\gamma}(x):=\gamma(x,x)$. More precisely, because $\gamma$ is a positive semi-definite trace class operator, we have the spectral decomposition $\gamma= \sum_i t_i \left| {u_i} \right\rangle \left\langle {u_i} \right|$ and then we can define $\gamma(x,y):= \sum_i t_i u_i(x) \overline{u_i(y)}$ and $\rho_\gamma(x):=\sum_i t_i |u_i(x)|^2.$ Note that $\int \rho(x)dx=\Tr(\gamma)$.

Using the coherent transformations
$$\U_\phi^* a_n U_\phi = a_n+(u_n,\phi)$$
and Wick's Theorem we find that
\bqq
\rho_{\gamma,\alpha,\phi}(\mathbb{H})&=&\operatorname{Tr} (T\widetilde \gamma ) + D({\rho _{\widetilde \gamma }},{\rho _{\widetilde \gamma }}) + X(\gamma ,\gamma ) + X(\alpha ,\alpha ) \hfill \\
   &~&+ \operatorname{Re} \iint\limits_{\Omega \times \Omega }  {\left[ {\gamma (x,y)\overline {\phi (x)} \phi (y) + \alpha (x,y)\overline {\phi (x)} \overline {\phi (y)} } \right]W(x,y)dxdy} \hfill \\ 
\eqq
where $\widetilde\gamma  := \gamma  + \left| \phi  \right\rangle \left\langle \phi  \right|$, $\rho_{\widetilde\gamma}(x)=\widetilde\gamma(x,x)$ and 
\[
D(f,g) = \frac{1}
{2}\iint\limits_{\Omega \times \Omega } {\overline{f(x)}g(y) W(x,y)dxdy},~~X(\gamma,\gamma' ) = \frac{1}
{2}\iint\limits_{\Omega \times \Omega } {\overline{\gamma (x,y)}\gamma' (x,y) W(x,y)dxdy}.
\]

Here are some specific examples with respect to three cases: $W>0$, $W$ changes sign, and $W<0$.

\begin{example}[Bosonic atoms] In this case we have
$$\h=L^2(\R^3),~ T=-\Delta-\frac{Z}{|x|},~W(x,y)=\frac{1}{|x-y|}.$$
We shall investigate the Bogoliubov theory for bosonic atoms in details in the next sections. In particular, we can show that the Bogoliubov ground state energy and the full quantum mechanics energy agree up to the leading order, and we conjecture that they even agree up to the second order.  
\end{example}

\begin{example}[Two-component Bose gases] This is the case when
$$\h=L^2(\R^3\times \{\pm 1\}),~ T=-\Delta_x,~W(x,e,y,e')=\frac{ee'}{|x-y|}.$$
It is already known that the Bogoliubov theory is also correct to the full quantum theory up to the leading order. More precisely, for large $N$, the correct leading term $-AN^{7/5}$ was predicted by Dyson \cite{Dy67} using the Bogoliubov principle and then it was mathematically established by Lieb-Solovej \cite{LS04} (lower bound) and Solovej \cite{So06} (upper bound).
\end{example}

\begin{example}[Bosonic stars] The system now corresponds to 
$$\h=L^2(\R^3),~ T=\sqrt{-\Delta+m^2}-m,~W(x,y)=-\frac{\kappa}{|x-y|}$$
where $m>0$ is the neutron mass and $\kappa=Gm^2>0$. Up to the leading order, the ground state energy is approximated by the Hartree model \cite{LY87}. Because the Hartree ground state energy is strictly concave, replacing the canonical setting by the grand canonical setting would make the energy much lower. Therefore, it is easy to see that the Bogoliubov ground state energy is much lower than the one of the full quantum model, although by adapting the ideas in \cite{LL10} we can show that the Bogoliubov variational model still has minimizers.  
\end{example}

\section{Bosonic atoms}

\subsection{Introduction}
For a bosonic atom we mean a system including a nucleus fixed at the origin in $\R^3$ with nucleus charge $Z>0$ and  N ``bosonic electrons" with charge $-1$. The system is described by the Hamiltonian 
\[
H_{N,Z} = \sum\limits_{i = 1}^N {\left( { - \Delta _i  - \frac{Z}
{{|x|}}} \right)}  + \sum\limits_{1 \leqslant i < j \leqslant N} {\frac{1}
{{|x_i  - x_j |}}} 
\]
acting on the symmetric space $\H_N=\bigotimes_{\text{sym}}^N L^2 
(\mathbb{R}^3 )$. The ground state energy of the system is given 
by 
\[
E(N,Z) = \inf \{ (\Psi ,H_{N,Z} \Psi )|\Psi  \in \H_N,||\Psi ||_{L^2}  = 1\}.
\]
In fact, the ground state energy   $E(N,Z)$ does not change if we replace the symmetric subspace $\H_N$ by the full $N$-particle space $\bigotimes^N L^2 (\mathbb{R}^3 )= L^{2}(\R^{3N})$ (see, e.g., \cite{LS10} p. 59-60). For usual atoms (with fermionic electrons), the Hamiltonian $H_{N,Z}$ acts on the anti-symmetric subspace $\mathop  \bigwedge \limits_{i = 1}^N (L^2 (\mathbb{R}^3 ) \otimes \mathbb{C}^2)$ instead. For simplicity, we only consider the spinless electrons because the spin number play no role in the mathematical analysis here. 

We recall some well-known fact about the full quantum problem. Due to the HVZ Theorem (see e.g. \cite{LS10} Lemma 12.1), $E(N,Z)\le E(N-1,Z)$ and if $E(N,Z)<E(N-1,Z)$ then $E(N,Z)$ is an isolated eigenvalue of $H_{N,Z}$. Unlike the asymptotic neutrality of fermionic atoms, in the bosonic case, the binding $E(N,Z)<E(N-1,Z)$ holds for all $0\le N \le N_c(Z)$ with $\lim_{Z\to \infty}N_c(Z)/Z=t_{\rm c}\approx 1.21$ (see \cite{BL83,So90,Ba84,Ba91}). 

The leading term of the ground state energy $E(N,Z)$ is given by the Hartree theory \cite{BL83}. In the Hartree theory, the ground state energy is 
\[
E^{\rm H} (N,Z) = \inf \{ \E^{\rm H} (u,Z):||u||_{L^2 }^2  = N\} 
\]
where
\[
\E^{\rm H} (u,Z) = \int\limits_{\mathbb{R}^3 } {\left| {\nabla u(x)} \right|^2 dx}  - \int\limits_{\mathbb{R}^3 } {\frac{{Z|u(x)|^2 }}
{{|x|}}}  + \iint\limits_{\mathbb{R}^3  \times \mathbb{R}^3 } {\frac{{|u(x)|^2 |u(y)|^2 }}
{{|x - y|}}dxdy}.
\]
By the scaling $u(x)=Z^2 u_1 (Zx)$ we have
$$\E^{\rm H} (u,Z)=Z^3 \E^{\rm H}(u_1,1).$$
Therefore, 
$$E^{\rm H}(N,Z)=Z^3 e(N/Z,1)~~\text{where}~e(t)= E^{\rm H} (t,1).$$
It is well-known (see e.g. \cite{Ba84, Li81}) that $e(t)$ is convex, $e(t)'<0$ when $t<t_c\approx 1.21$ and $e'(t)=0$ when $t\ge t_c$. Moreover, for any  $0<t<t_c\approx 1.21$, $e(t)$ has a unique minimizer $\phi_t$, which is positive, radially-symmetric and it is the unique solution to the nonlinear equation $h_t \phi_t=0$ where 
\[
h_t  =  - \Delta  - \frac{1}
{{|x|}} + |\phi _t^{} |^2 *\frac{1}
{{|x|}} - e'(t).
\]
As a consequence, $h_t\ge 0$. Moreover, since $\sigma_{\rm ess}(h_t)=[-e'(t),0]$, there is a gap $\Delta_t>0$ if $t<t_c$ such that $(h_t-\Delta_t)P_t^{\bot}\ge 0$ where $P_{\phi_t}^{\bot}=1-P_{t}$ with $P_{t}$ being the one-dimensional projection onto ${\rm Span}\{\phi_t\}$.

By scaling back, we conclude that $E^{\rm H}(tZ,Z)$ has the unique minimizer and the operator
\[
h_{t,Z}  =  - \Delta  - \frac{Z}
{{|x|}} + |\phi _{t,Z}|^2 *\frac{1}
{{|x|}} - Z^2e'(t)
\]
satisfies $h_{t,Z}\phi_{t,Z}=0$ and $(h_{t,Z}-Z^2\Delta_{t})P_{\phi_{t,Z}}^{\bot}\ge 0$ when $t<t_c$.

Our aim is to investigate the first correction to the ground state energy $E(tZ,Z)$. We shall analyze the Bogoliubov variational model for bosonic atoms and compare to the full quantum theory. From the general discussion on the Bogoliubov theory, we have the Bogoliubov variational problem
\bq \label{eq:Bogoliubov-variational-problem-bosonic-atom}
E^{\rm B}(N,Z) &=& \inf \left\{ {\E^{\rm B} (\gamma ,\alpha ,\phi,Z ) |(\gamma,\alpha,\phi) \in \G^{\rm B} \times L^2(\R^3),  \Tr(\gamma)+  ||\phi ||^2=N} \right\}
\eq
where
\bqq
\E^{\rm B}(\gamma ,\alpha ,\phi, Z ) &=& \operatorname{Tr} ( - [ \Delta-Z|x|^{-1}] \widetilde\gamma ) + D(\rho _{\widetilde\gamma}  ,\rho _{\widetilde\gamma}  ) +X(\gamma,\gamma) +X(\alpha,\alpha ) \hfill\\
&~&+\iint\limits_{{\mathbb{R}^3} \times {\mathbb{R}^3}} {\frac{{\gamma (x,y)\overline {\phi (x)} \phi (y)}}{{|x - y|}}dxdy} + \operatorname{Re} \iint\limits_{{\mathbb{R}^3} \times {\mathbb{R}^3}} {\frac{{\alpha (x,y)\overline {\phi (x)} \overline {\phi (y)} }}{{|x - y|}}dxdy}.
\eqq
Here we are using the notations $\widetilde\gamma  := \gamma  + \left| \phi  \right\rangle \left\langle \phi  \right|$ and 
\[
D(f,g) = \frac{1}
{2}\iint\limits_{\mathbb{R}^3  \times \mathbb{R}^3 } {\frac{{\overline {f(x)} g(y)}}
{{|x - y|}}dxdy},~~X(\gamma,\gamma' ) = \frac{1}
{2}\iint\limits_{\mathbb{R}^3  \times \mathbb{R}^3 } {\frac{{\overline{\gamma (x,y)}\gamma'(x,y) }}
{{|x - y|}}dxdy}.
\]

The properties of the Bogoliubov theory for bosonic atoms are the following, which will be proved in the next subsections.

\begin{theorem}[Existence of minimizers]\label{thm:existence-bosonic-atom} Let the nucleus charge $Z$ and the electron number $N$ be any positive numbers (not necessarily integers).  

\begin{itemize}

\item[(i)] If the binding inequality 
$$E^{\rm B}(N,Z)<E^{\rm B}(N',Z)~~\text{for all}~0<N'<N$$
holds then $E^B(N,Z)$ has a minimizer. 

\item[(ii)] The energy $E^{\rm B}(N,Z)$ is strictly decreasing on $N\in [0,N_c(Z)]$ with $N_c(Z)\ge Z$ for all $Z$ and 
$$\liminf_{Z\to \infty} \frac{N_c(Z)}{Z}\ge t_c\approx 1.21.$$
\end{itemize}
\end{theorem}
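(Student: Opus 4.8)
The plan is to treat the two parts separately, with part (i) being a standard concentration-compactness / direct-method argument adapted to the Bogoliubov energy functional, and part (ii) being a comparison argument against the Hartree theory. For part (i), I would fix $N,Z$ and take a minimizing sequence $(\gamma_n,\alpha_n,\phi_n)\in\G^{\rm B}\times L^2(\R^3)$ with $\Tr(\gamma_n)+\|\phi_n\|^2=N$. The first task is to extract \emph{a priori} bounds: using the positivity of the Coulomb direct term $D$, the exchange terms $X(\gamma,\gamma)$, $X(\alpha,\alpha)$, and controlling the attractive nucleus term and the cross term $\iint \gamma(x,y)\overline{\phi(x)}\phi(y)/|x-y|$ by the kinetic energy $\Tr(-\Delta\widetilde\gamma)$ (Hardy's inequality, and the Cauchy--Schwarz-type bound $|X(\alpha,\alpha)|\le X(\gamma,\gamma)+X(1+\gamma,\ldots)$ coming from \eqref{eq:relation-gamma-alpha}), one shows $\E^{\rm B}$ is bounded below on the constraint set and that along the minimizing sequence $\Tr(-\Delta\gamma_n)$, $\|\nabla\phi_n\|^2$, $\Tr(\gamma_n)$ are all bounded. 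Hence, up to a subsequence, $\phi_n\wto\phi$ weakly in $H^1$, and $\gamma_n,\alpha_n$ converge in the weak-$*$ topology of trace class / Hilbert--Schmidt operators (with $\sqrt{1-\Delta}\,\gamma_n\sqrt{1-\Delta}$ bounded in trace class), to some limit $(\gamma,\alpha,\phi)$ that still satisfies $\Gamma_{\gamma,\alpha}\ge0$ (the cone of 1-pdm's is weak-$*$ closed) and $\Tr(\gamma)+\|\phi\|^2=:N'\le N$.

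The heart of part (i) is then the binding inequality. If no mass is lost, i.e.\ $N'=N$, then weak lower semicontinuity of each term of $\E^{\rm B}$ (the kinetic and Coulomb-repulsive terms are weakly lsc; the attractive terms are weakly continuous by the Rellich-type compactness of $\phi_n\to\phi$ in $L^p_{\rm loc}$ and a standard splitting of the Coulomb kernel into a near-singular piece and an $L^\infty$ tail) gives that $(\gamma,\alpha,\phi)$ is a minimizer. If mass is lost, $0\le N'<N$, one splits the sequence geometrically: a bump near the origin carrying mass $\approx N'$ and a part escaping to infinity carrying mass $\approx N-N'$, and the escaping part carries nonnegative energy (its potential energy vanishes and kinetic plus repulsive energy is $\ge0$), so $E^{\rm B}(N,Z)\ge E^{\rm B}(N',Z)+0$, contradicting the strict binding hypothesis unless $N'=N$. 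Making this splitting rigorous for operators (not just vectors) is the main technical obstacle; I would follow the approach of Lewin and collaborators for the Hartree--Fock--Bogoliubov functional, localizing $\gamma$ via $\gamma\mapsto\chi_R\gamma\chi_R+\chi_R^c\gamma\chi_R^c+(\text{small})$ and similarly for $\alpha$, checking that the localized objects still define admissible 1-pdm's (this is where the constraint \eqref{eq:relation-gamma-alpha} must be shown to survive localization) and that the energy is almost additive under the split.

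For part (ii), I would prove the two-sided comparison with Hartree theory. The inequality $E^{\rm B}(N,Z)\le E^{\rm H}(N,Z)$ is immediate by using the trial state $\gamma=0$, $\alpha=0$, $\phi$ a Hartree minimizer at mass $N$, since $\E^{\rm B}(0,0,\phi,Z)=\E^{\rm H}(\phi,Z)$. Combined with the known strict monotonicity of $e(t)$ on $[0,t_c)$ — i.e.\ $E^{\rm H}(\cdot,Z)$ is strictly decreasing on $[0,t_cZ]$ — and the obvious inequality $E^{\rm B}(N,Z)\le E^{\rm B}(N',Z)$ for $N'<N$ (any admissible triple at mass $N'$ is admissible after adding mass to $\phi$, or rather: $E^{\rm B}$ is nonincreasing in $N$ because enlarging the constraint set... one must be slightly careful, so instead I would argue that to get \emph{strict} decrease one compares $E^{\rm B}(N,Z)\le E^{\rm H}(N,Z)<E^{\rm H}(N',Z)$, and then show $E^{\rm H}(N',Z)\le E^{\rm B}(N',Z)+o(1)$ is \emph{not} what we want; rather, for the binding one needs $E^{\rm B}(N,Z)<E^{\rm B}(N',Z)$ which follows once we also establish $E^{\rm B}(N',Z)\ge E^{\rm H}(N',Z)-C$...). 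Cleanly: for $N\le Z$, pick a Hartree minimizer; one checks the strict binding inequality $E^{\rm B}(N,Z)<E^{\rm B}(N',Z)$ by exhibiting, for the would-be minimizer structure, a decrease coming from the Hartree gap $\Delta_{t}>0$ (the operator $h_{t,Z}$ has a spectral gap), so that adding a little density along the Hartree channel strictly lowers the energy; hence by part (i) a minimizer exists and, running this for all $N\le Z$, one gets $N_c(Z)\ge Z$. Finally $\liminf_{Z\to\infty}N_c(Z)/Z\ge t_c$ follows because the Hartree binding threshold is exactly $t_c$: for any $t<t_c$ the gap $\Delta_t>0$ persists, so the same argument gives binding up to $N=tZ$ for all large $Z$. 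The main obstacle in (ii) is making the ``adding a little density strictly lowers the energy'' step precise at the level of the full Bogoliubov functional rather than just Hartree — one needs to control the $\alpha$-dependent exchange term $X(\alpha,\alpha)$ and the pairing term $\mathrm{Re}\iint\alpha\overline\phi\,\overline\phi/|x-y|$ so that they do not destroy the gain from the gap; choosing $\alpha$ supported in the complement of the condensate and using the positivity built into \eqref{eq:relation-gamma-alpha} should suffice.
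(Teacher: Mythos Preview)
Your concentration-compactness/splitting approach would work but is heavier than necessary, and you correctly flag the main difficulty: making the geometric localization preserve the 1-pdm constraint \eqref{eq:relation-gamma-alpha}. The paper bypasses this entirely with the \emph{extended variational problem} trick: first show by a simple trial-state argument (add $|g_\eps\rangle\langle g_\eps|$ with $g_\eps$ spreading to infinity) that $N\mapsto E^{\rm B}(N,Z)$ is non-increasing, so that the relaxed problem $E^{\rm B}(_\le N,Z)$ with constraint $\Tr(\gamma)+\|\phi\|^2\le N$ has the same infimum. For the relaxed problem, existence follows directly from Fatou's lemma and weak lower semicontinuity, with no splitting needed (loss of mass is allowed). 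The binding hypothesis then forces any relaxed minimizer to saturate the constraint. This avoids operator localization altogether.

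\textbf{Part (ii).} Here there is a genuine gap. Your Hartree comparison $E^{\rm B}(N,Z)\le E^{\rm H}(N,Z)<E^{\rm H}(N',Z)$ does not yield $E^{\rm B}(N,Z)<E^{\rm B}(N',Z)$, since the inequality goes the wrong way between $E^{\rm H}(N',Z)$ and $E^{\rm B}(N',Z)$ --- you notice this yourself. Your fallback, ``adding a little density along the Hartree channel using the gap $\Delta_t$'', misidentifies the mechanism: $\Delta_t$ is the gap of $h_t$ on the \emph{orthogonal complement} of $\phi_t$, so it says nothing about perturbations in the condensate direction, and you have not yet established enough structure of the Bogoliubov minimizer to exploit $e'(t)<0$ directly.

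The paper's argument for $N\le Z$ is instead the classical \emph{Newton's theorem} argument, which you do not mention: assuming $E^{\rm B}(N,Z)=E^{\rm B}(N',Z)$ with $N'<N\le Z$, take a minimizer of the relaxed problem at $N'$, perturb $\gamma\mapsto\gamma+\eps|\varphi_L\rangle\langle\varphi_L|$ with $\varphi_L$ radially symmetric and supported in $\{|x|\ge L\}$, and compute the first variation. By Newton's theorem the test particle sees the effective potential $-(Z-N')L^{-1}+O(L^{-2})+o(L^{-1})$, which is strictly negative for large $L$ since $N'<Z$ --- contradiction. This is elementary and requires no information about the minimizer beyond its existence. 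For the range $Z<N<t_cZ$ the paper does use a perturbation of the condensate, $\phi\mapsto\sqrt{1+\eps}\,\phi$, but to close the argument it needs the a priori estimates $\Tr(P^\bot\widetilde\gamma)\le C$, $\langle\phi,h_{t,Z}\phi\rangle=o(Z^2)$, and $D(\rho_{\widetilde\gamma}-|\phi_{t,Z}|^2,\rho_{\widetilde\gamma}-|\phi_{t,Z}|^2)=o(Z^2)$, which come from the full second-order energy analysis (Theorem~\ref{thm:GSE-Bogoluibov-bosonic-atom} and Lemma~\ref{le:estimate-ground-state}); the contradiction is then with $te'(t)<0$. Your proposal does not supply either ingredient.
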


\begin{theorem}[Bogoliubov ground state energy]\label{thm:GSE-Bogoluibov-bosonic-atom} If $Z\to \infty$ and $N/Z=t \in (0,t_c)$ then 
\[
E^{\rm B} (N,Z) = Z^3 e(t) + Z^2 \mu (t) + o(Z^2 )
\]
where
\bqq 
\mu (t) := \mathop {\inf }\limits_{(\gamma ,\alpha ) \in \G^{\rm B} } \left[ { \Tr[h_t \gamma ] + \operatorname{Re} \iint\limits_{\mathbb{R}^3  \times \mathbb{R}^3 } {\frac{{[\gamma (x,y) + \alpha (x,y)]\phi _t (x)\phi _t (y)}}
{{|x - y|}}dxdy }} \right] .
\eqq
The coefficient $\mu(t)$ is finite and satisfies the lower bound
\[\mu (t) \leqslant {t^{ - 1}}e(t) - e'(t) + \widetilde \mu (t)\]
where
\[
\widetilde \mu (t) := \mathop {\min }\limits_{(\gamma ',\alpha ') \in {G^B},\gamma '{\phi _t} = 0} \left\{ {\operatorname{Tr} [{h_t}\gamma '] +\operatorname{Re} \iint\limits_{{\mathbb{R}^3} \times {\mathbb{R}^3}} {\frac{{[\gamma '(x,y) +  \alpha '(x,y)]{\phi _t}(x){\phi _t}(y)}}{{|x - y|}}}} \right\} <0 .\]
\end{theorem}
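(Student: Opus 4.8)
The plan is to obtain the expansion $E^{\rm B}(N,Z)=Z^3e(t)+Z^2\mu(t)+o(Z^2)$ by matching upper and lower bounds, and to read the three properties of $\mu(t)$ off the constructions. The heuristic is that a near-optimal Bogoliubov state consists of a condensate $\phi$ close to the rescaled Hartree minimizer $\phi_{t,Z}(x)=Z^2\phi_t(Zx)$ (carrying almost all of the mass $N=tZ$) plus a quasi-free fluctuation $(\gamma,\alpha)$ at the same length scale $Z^{-1}$ carrying $O(1)$ particles. Writing $\gamma(x,y)=Z^3\widetilde\gamma(Zx,Zy)$, $\alpha(x,y)=Z^3\widetilde\alpha(Zx,Zy)$ and using $\E^{\rm H}(u,Z)=Z^3\E^{\rm H}(u_1,1)$ under $u(x)=Z^2u_1(Zx)$, every contribution to $\E^{\rm B}$ that survives at order $Z^2$ collapses onto $\Tr[h_t\widetilde\gamma]+\operatorname{Re}\iint[\widetilde\gamma+\widetilde\alpha](x,y)\phi_t(x)\phi_t(y)|x-y|^{-1}$, while the purely quadratic self-interaction terms $D(\rho_\gamma,\rho_\gamma)$, $X(\gamma,\gamma)$, $X(\alpha,\alpha)$ are only $O(Z)$. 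It is convenient to write $K_t$ for the positive Hilbert--Schmidt operator with kernel $\phi_t(x)\phi_t(y)|x-y|^{-1}$, so the functional defining $\mu(t)$ is $\Tr[(h_t+K_t)\gamma]+\operatorname{Re}\Tr[K_t\alpha]$, and to record the identity $\langle\phi_t,K_t\phi_t\rangle=2(te'(t)-e(t))$, obtained from $\langle\phi_t,h_t\phi_t\rangle=0$.

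\textbf{Upper bound and $\mu(t)<\infty$.} Fix $\eps>0$ and a near-optimizer $(\gamma_1,\alpha_1)$ for $\mu(t)$ (regularized to have smooth, compactly supported unit-scale kernels), rescale it to scale $Z^{-1}$ to get $(\gamma,\alpha)$, and take $\phi=\phi_{t_Z,Z}$ with $t_Z:=t-\Tr(\gamma_1)/Z$, so $\Tr(\gamma)+\|\phi\|^2=N$. Since $e$ is $C^1$ near $t<t_c$, $Z^3e(t_Z)=Z^3e(t)-Z^2\Tr(\gamma_1)e'(t)+o(Z^2)$, and the term $-Z^2\Tr(\gamma_1)e'(t)$ recombines precisely with the $-e'(t)$ in $h_t$; the cross terms converge to their $\phi_t$-values because $\phi_{t_Z}\to\phi_t$ in $H^1$. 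Evaluating $\E^{\rm B}$ and letting $\eps\to0$ gives $E^{\rm B}(N,Z)\le Z^3e(t)+Z^2\mu(t)+o(Z^2)$; in particular $\mu(t)<\infty$.

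\textbf{Lower bound.} For any $(\gamma,\alpha,\phi)$ with $\Tr(\gamma)+\|\phi\|^2=N$, expanding $\E^{\rm B}$ gives an exact identity of the form
\[
\E^{\rm B}(\gamma,\alpha,\phi,Z)=\E^{\rm H}(\phi,Z)+Z^2e'(t)\Tr(\gamma)+\Tr[(h_{t,Z}+K_{t,Z})\gamma]+\operatorname{Re}\Tr[K_{t,Z}\alpha]+(\text{nonneg.})-\mathcal R,
\]
where $K_{t,Z}$ is the $\phi_{t,Z}$-analogue of $K_t$ and $\mathcal R$ gathers the errors of replacing $|\phi|^2$, $\overline\phi\,\overline\phi$ by $|\phi_{t,Z}|^2$, $\phi_{t,Z}\phi_{t,Z}$ in the $\gamma$- and $\alpha$-cross terms. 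One checks $\E^{\rm H}(\phi,Z)+Z^2e'(t)\Tr(\gamma)=Z^3e(t)+\langle\phi,h_{t,Z}\phi\rangle+D(\rho_\phi-\rho_{\phi_{t,Z}},\rho_\phi-\rho_{\phi_{t,Z}})\ge Z^3e(t)$ (the last step being convexity of $e$ and $\|\phi\|^2\le tZ$), while by scaling and the very definition of $\mu(t)$, $\Tr[(h_{t,Z}+K_{t,Z})\gamma]+\operatorname{Re}\Tr[K_{t,Z}\alpha]=Z^2\big(\Tr[(h_t+K_t)\widetilde\gamma]+\operatorname{Re}\Tr[K_t\widetilde\alpha]\big)\ge Z^2\mu(t)$. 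Hence $\E^{\rm B}\ge Z^3e(t)+Z^2\mu(t)+\langle\phi,h_{t,Z}\phi\rangle+D(\rho_\phi-\rho_{\phi_{t,Z}},\rho_\phi-\rho_{\phi_{t,Z}})-\mathcal R$, and everything reduces to $\mathcal R\le\tfrac12\langle\phi,h_{t,Z}\phi\rangle+\tfrac12 D(\rho_\phi-\rho_{\phi_{t,Z}},\rho_\phi-\rho_{\phi_{t,Z}})+\eps\Tr[(-\Delta)\gamma]+o(Z^2)$. This requires a priori estimates, drawn from the energy upper bound together with $h_{t,Z}\ge0$, the gap $h_{t,Z}\ge Z^2\Delta_t P_{\phi_{t,Z}}^\perp$, strict convexity of $e$, and the bound $\alpha\alpha^*\le(1+\|\gamma\|)\gamma$ coming from Lemma~\ref{le:relation-gamma-alpha}; the outcome, in unit-scale variables, is $\Tr(\gamma)=O(1)$, $\Tr[(-\Delta)\gamma]=O(Z^2)$, $\|\phi^\perp\|_{L^2}=o(1)$, $\|\nabla\phi^\perp\|_{L^2}^2=o(Z^2)$, with $\phi^\perp=P_{\phi_{t,Z}}^\perp\phi$. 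With these in hand, rescaling all objects to unit length and estimating $\mathcal R$ term by term via Hardy's inequality, the Hardy--Littlewood--Sobolev inequality and Cauchy--Schwarz shows each piece to be $o(Z^2)$, being quadratic in the small quantities. I expect this error bookkeeping --- and especially obtaining $\Tr(\gamma)=O(1)$, delicate because the $\phi_{t,Z}$-component of $\gamma$ carries no kinetic penalty --- to be the main obstacle, requiring a bootstrap from crude to refined estimates.

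\textbf{The three properties of $\mu(t)$.} That $\mu(t)>-\infty$: up to the divergent but irrelevant additive constant $\tfrac12\Tr(h_t+K_t)$, the functional $\Tr[(h_t+K_t)\gamma]+\operatorname{Re}\Tr[K_t\alpha]$ is the ground state energy of the quadratic Hamiltonian $H_{\A_t}$ whose operator $\A_t$ has diagonal blocks $\tfrac12(h_t+K_t)$ and off-diagonal blocks $\tfrac12K_t$; a completion of squares in its quadratic form gives $\A_t\ge0$, $h_t+K_t$ has a strictly positive spectral gap (since $h_t\ge0$ has $0$ as an isolated eigenvalue with eigenvector $\phi_t$, $K_t\ge0$ is compact, and $K_t\phi_t\ne0$), and the off-diagonal part $K_t$ is Hilbert--Schmidt, so by Theorems~\ref{thm:unitary-implementation}--\ref{thm:var-quad} the diagonalizing Bogoliubov transformation is unitarily implementable and the associated trace difference is finite. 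The bound $\mu(t)\le t^{-1}e(t)-e'(t)+\widetilde\mu(t)$ comes from the trial pair $\gamma=\gamma'+\lambda t^{-1}|\phi_t\rangle\langle\phi_t|$, $\alpha=\alpha'+\beta t^{-1}|\phi_t\rangle\langle\phi_t|$ with $(\gamma',\alpha')$ the minimizer of $\widetilde\mu(t)$ (supported on $\{\phi_t\}^\perp$) and $\beta=-\sqrt{\lambda(1+\lambda)}$: the cross term is linear in $(\gamma,\alpha)$, so the value splits as $\widetilde\mu(t)+t^{-1}(\lambda+\beta)\langle\phi_t,K_t\phi_t\rangle=\widetilde\mu(t)+t^{-1}(\lambda-\sqrt{\lambda(1+\lambda)})\cdot2(te'(t)-e(t))$, tending to $\widetilde\mu(t)-(te'(t)-e(t))/t=\widetilde\mu(t)+t^{-1}e(t)-e'(t)$ as $\lambda\to\infty$; convexity of $e$ with $e(0)=0$ makes $t^{-1}e(t)-e'(t)\le0$, hence $\mu(t)\le\widetilde\mu(t)<0$ once $\widetilde\mu(t)<0$ is known. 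Finally $\widetilde\mu(t)<0$: the kernel $\phi_t(x)\phi_t(y)|x-y|^{-1}$ of $K_t$ is not a multiple of $\phi_t(x)\phi_t(y)$, so $P_{\phi_t}^\perp K_t P_{\phi_t}^\perp\ne0$ and there is a unit $\psi\perp\phi_t$ with $\langle\psi,K_t\psi\rangle>0$; the trial $\gamma'=\eps^2|\psi\rangle\langle\psi|$, $\alpha'=-\eps\sqrt{1+\eps^2}\,|\psi\rangle\langle\psi|$ has value $\eps^2\langle\psi,h_t\psi\rangle-\eps\langle\psi,K_t\psi\rangle(1+O(\eps))<0$ for $\eps$ small.
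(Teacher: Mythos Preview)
Your upper bound and the trial--state arguments for $\mu(t)\le t^{-1}e(t)-e'(t)+\widetilde\mu(t)$ and $\widetilde\mu(t)<0$ are correct and match the paper. The lower bound, however, has a real gap at exactly the place you flag.

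The a priori estimate $\Tr(\gamma)=O(1)$ is \emph{false} for general minimizers (or minimizing sequences) of $E^{\rm B}(N,Z)$, and no bootstrap will recover it. The reason is the one you identify: the $\phi_{t,Z}$--mode of $\gamma$ feels no penalty from $h_{t,Z}$, and the $K_{t,Z}$--contribution of a rank-one pair $\gamma=\lambda P$, $\alpha=-\sqrt{\lambda(1+\lambda)}\,P$ (with $P$ the projection onto $\phi_{t,Z}/\|\phi_{t,Z}\|$) equals $(\lambda-\sqrt{\lambda(1+\lambda)})\langle\phi_{t,Z},K_{t,Z}\phi_{t,Z}\rangle/\|\phi_{t,Z}\|^2$, which tends to a \emph{finite} negative number as $\lambda\to\infty$. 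One checks directly that the trial state $\phi=(1-\lambda/(tZ))^{1/2}\phi_{t,Z}$, $(\gamma,\alpha)=(\lambda P,-\sqrt{\lambda(1+\lambda)}\,P)$ has energy $Z^3e(t)+Z^2(t^{-1}e(t)-e'(t))+O(\lambda^2 Z)$, hence is near-optimal for any $\lambda=o(Z^{1/2})$. So $\Tr\gamma$ may grow like any $o(Z^{1/2})$, and your error term $\mathcal R$ --- which contains e.g.\ $\iint\gamma(x,y)\overline{P^\perp\phi(x)}\phi(y)|x-y|^{-1}$, of size $O(\Tr(\gamma^2)^{1/2}\cdot\|P^\perp\phi\|\cdot\|\nabla\phi\|)$ --- is then only $O(\Tr\gamma\cdot Z^{3/2})$, not $o(Z^2)$.

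The paper avoids this by \emph{not} replacing $\phi$ by $\phi_{t,Z}$ in the cross terms. It keeps the perturbed quadratic form
\[
q_{t,Z}(\gamma,\alpha,\phi)=\Tr[h_{t,Z}\gamma]+\iint\frac{\gamma(x,y)\overline{\phi(x)}\phi(y)}{|x-y|}+\operatorname{Re}\iint\frac{\alpha(x,y)\overline{\phi(x)}\,\overline{\phi(y)}}{|x-y|}
\]
and proves (Lemma~\ref{le:pert-quadratic}) the uniform lower bound $q_{t,Z}(\gamma,\alpha,\phi)\ge\frac{\|P\phi\|^2}{\|\phi_{t,Z}\|^2}Z^2\mu(t)-CZ^{2-1/10}$, valid for \emph{all} $(\gamma,\alpha)\in\G^{\rm B}$. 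The proof splits into $\Tr\gamma\le Z^{1/2-\eps}$ (where your replacement idea works) and $\Tr\gamma\ge Z^{1/2-\eps}$; in the large case one shows, using the gap of $h_t$ as in the proof of $\mu(t)>-\infty$, that $\gamma$ has a single dominant eigenvector $u_1$ with $\lambda_1\|P^\perp u_1\|^2\le C$, peels off $\lambda_1|u_1\rangle\langle u_1|$, and replaces both $u_1$ and $\phi$ by their $P$--components simultaneously. Your decomposition should be reorganized around this dichotomy rather than around a global bound on $\Tr\gamma$.

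A minor point: your argument for $\mu(t)>-\infty$ via Theorems~\ref{thm:unitary-implementation}--\ref{thm:var-quad} is formal because $\Tr(h_t+K_t)=\infty$; those theorems as stated do not yield the implementability of the diagonalizing map for $\A_t$. The paper instead argues directly (Lemma~\ref{le:mu-t-finite}): restrict to pure quasi-free states, diagonalize $\gamma$, use $\lambda_n-\sqrt{\lambda_n(1+\lambda_n)}\ge-\tfrac12$ and the gap $(u_n,h_t u_n)\ge\Delta_t\|P^\perp u_n\|^2$ to show all but one eigenvalue of $\gamma$ are summable, and bound the exceptional one by $-D(u_{i_0}\phi_t,u_{i_0}\phi_t)$.
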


\begin{remark} \begin{itemize}

\item[(i)] If we restrict the Hamiltonian $H_{N,Z}$ into the class of $N$-particle product functions $\Psi_u=u\otimes u \otimes ... \otimes u$ then by scaling $u(x)=(N-1)^{-1/2}Z^2u_0(Zx)$ we have
\bqq
 \inf_{||u||=1} \left\langle {{\Psi _u},{H_{N,Z}}\Psi_u } \right\rangle  &=& \frac{{N{Z^3}}}{{N - 1}} \inf_{||u||_0=(N-1)/Z}{\E^{\rm H}}({u_0},1) =\frac{{N{Z^3}}}{{N - 1}}e\left( {\frac{{N - 1}}{Z}} \right) \hfill \\
   &=& {Z^3}e(t) + {Z^2}[{t^{ - 1}}e(t) - e'(t)]+o(Z^2). 
   \eqq
Because $\mu(t)< t^{-1}e(t)-e'(t)$, the Bogoliubov ground state energy is strictly lower than the lowest energy of the product wave functions at the second oder. 
   
   \item[(ii)] We believe, but do not have a rigorous proof, that the identity $\mu (t) ={t^{ - 1}}e(t) - e'(t) + \widetilde \mu (t)$ holds and a minimizing sequence of $\mu(t)$ is given by
$$
  \gamma  = \lambda \left| {\frac{{{\phi _t}}}{{||{\phi _t}||}}} \right\rangle \left\langle {\frac{{{\phi _t}}}{{||{\phi _t}||}}} \right| + \gamma ',~~
  \alpha  =  - \sqrt {\lambda (1 + \lambda )} \left| {\frac{{{\phi _t}}}{{||{\phi _t}||}}} \right\rangle \left\langle {\frac{{{\phi _t}}}{{||{\phi _t}||}}} \right| + \alpha ' 
$$
with $\lambda\to \infty$, where $(\gamma ',\alpha ')$ is a minimizer for $\widetilde \mu (t)$. In fact, the upper bound $\mu (t) \ge {t^{ - 1}}e(t) - e'(t) + \widetilde \mu (t)$ follows from the heuristic discussion on comparison between Bogoliubov energy and quantum energy below.    
   \end{itemize}
\end{remark}

We {\it conjecture} that the Bogoliubov theory determines the first correction to the quantum energy $E(N,Z)$. 

\begin{conjecture}[First correction to the leading energy]\label{conj:bosonic-atom-energy} If $Z\to \infty$ and $N/Z=t \in (0,t_c)$ then    
\[
E(N,Z)=E^{\rm B} (N,Z)+o(Z^2) = Z^3 e(t) + Z^2 \mu (t) + o(Z^2 ).
\]
\end{conjecture}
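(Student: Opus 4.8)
The plan has two parts. Step~1 is to prove $E(N,Z)=Z^3e(t)+Z^2\nu(t)+o(Z^2)$ with $\nu(t):=t^{-1}e(t)-e'(t)+\widetilde\mu(t)$, by matching an upper bound from a trial state with a lower bound obtained from a Bogoliubov-type reduction of $H_{N,Z}$. Step~2 is to prove $E^{\rm B}(N,Z)=Z^3e(t)+Z^2\nu(t)+o(Z^2)$, which by Theorem~\ref{thm:GSE-Bogoluibov-bosonic-atom} amounts to upgrading the stated bound $\mu(t)\le\nu(t)$ to the identity $\mu(t)=\nu(t)$. Together these give $E(N,Z)=E^{\rm B}(N,Z)+o(Z^2)=Z^3e(t)+Z^2\mu(t)+o(Z^2)$. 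Throughout one works in the rescaled picture $x\mapsto x/Z$, in which the Hartree minimizer becomes $\phi_{t,Z}(x)=Z^2\phi_t(Zx)$ with $\|\phi_{t,Z}\|^2=tZ$, the one-body operator governing the mean-field analysis is $Z^2h_t$, and the interaction is $Z\,|x-y|^{-1}$; the Hartree gap $(h_{t,Z}-Z^2\Delta_t)P_{\phi_{t,Z}}^\perp\ge 0$, valid for $t<t_c$, will be essential. Write $\phi_0:=\phi_{t,Z}/\|\phi_{t,Z}\|$ for the normalized condensate.

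\emph{Upper bound (Step 1).} One builds an $N$-particle trial state directly in $\mathcal H_N$ of the form $\Psi_N=\sum_{k}c_k\,\phi_0^{\otimes(N-2k)}\otimes_s\psi_{2k}$, where the $\psi_{2k}\in\bigotimes^{2k}_{\rm sym}\{\phi_0\}^\perp$ are the components of the quasi-free state attached (via Theorem~\ref{thm:quasi-free-state}) to a near-minimizer $(\gamma',\alpha')$ of $\widetilde\mu(t)$, rescaled to the length $1/Z$ and truncated at $k\lesssim(\log Z)\,\Tr\gamma'$. Evaluating $\langle\Psi_N,H_{N,Z}\Psi_N\rangle$: the terms linear in the excitation operators cancel because $h_{t,Z}\phi_{t,Z}=0$; the zero-excitation part reproduces the product energy $Z^3e(t)+Z^2[t^{-1}e(t)-e'(t)]+o(Z^2)$ computed in the Remark after Theorem~\ref{thm:GSE-Bogoluibov-bosonic-atom}; the quadratic part contributes $Z^2\widetilde\mu(t)+o(Z^2)$; and the cubic, quartic and truncation corrections are $o(Z^2)$. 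This gives $E(N,Z)\le Z^3e(t)+Z^2\nu(t)+o(Z^2)$. (Equivalently, one may take the grand-canonical Bogoliubov minimizer $\U_\phi\U_\V|0\rangle$ and restrict to a thin particle-number shell $|\N-N|\le Z^{1/2}\log Z$ at a cost $o(Z^2)$.)

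\emph{Lower bound (Step 1).} Conjugate $H_{N,Z}$ by the particle-number-preserving unitary that removes the condensate mode $\phi_{t,Z}$, mapping $\mathcal H_N$ onto the bosonic Fock space over $\{\phi_{t,Z}\}^\perp$ truncated to at most $N$ excitations (equivalently, localize that mode by a c-number/coherent substitution). One gets $U_NH_{N,Z}U_N^*=\mathcal C_{N,Z}+\mathbb H_{\rm Bog}+\mathcal E$, where $\mathcal C_{N,Z}=Z^3e(t)+Z^2[t^{-1}e(t)-e'(t)]+o(Z^2)$ is the fully-condensed constant (here the bookkeeping of the $\approx N$ non-excited particles produces exactly the chemical-potential term), $\mathbb H_{\rm Bog}$ is the quadratic Bogoliubov Hamiltonian on $\{\phi_{t,Z}\}^\perp$ built from $Z^2h_t$ and the pairing kernel $Z\,\phi_{t,Z}(x)\phi_{t,Z}(y)/|x-y|$, whose ground-state energy is $Z^2\widetilde\mu(t)$ by Theorem~\ref{thm:var-quad} and the definition of $\widetilde\mu(t)$, and $\mathcal E$ collects the cubic terms $a^*a^*a$ and adjoints, the genuine quartic excitation term, and the cutoff corrections. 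The remaining task is to show $\langle\Psi,\mathcal E\,\Psi\rangle=o(Z^2)$ on low-energy states: the Hartree gap $Z^2\Delta_t$ gives an a priori bound $\langle\N_+\rangle\lesssim(\text{errors})/(Z^2\Delta_t)$ on the number of excitations $\N_+:=\N-a^*(\phi_0)a(\phi_0)$ (complete BEC with a rate), Hardy/Sobolev inequalities make $|x-y|^{-1}$ relatively form-bounded by the kinetic energy so as to dominate the cubic term by $o(1)\cdot(\text{kinetic}+Z^2)\,\N_+^{1/2}$ and the quartic term by $o(1)\cdot Z^2\N_+$, and the positive type of the Coulomb kernel (an Onsager-type bound) absorbs the large condensate–excitation interaction. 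This would yield $E(N,Z)\ge Z^3e(t)+Z^2\nu(t)-o(Z^2)$.

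\emph{Main obstacle.} The crux is the lower bound of Step~1, and inside it the control of $\mathcal E$ at the sharp order $o(Z^2)$. Two points are delicate: first, establishing complete condensation into $\phi_{t,Z}$ with a rate good enough that $Z^2\langle\N_+\rangle=o(Z^2)$, which needs a careful localization of the Coulomb interaction near the nucleus (scale $1/Z$) together with a second-moment bound $\langle\N_+^2\rangle$; second, the cubic terms, which are of borderline size and are not absorbed by the gap alone — one likely needs a further near-identity Bogoliubov rotation removing them, or a splitting of Fock space according to ``few'' versus ``many'' excitations. A smaller but genuine issue is Step~2: even granting Step~1 one has only $E(N,Z)=Z^3e(t)+Z^2\nu(t)+o(Z^2)$, and to read off the conjecture as stated one must also prove $E^{\rm B}(N,Z)\ge Z^3e(t)+Z^2\nu(t)-o(Z^2)$, i.e. the identity $\mu(t)=\nu(t)$; the obstruction there is precisely the $\lambda\to\infty$ degeneracy of the minimizing sequence of $\mu(t)$ noted after Theorem~\ref{thm:GSE-Bogoluibov-bosonic-atom}, which must be tamed by a separate lower bound on the Bogoliubov functional.
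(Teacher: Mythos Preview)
The statement is a \emph{conjecture} in the paper, and the paper does not prove it; Section~2.5 offers only a heuristic argument, explicitly noting that ``some technical work is still needed to make the argument rigorous.'' Your proposal is likewise a strategy outline with acknowledged gaps, so the honest comparison is between two incomplete sketches rather than between a proposal and a proof.

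Your outline broadly parallels the paper's heuristic but is more detailed and in one respect stronger. For the upper bound, the paper relies on $E^{\rm B}(N,Z)\ge E^{\rm g}(N,Z)$ together with the \emph{unproven} convexity of $N\mapsto E(N,Z)$ to pass from the grand canonical to the canonical energy; your direct trial-state construction in $\mathcal H_N$ (or the particle-number localization of the Bogoliubov vacuum) sidesteps this. For the lower bound, the paper proceeds by classical $c$-number substitution $a_0^{\#}\to\sqrt{N_0}$ inside the second-quantized Hamiltonian, whereas you use the excitation-space unitary $U_N$; these are equivalent in spirit, and both land on the quadratic Hamiltonian whose ground-state energy is $Z^2\widetilde\mu(t)$ with the constant $Z^2[t^{-1}e(t)-e'(t)]$ coming from the condensate bookkeeping --- exactly as in the paper's Steps~2--5. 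The obstacles you flag (rate of condensation sufficient for $\langle\N_+\rangle=o(1)$ and the borderline cubic terms) are precisely the points the paper glosses over with ``$\approx$'' in its Steps~3--4, so your assessment of where the real work lies is accurate.

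Your Step~2, the identity $\mu(t)=\nu(t)$, is also left open in the paper: Remark~(ii) after Theorem~\ref{thm:GSE-Bogoluibov-bosonic-atom} states the authors ``believe, but do not have a rigorous proof'' of it, and they note (as you do) that a rigorous lower bound $E(N,Z)\ge Z^3e(t)+Z^2\nu(t)+o(Z^2)$ combined with $E^{\rm B}(N,Z)\ge E(N,Z)$ would yield the missing inequality $\mu(t)\ge\nu(t)$. So neither you nor the paper closes this gap; you have, however, correctly isolated it as a separate problem about the Bogoliubov functional itself, whose difficulty is the non-compactness of the minimizing sequence in the $\lambda\to\infty$ direction along the condensate mode.
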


A heuristic discussion supporting the conjecture is made in the last subsection of the article. While the picture is rather clear, some technical work is still needed to make the argument rigorous. 

\subsection{Existence of Bogoliubov minimizers}

To prove the first claim of Theorem \ref{thm:existence-bosonic-atom}, we shall follow the extending variational argument (see e.g. \cite{LL01}, Theorem 11.12). Before studying the variational problem $E^{\rm B}(N,Z)$ in  (\ref{eq:Bogoliubov-variational-problem-bosonic-atom}), we start by considering the extended problem with the constraint $\operatorname{Tr} (\gamma ) \le N$, namely
\bq \label{eq:extended-variational-problem}
E^{\rm B}(_\le N,Z) = \inf \left\{ {\E^{\rm B} (\gamma ,\alpha ,\phi ,Z) |(\gamma,\alpha,\phi) \in \G^{\rm B},  \Tr(\gamma)+  ||\phi ||^2\le N} \right\}.
\eq 

\begin{lemma}[Extended problem]\label{le:extended-problem} The ground state energy $E^{\rm B}( N,Z)$ is finite and decreasing on $N$. Moreover, the extended variational problem $E^{\rm B}(_\le N,Z)$ in (\ref{eq:extended-variational-problem})  always has a minimizer.
\end{lemma}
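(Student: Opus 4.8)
The plan is to prove Lemma~\ref{le:extended-problem} by the direct method of the calculus of variations, exploiting the relaxation afforded by the inequality constraint $\operatorname{Tr}(\gamma)+\|\phi\|^2\le N$. First I would establish that $E^{\rm B}(_\le N,Z)$ is finite. Finiteness from above is trivial (take $\gamma=0$, $\alpha=0$, $\phi=0$). For the lower bound I would use the standard estimates for atomic-type energy functionals: the attractive Coulomb term $-Z\operatorname{Tr}(|x|^{-1}\widetilde\gamma)$ is controlled by the kinetic term $\operatorname{Tr}(-\Delta\widetilde\gamma)$ via Hardy's inequality (or Kato's inequality $|x|^{-1}\le \varepsilon(-\Delta)+C\varepsilon^{-1}$), while the direct term $D(\rho_{\widetilde\gamma},\rho_{\widetilde\gamma})$ is nonnegative and the exchange terms $X(\gamma,\gamma)$, $X(\alpha,\alpha)$ as well as the $\phi$-dependent cross terms can be absorbed using the relation (\ref{eq:relation-gamma-alpha}) between $\gamma$ and $\alpha$ together with Cauchy--Schwarz; all of this together with $\operatorname{Tr}(\widetilde\gamma)\le N$ gives a bound $\E^{\rm B}\ge -C(N,Z)>-\infty$. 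Monotonicity in $N$ is immediate since enlarging $N$ enlarges the feasible set.

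Next I would run the direct method. Take a minimizing sequence $(\gamma_n,\alpha_n,\phi_n)$ with $\operatorname{Tr}(\gamma_n)+\|\phi_n\|^2\le N$. From the coercivity estimate just established, the kinetic energies $\operatorname{Tr}(-\Delta\gamma_n)$, $\operatorname{Tr}(-\Delta|\phi_n\rangle\langle\phi_n|)=\|\nabla\phi_n\|^2$ are uniformly bounded, so after passing to a subsequence we get $\phi_n\wto \phi$ weakly in $H^1(\R^3)$, $\gamma_n\wto\gamma$ in the appropriate weak-$*$ topology (e.g. weak-$*$ as trace-class operators, with $(-\Delta)^{1/2}\gamma_n(-\Delta)^{1/2}$ bounded in trace class), and $\alpha_n\wto\alpha$ weakly as Hilbert--Schmidt operators (using that (\ref{eq:relation-gamma-alpha}) bounds $\alpha_n\alpha_n^*$ in trace norm by $\gamma_n(1+\gamma_n)$, hence $\alpha_n$ is bounded in Hilbert--Schmidt norm). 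One then checks that the limit $(\gamma,\alpha,\phi)$ lies in $\G^{\rm B}$: the conditions $\gamma\ge0$, $\alpha^*=J\alpha J$ and the operator inequality $\Gamma_{\gamma,\alpha}\ge0$ pass to weak limits since they are expressed by closed convex conditions, and $\operatorname{Tr}(\gamma)+\|\phi\|^2\le N$ survives by weak lower semicontinuity of the trace and the norm --- this is precisely where the \emph{inequality} constraint is essential, as the value could drop in the limit. Finally, lower semicontinuity of the functional: the kinetic term is weakly lower semicontinuous, the direct term $D(\rho_{\widetilde\gamma},\rho_{\widetilde\gamma})$ is weakly lower semicontinuous because it is a nonnegative quadratic form that is continuous for the relevant convergence of densities (local compactness of $H^1$ gives $\rho_{\widetilde\gamma_n}\to\rho_{\widetilde\gamma}$ strongly in $L^{6/5}_{\rm loc}$, which handles the Coulomb kernel), and the remaining terms --- the attractive potential term, the exchange terms, and the cross terms --- are continuous under the combination of local strong convergence and the Hilbert--Schmidt bound on $\alpha_n$.

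The main obstacle I expect is the lack of compactness at infinity: the Coulomb attraction is long-range and mass can escape to spatial infinity, so weak limits need not be minimizers unless one controls this. The clean way around it here is exactly the device the lemma is set up for --- working with the \emph{relaxed} constraint $\operatorname{Tr}(\gamma)+\|\phi\|^2\le N$ rather than equality. Any mass that escapes to infinity carries \emph{nonnegative} energy in the limit (the part of the energy lost at infinity is a bosonic-atom-type energy at some charge $Z'=0$ seen from infinity, i.e. a translation-invariant free-particle problem whose infimum is $0$, or more carefully: the attractive term only helps near the origin, and the pieces at infinity contribute $\ge 0$), so the weak limit $(\gamma,\alpha,\phi)$ satisfies $\E^{\rm B}(\gamma,\alpha,\phi,Z)\le \liminf_n\E^{\rm B}(\gamma_n,\alpha_n,\phi_n,Z)=E^{\rm B}(_\le N,Z)$ while still being admissible for the relaxed problem; hence it is a minimizer. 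Technically one makes this rigorous with a concentration-compactness / IMS-type decomposition or simply by a careful "splitting'' argument $\chi_R\cdot+(1-\chi_R)\cdot$ and a diagonal limit, keeping track that all discarded pieces have asymptotically nonnegative energy. Everything else is routine functional analysis once the coercivity bound from the first paragraph is in hand.
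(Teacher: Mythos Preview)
Your overall strategy is sound and would work, but you are making the argument substantially harder than it needs to be, and there is one small gap.

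\textbf{The gap.} Your ``monotonicity in $N$ is immediate since enlarging $N$ enlarges the feasible set'' proves only that the \emph{relaxed} energy $E^{\rm B}(_\le N,Z)$ is nonincreasing. The lemma asserts that the \emph{equality-constrained} energy $E^{\rm B}(N,Z)$ is decreasing, which is a different statement and requires an argument: given any admissible $(\gamma,\alpha,\phi)$ with total mass $N'<N$, one must produce states of mass exactly $N$ with energy arbitrarily close. The paper does this by adding to $\gamma$ a rank-one piece $|g_\eps\rangle\langle g_\eps|$ with $g_\eps(x)=\eps^{3/2}g(\eps x)$ spread out at infinity, and checking that all terms of the functional are continuous as $\eps\to 0$. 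Once this is established one also gets $E^{\rm B}(N,Z)=E^{\rm B}(_\le N,Z)$.

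\textbf{The main simplification you are missing.} You treat the exchange and cross terms as potentially negative pieces to be absorbed by Cauchy--Schwarz for the lower bound, and then worry (correctly, given that setup) about their behaviour under weak limits, leading you towards IMS localization and concentration-compactness. The paper sidesteps all of this by observing that the \emph{entire} two-body part of $\E^{\rm B}(\gamma,\alpha,\phi,Z)$ is the integral against $|x-y|^{-1}$ of a pointwise nonnegative function: one can rewrite it as $\tfrac12\iint |x-y|^{-1}W$ with
\[
W=\rho_\gamma(x)\rho_\gamma(y)+|\gamma(x,y)|^2+|\alpha(x,y)+\phi(x)\phi(y)|^2+\big[\rho_\gamma(x)|\phi(y)|^2+\rho_\gamma(y)|\phi(x)|^2+2\operatorname{Re}\gamma(x,y)\overline{\phi(x)}\phi(y)\big]\ge 0,
\]
the last bracket being $\sum_i t_i|u_i(x)\phi(y)+\phi(x)u_i(y)|^2$ for $\gamma=\sum_i t_i|u_i\rangle\langle u_i|$. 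This buys two things at once: (i) the lower bound is immediate from the hydrogen bound $-\Delta-Z|x|^{-1}\ge -\tfrac12\Delta-\tfrac{Z^2}{2}$ after simply discarding the two-body part; (ii) lower semicontinuity of the two-body part follows directly from Fatou's lemma once kernels converge pointwise a.e., with no splitting or dichotomy analysis. The only term that genuinely needs convergence (not just lsc) is the attractive one-body term $-Z\int|x|^{-1}\rho_{\widetilde\gamma_n}$, and that converges by $H^1$ local compactness. Your concentration-compactness machinery would also get there, but it is not needed.
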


\begin{proof} 1. By simply ignoring the non-negative two-body interaction and using the hydrogen bound, we have
\bq \label{eq:bosonic-atom-Bogoliubov-hydrogen-bound}
\E^{\rm B}(\gamma,\alpha,\phi,Z)\ge \operatorname{Tr} \left[ {\left( { - \Delta  - \frac{Z}{{|x|}}} \right)\widetilde \gamma } \right] \ge \frac{1}{2} \operatorname{Tr} ( - \Delta \gamma ) - \frac{Z^2N}{2}>-\infty.
\eq

2. Next, we prove that $E^{\rm B}(N',Z)\ge E^{\rm B}( N,Z)$ for $N'<N$. For any trial state $(\gamma,\alpha,\phi)$ with $(\gamma,\alpha)\in \G^{\rm B}$ and $\Tr \gamma + ||\phi||^2=N'$, choose $g\in C_c^\infty(\R^3)$ such that $\Tr(\gamma)+ ||\phi||^2+ ||g||^2=N$ and consider 
$$\gamma_\eps=\gamma+\left| g_\eps  \right\rangle \left\langle g_\eps  \right|$$
where $g_\eps(x)=\eps^{3/2}g(\eps x)$. Then $(\gamma_\eps,\alpha)\in \G^{\rm B}$ and $\Tr(\gamma_\eps)+||\phi||^2=N$. Moreover, since $|\nabla g_\varepsilon| \to 0$ in $L^2(\R^3)$ and $|g_\eps|^2\to 0$ in $L^p(\R^3)$ for any $p>1$, a simple calculation shows that 
$$\E^{\rm B}(\gamma_\eps,\alpha,\phi,Z)\to \E^{\rm B}_Z(\gamma,\alpha,\phi,Z)~~\text{as}~\eps\to 0.$$
This ensures that $E^{\rm B}(N',Z)\ge E^{\rm B}(N',Z)$. 

3. To show that $E^{\rm B}(_\le N,Z)$ has a minimizer, let us take a minimizing sequence $(\gamma_n,\alpha_n,\phi_n)$ for $E^{\rm B}(_\le N,Z)$. The lower bound (\ref{eq:bosonic-atom-Bogoliubov-hydrogen-bound}) ensures that $\operatorname{Tr} ( - \Delta \gamma_n )$ is bounded. Consequently, all of $\left\| {\gamma _n (x,y)} \right\|_{H^1 (\mathbb{R}^3  \times \mathbb{R}^3 )}$, $\left\| {\alpha _n (x,y)} \right\|_{H^{1/2} (\mathbb{R}^3  \times \mathbb{R}^3 )}$ and $\left\| {\phi _n } \right\|_{H^1 (\mathbb{R}^3 )}$ are bounded. By passing to a subsequence if necessary, we may assume that $\gamma_n \wto \gamma,\alpha_n\wto \gamma,\phi_n\wto \phi$ weakly in the corresponding Hilbert spaces, and their kernels converge pointwisely. It is straightforward to check that $(\gamma,\alpha)\in \G^{\rm B}$ and by Fatou's lemma, $\Tr(\gamma)+||\phi||^2 \le N$. 

Fatou'lemma also implies that    
$$\mathop {\lim \inf }\limits_{n \to \infty } \operatorname{Tr} (-\Delta\gamma ) \ge \operatorname{Tr} (-\Delta\gamma ).$$
The two-body interaction part of $\E^{\rm B}(\gamma_n,\alpha_n,\phi_n,Z)$ can be rewritten as
\[\iint {\frac{{W({\gamma _n},{\alpha _n},{\phi _n})}}{{|x - y|}}}\]
where
\[\begin{gathered}
  W({\gamma _n},{\alpha _n},{\phi _n}) = {\rho _{{\gamma _n}}}(x){\rho _{{\gamma _n}}}(y) + |{\gamma _n}(x,y){|^2} + |{\alpha _n}(x,y) + {\phi _n}(x){\phi _n}(y){|^2} \hfill \\
   + \left[ {{\rho _{{\gamma _n}}}(x)|\phi (y){|^2} + {\rho _{{\gamma _n}}}(y)|\phi (x){|^2} + 2\operatorname{Re} ({\gamma _n}(x,y)\overline {{\phi _n}(x)} {\phi _n}(y))} \right] \geqslant 0. \hfill \\ 
\end{gathered} \]
Therefore, we may use Fatou'lemma again to obtain
\[
\lim \inf_{n\to \infty} \iint {\frac{{W({\gamma _n},{\alpha _n},{\phi _n})}}{{|x - y|}}} \geqslant \iint {\frac{{W(\gamma ,\alpha ,\phi )}}{{|x - y|}}}.\]
Finally, because ${\sqrt {\rho _{\gamma_n}  } }\wto \sqrt {\rho _{\gamma}  }$ in $H^1(\R^3)$ we have the convergence
\[
\int\limits_{\mathbb{R}^3 } {\frac{{\rho _{\gamma _n } (x)}}
{{|x|}}dx}  \to \int\limits_{\mathbb{R}^3 } {\frac{{\rho _\gamma  (x)}}
{{|x|}}dx} ~~\text{as}~n\to \infty.
\]
Therefore, we have 
\[
\mathop {\lim \inf }\limits_{n \to \infty } \E^{\rm B}(\gamma _n ,\alpha _n ,\phi _n,Z ) \geqslant \E^{\rm B}(\gamma ,\alpha ,\phi ,Z)
\]
and hence $(\gamma ,\alpha ,\phi )$ is a minimizer for $E^{\rm B}(_\le N,Z)$. 
\end{proof}

We now prove the existence of minimizers for the original problem $E^{\rm B}(N,Z)$. 

\begin{proof}[Proof of Theorem \ref{thm:existence-bosonic-atom}] 1. If $E^{\rm B}(N,Z)<E^{\rm B}(N',Z)$ for all $0<N'<N$ then any minimizer $(\gamma,\alpha,\phi)$ for the extended problem $E^{\rm B}(_\le N,Z)$ must satisfy $\Tr(\gamma)+||\phi||^2=N$, and hence it is a minimizer for $E^{\rm B}(N,Z)$.

2. That $E^{\rm}(N,Z)$ is strictly decreasing on $N\in [0,Z]$ follows by the same argument as in \cite{FLSS07}. Assume that $E^{\rm B}(N,Z)=E^{\rm B}(N',Z)$ for some $0\le N'<N\le Z$. Let $(\gamma,\alpha,\phi)$ be a minimizer for $E^{\rm B}( _\le N',Z)$. For any $\varphi\in H^1(\R^3)$, let us consider the trial state $(\gamma_\eps,\alpha,\phi)$ with
\[
\gamma _\varepsilon   = \gamma  + \varepsilon \left| \varphi  \right\rangle \left\langle \varphi  \right|,~~\eps>0.
\]
For $\eps>0$ small we have $\Tr \gamma_\eps +||\phi||\le N$ and hence
$$\E^{\rm B}(\gamma _\varepsilon,\alpha,\phi,Z)\ge E^{\rm B}(N,Z)=E^{\rm B}(N',Z)=\E^{\rm B}(\gamma,\alpha,\phi,Z).$$
Therefore,
\bq \label{eq:0<=diffE(gamma-e)}
0\le \left. {\frac{d}
{{d\varepsilon }}} \right|_{\varepsilon  = 0} \E^{\rm B}(\gamma _\varepsilon  ,\alpha ,\phi,Z ) &=& (\varphi , - \Delta \varphi )_{L^2 }  - \int\limits_{\mathbb{R}^3 } {\frac{{Z|\varphi (x)|^2 }}
{{|x|}}dx}  + 2D(\rho _{\widetilde\gamma}  ,|\varphi |^2 )\nn\hfill\\
&~& + 2\operatorname{Re} X(\widetilde\gamma ,\left| \varphi  \right\rangle \left\langle \varphi  \right|).
\eq

On the other hand, let us replace $\varphi$ by $\varphi_L(x):=L^{-3/2}\varphi_1(x/L)$ where $\varphi_1\in H^1(\R^3)$ such that $\varphi_1$ is radially-symmetric and $\varphi_1(x)=0$ if $|x|<1$ and $\varphi_1(x)>0$ if $|x|>1$. Then for large $L$ one has
\bqq
\left\langle {\varphi _L , - \Delta \varphi _L } \right\rangle  &=& L^{ - 2} \left\langle {\varphi_1 , - \Delta \varphi_1 } \right\rangle  = O(L^{ - 2} ),\hfill\\
 - Z\int\limits_{\mathbb{R}^3 } {\frac{{|\varphi _L (x)|^2 }}
{{|x|}}dx}  &=&  - ZL^{ - 1} \int\limits_{\mathbb{R}^3 } {\frac{{|\varphi_1 (x)|^2 }}
{{|x|}}dx}.
\eqq
Moreover, by Newton's theorem,
$$
2D(\rho _{\widetilde\gamma}  ,|\varphi_L|^2 )=\iint\limits_{\R^3\times \R^3 } {\frac{{\rho_\gamma(x)|\varphi_L (y)|^2 }}
{{\max \{ |x|,|y|\}}}dy}\leqslant N'L^{-1} \int\limits_{\mathbb{R}^3 } {\frac{{|\varphi_1 (y)|^2 }}
{{|y|}}dy},
$$
and by H\"older's inequality,
\[
\begin{gathered}
  2\operatorname{Re} X(\widetilde\gamma ,\left| {\varphi _L } \right\rangle \left\langle {\varphi _L } \right|) = \iint\limits_{\mathbb{R}^3  \times \mathbb{R}^3 } {\frac{{\overline {\widetilde\gamma (x,y)} \overline {\varphi _L (x)} \varphi _L (y)}}
{{|x - y|}}dxdy} \hfill \\
   \leqslant \left( {\iint\limits_{|x| \geqslant L,|y| \geqslant L} {|\widetilde \gamma (x,y)|dxdy}} \right)^{1/2} \left( {\iint\limits_{\mathbb{R}^3  \times \mathbb{R}^3 } {\frac{{|\varphi _L (x)|^2 |\varphi _L (y)|^2 }}
{{|x - y|^2 }}dxdy}} \right)^{1/2}  = o(L^{ - 1} ). \hfill \\ 
\end{gathered} 
\]
Thus if we replace $\varphi$ in (\ref{eq:0<=diffE(gamma-e)}) by $\varphi_L$ then we obtain 
\[
0 \leqslant O(L^{ - 2} ) - (Z - N')L^{ - 1} \int\limits_{\mathbb{R}^3 } {\frac{{|\varphi_1 (x)|^2 }}
{{|x|}}dx}  + o(L^{ - 1} )
\]
which is a contradiction to the assumption $N'<Z$. Thus $N \mapsto E^{\rm B}(N,Z)$ is strictly decreasing when $0<N\le Z$.

3. Now we show that $E^{\rm B}(Z,N)$ is strictly decreasing on $N\in [Z,N_c(Z)]$ with $$\liminf_{Z\to \infty} N_c(Z)/Z\ge t_c\approx 1.21$$
We shall need some properties of the Bogoliubov ground state in Lemma \ref{le:estimate-ground-state}, which is derived in the next section. 

Take a large number $Z$ and assume that $N\mapsto E^{\rm B}(N,Z)$ is not strictly decreasing on $Z\le t'Z$ for a fixed value $t'<t_c$. Then there exists $N=tZ \in [Z,t'Z]$ and $\delta>0$ such that $E^{\rm B}(N,Z)=E^{\rm B}(N+\delta,Z)$ and $E^{\rm B}(N,Z)$ has a ground state $(\gamma,\alpha,\phi)$. Because
$$ \E^{\rm B}(\gamma,\alpha,\phi,Z)=E^{\rm B}(N,Z)=E^{\rm B}(N+\delta,Z)\le \E^{\rm B}(\gamma,\alpha,\sqrt{1+\eps}\phi,Z)$$
for $\eps>0$ small, we have
\bqq
  0  &\le & \left. {\frac{d}
{{d\varepsilon }}} \right|_{\varepsilon  = 0} \E^{\rm B}(\gamma ,\alpha ,\sqrt {1 + \varepsilon } \phi,Z ) \hfill \\
   &=& \left\langle {\phi ,\left( { - \Delta  - \frac{Z}{{|x|}} + {\rho _{\widetilde \gamma }}*|.{|^{ - 1}}} \right)\phi } \right\rangle  + \iint {\frac{{\gamma (x,y)\overline {\phi (x)} \phi (y)}}{{|x - y|}}} + \operatorname{Re} \iint {\frac{{\alpha (x,y)\overline {\phi (x)} \overline {\phi (y)} }}{{|x - y|}}}.
\eqq
Because 
$$ \E^{\rm B}(\gamma,\alpha,\phi)=E^{\rm B}(N,Z) \le E^{\rm H}(N,Z)\le \operatorname{Tr} \left[ {\left( { - \Delta  - \frac{Z}{{|x|}}} \right)\widetilde \gamma } \right] + D({\rho _{\widetilde \gamma }},{\rho _{\widetilde \gamma }})$$
we get
\[\iint {\frac{{\gamma (x,y)\overline {\phi (x)} \phi (y)}}{{|x - y|}}} + \operatorname{Re} \iint {\frac{{\alpha (x,y)\overline {\phi (x)} \overline {\phi (y)} }}{{|x - y|}}} \leqslant 0.\]
Thus
\[0 \leqslant \left\langle {\phi ,\left( { - \Delta  - \frac{Z}{{|x|}} + {\rho _{\widetilde \gamma }}*|.{|^{ - 1}}} \right)\phi } \right\rangle  = \left\langle {\phi ,{h_{t,Z}}\phi } \right\rangle  + 2D({\rho _{\widetilde \gamma }} - |{\phi _{t,Z}}{|^2},|\phi {|^2}) + e'(t){Z^2}||\phi |{|^2}.\]

On the other hand, using the estimates in Lemma \ref{le:estimate-ground-state} we have
\bqq
  \left\langle {\phi ,{h_{t,Z}}\phi } \right\rangle  &=& o({Z^2}), \hfill\\
e'(t){Z^2}||\phi |{|^2} &\leqslant & e'(t){Z^2}(tZ + o(Z)) = te'(t){Z^3} + o({Z^3}), \hfill \\
  D({\rho _{\widetilde \gamma }} - |{\phi _{t,Z}}{|^2},|\phi {|^2}) &\leqslant & \sqrt {D({\rho _{\widetilde \gamma }} - |{\phi _{t,Z}}{|^2},{\rho _{\widetilde \gamma }} - |{\phi _{t,Z}}{|^2})} .\sqrt {D(|{\phi _{t,Z}}{|^2},|{\phi _{t,Z}}{|^2})}  = o({Z^{5/2}}). \hfill 
\eqq  
Therefore,
\[0 \leqslant \left\langle {\phi ,{h_{t,Z}}\phi } \right\rangle  + 2D({\rho _{\widetilde \gamma }} - |{\phi _{t,Z}}{|^2},|\phi {|^2}) + e'(t){Z^2}||\phi |{|^2} \leqslant te'(t){Z^3} + o({Z^3}).\]
However, it is a contradiction because $te'(t)<0$ when $1\le t\le t'<t_c$. 
\end{proof}

\subsection{Analysis of quadratic forms}

We consider the minimization problem $\mu(t)$ of the quadratic form in Theorem \ref{thm:GSE-Bogoluibov-bosonic-atom}. Recall that
\bqq 
\mu (t) := \mathop {\inf }\limits_{(\gamma ,\alpha ) \in \G^{\rm B} } q_t(\gamma,\alpha) ~~\text{and} ~\widetilde \mu(t) := \mathop {\inf }\limits_{(\gamma' ,\alpha' ) \in \G^{\rm B},\gamma ' \phi_t =0 } q_t(\gamma',\alpha') 
\eqq
where 
$$
q_t(\gamma,\alpha):=\left[ { \Tr[h_t \gamma ] + \operatorname{Re} \iint\limits_{\mathbb{R}^3  \times \mathbb{R}^3 } {\frac{{[\gamma (x,y) + \alpha (x,y)]\phi _t (x)\phi _t (y)}}
{{|x - y|}}dxdy }} \right] .
$$
\begin{lemma}[Analysis of the quadratic form $q_t(\gamma,\alpha)$]\label{le:mu-t-finite} For any $0<t<t_c$ we have 
$$-\infty<\mu(t)\le t^{-1}e(t)-e'(t)+\widetilde \mu(t).$$
Moreover, the minimization problem $\widetilde \mu(t)$ has a minimizer $(\gamma',\alpha')$ and $\widetilde \mu(t)<0$.
\end{lemma}
\begin{proof} 1. Because $q_t(\gamma,\alpha)$ is a quadratic form of $(\gamma,\alpha)$, for considering the ground state energy we may restrict $(\gamma,\alpha)$ into the class of quasi-free pure state, i.e. $\alpha\alpha^*=\gamma(1+\gamma)$. Since $\gamma\ge 0$ is trace class and $\alpha^T=\alpha$, we can write  
\[
\gamma (x,y)  = \sum\limits_n {\lambda _n {u_n(x)}\overline{u_n(y)}},~ \alpha (x,y)  =   -\sum\limits_n {\sqrt {\lambda _n (1 + \lambda _n )} u_n(x)u_n(y)} ,
\]
where $\lambda_n\ge 0$ and $\{u_n\}_n$ is an orthonormal family on $L^2(\R^3)$. Then
\bqq
  q_t (\gamma ,\alpha) =\sum\limits_n [\lambda _n (u_n ,h_t u_n ) + A_n]
\eqq
with
\bqq
  A_n  =  \lambda _n \iint\limits_{\mathbb{R}^3  \times \mathbb{R}^3 } {\frac{{u_n (x)\overline {u_n (y)}  {\phi_t(x)} \phi_t (y)}}
{{|x - y|}}} -\sqrt {\lambda _n (1 + \lambda _n )} \operatorname{Re} \iint\limits_{\mathbb{R}^3  \times \mathbb{R}^3 } {\frac{{u_n (x)u_n (y) {\phi_t (x)\phi_t (y)} }}
{{|x - y|}}}.
\eqq

2. We may assume that $\lambda_n (u_n,h_{t,Z}u_n)+A_n \le 0$ for all $n$; otherwise, if $\lambda_n (u_n,h_{t,Z}u_n)+A_n<0$ then 
$$q_{t}(\gamma,\alpha)> q_{t}(\gamma ',\alpha')$$
where
$$\gamma'=\gamma-\lambda_n \left| {u_n } \right\rangle \left\langle {u_n } \right| ,\alpha'=\alpha +\sqrt{\lambda_n(1+\lambda_n)} \left| {u_n } \right\rangle \left\langle {\overline{u_n} } \right|.
$$

We have the gap $(u_n,h_{t}u_n) \ge \Delta_t||P^{\bot}u_n||^2$ for all $n$. Moreover, $|{\rm Re}(D(\overline{u},u))|\le D(u,u)$ for all functions $u$ we have   
\bq \label{eq:estimate-mu-t-An}
A_n\ge 2D(u_n \phi _t ,u_n \phi _t )(\lambda _n  - \sqrt {\lambda _n (1 + \lambda _n )} )\ge 2D(u_n \phi _t ,u_n \phi _t ) \max \left\{ { - \frac{1}
{2}, - \sqrt {\lambda _n } } \right\}.
\eq
Thus it follows from the assumption $\lambda_n (u_n,h_{t,Z}u_n)+A_n \le 0$ that
\bq \label{eq:estimate-mu-t-h-An-1}
\Delta_t^2 \lambda _n ||P^ \bot  u_n ||^4  \leqslant 4|D(u_n \phi _t ,u_n \phi _t )|^2~~\text{for all}~n.
\eq

On the other hand, observe that  
$$||P^{\bot}u_n||^2+||P^{\bot}u_m||^2 =2-||Pu_n||^2-||Pu_m||^2 \ge 1~~\text{for all}~m\ne n.$$
Therefore, there exists (at most) an element $i_0$ such that $||P^{\bot}u_n||^2\ge 1/2$ for all $n\ne i_0$. As a consequence, (\ref{eq:estimate-mu-t-h-An-1}) implies that
$$ \sum_{n\ne i_0} \lambda_n \le 16\Delta_t^{-2}\sum_{n\ne i_0}|D(u_n \phi _t ,u_n \phi _t )|^2\le 4\Delta_t^{-2} \iint\limits_{\mathbb{R}^3  \times \mathbb{R}^3 } {\frac{{|\phi _t (x)|^2 |\phi _t (y)|^2 }}
{{|x - y|^2 }}dxdy} \le C.$$

3. Using $h_t\ge 0$ and (\ref{eq:estimate-mu-t-An}) we have
\bqq q_t(\gamma,\alpha) &\ge & A_{i_0}+\sum_{n\ne i_0}A_n 
\ge -D(u_{i_0}\phi_t,u_{i_0}\phi_t) -\sum_{n\ne i_0}2\sqrt{\lambda_n}D(u_{n}\phi_t,u_{n}\phi_t)\hfill\\
&\ge & -D(u_{i_0}\phi_t,u_{i_0}\phi_t) -2\left( {\sum_{n\ne i_0} \lambda_n} \right)^{1/2}\left( {\sum_{n\ne i_0} |D(u_n\phi_t,u_n\phi_t) |^2} \right)^{1/2}\hfill\\
&\ge & -\frac{1}{2} \left( { \iint\limits_{\mathbb{R}^3  \times \mathbb{R}^3 } {\frac{{|\phi _t (x)|^2 |\phi _t (y)|^2 }}
{{|x - y|^2 }}} } \right)^{1/2}- 2\Delta_t^{-1} \left( { \iint\limits_{\mathbb{R}^3  \times \mathbb{R}^3 } {\frac{{|\phi _t (x)|^2 |\phi _t (y)|^2 }}
{{|x - y|^2 }}} } \right) \ge -C.
\eqq

4. To see the upper bound on $\mu(t)$ let us consider the trial state 
$$
  \gamma  = \lambda \left| {\frac{{{\phi _t}}}{{||{\phi _t}||}}} \right\rangle \left\langle {\frac{{{\phi _t}}}{{||{\phi _t}||}}} \right| + \gamma ',~~
  \alpha  =  - \sqrt {\lambda (1 + \lambda )} \left| {\frac{{{\phi _t}}}{{||{\phi _t}||}}} \right\rangle \left\langle {\frac{{{\phi _t}}}{{||{\phi _t}||}}} \right| + \alpha ' 
$$
 where $(\gamma',\alpha')\in \G^{\rm B}$ such that $\gamma' \phi_t=0$. One has
\bqq
  \mu (t) &\leqslant & {q_t}(\gamma ,\alpha ) = 2\left( {{\lambda } - \sqrt {{\lambda }(1 + {\lambda})} } \right) D({u_1}{\phi _t},{u_1}{\phi _t})+q_t(\gamma',\alpha').
\eqq
Taking the infimum over all $(\gamma',\alpha')$ and letting  $\lambda\to \infty$ we obtain
$$
\mu (t) \leqslant  - t^{-1}D(|\phi _t|^2, |\phi _t|^2)+\widetilde \mu (t) = {t^{ - 1}}e(t) - e'(t)+\widetilde \mu (t).
$$ 

5. Now we consider $\widetilde \mu (t)$. The above argument shows that if $\{(\gamma_n',\alpha_n')\}_{n=1}^\infty$ is a minimizing sequence for $\widetilde \mu (t)$ then $\Tr(\gamma_n')$ is bounded. Therefore, it follows from the standard compactness argument that $\widetilde \mu (t)$ has a minimizer. To see that $\widetilde \mu (t)<0$, let us consider
\[\gamma ' = \lambda '\left| u \right\rangle \left\langle u \right|,~~\alpha ' =  - \sqrt {\lambda '(1 + \lambda ')} \left| u \right\rangle \left\langle u \right|\]
where $u$ is a normalized real-valued function in $L^2(\R^3)$ such that $(u,\phi_t)=0$. Because $D({u}{\phi _t},{u}{\phi _t})>0$ we have 
$$ \widetilde \mu (t)\le q_t(\gamma',\alpha')={\lambda'}({u},{h_t}{u}) + 2\left( {{\lambda'} - \sqrt {{\lambda'}(1 + {\lambda '})} } \right)t^{-1}D({u}{\phi _t},{u}{\phi _t})<0$$
for some $\lambda'>0$ small enough.
\end{proof}

\begin{remark} The analysis here works out for a more general setting. For example, if $h$ is a positive semi-definite operator on $L^2(\Omega)$ with $\inf \sigma_{\rm ess}(h)>0$ and $W$ is a positive semi-definite Hilbert-Schmidt operator on $L^2(\Omega)$ with a real-valued kernel $W(x,y)$ then  
\[
\mathop {\inf } \limits_{(\gamma ,\alpha ) \in \G^{\rm B} }{\left( {\Tr[h \gamma ] + \operatorname{Re} \iint\limits_{\mathbb{R}^3  \times \mathbb{R}^3 } {{{(\gamma (x,y) + \alpha (x,y)) W(x,y)}}
dxdy}} \right)
} >-\infty .
\]
\end{remark}

By scaling $\phi_{t,Z}(x)=Z^2\phi_t(Zx)$, $\gamma(x)=Z^3\gamma'(Zx,Zy)$, $\alpha(x)=Z^3\alpha'(Zx,Zy)$ we have
\[
\mathop {\inf }\limits_{(\gamma ,\alpha ) \in \G^{\rm B} } q_{t,Z} (\gamma ,\alpha ) = \mathop {\inf }\limits_{(\gamma ',\alpha ') \in \G^{\rm B} } Z^2 q_{t} (\gamma' ,\alpha' )=Z^2 \mu (t)
\]
where
\[
q_{t,Z} (\gamma ',\alpha ') = \operatorname{Tr} [h_{t,Z} \gamma '] + \iint\limits_{\mathbb{R}^3  \times \mathbb{R}^3 } {\frac{{(\gamma '(x,y) + \alpha '(x,y))\phi _{t,Z} (x)\phi _{t,Z} (y)}}
{{|x - y|}}dxdy}.
\]

To prove Theorem \ref{thm:GSE-Bogoluibov-bosonic-atom}, we need to consider some perturbation form of $q_{t,Z}$. 

\begin{lemma}[Analysis of pertubative quadratic forms]\label{le:pert-quadratic} Let $\phi\in L^2(\R^3)$ such that $||\phi||\le ||\phi_{t,Z}||$, $||\nabla \phi||\le CZ^{3/2}$ and $||P^{\bot}\phi||\le C$ where $P^{\bot}=1-P$ with $P$ being the one-dimensional projection onto $\phi_{t,Z}$. Then for $Z$ large we have
\[
\mathop {\inf }\limits_{(\gamma ,\alpha ) \in \G^{\rm B} } q_{t,Z} (\gamma ,\alpha ,\phi ) \geqslant \frac{{\left\| {P\phi } \right\|^2 }}
{{||\phi _{t,Z} ||^2 }}  Z^2 \mu (t) - CZ^{2 - 1/10 } 
\]
where
\[
q_{t,Z} (\gamma ,\alpha ,\phi ) = \operatorname{Tr} [h_{t,Z} \gamma ] + \iint\limits_{\mathbb{R}^3  \times \mathbb{R}^3 } {\frac{{\gamma (x,y)\overline {\phi (x)} \phi (y)}}
{{|x - y|}}} + \iint\limits_{\mathbb{R}^3  \times \mathbb{R}^3 } {\frac{{\alpha (x,y)\overline {\phi (x)} \overline {\phi (y)} }}
{{|x - y|}}}.
\]
\end{lemma}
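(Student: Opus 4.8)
\emph{Overview and setup.} The point is that $\phi$ is a tiny perturbation of a multiple of the Hartree minimizer: if $\psi:=\phi_{t,Z}/\|\phi_{t,Z}\|$ and $P=\left|\psi\right\rangle\left\langle\psi\right|$, then writing $\phi=a\psi+\eta$ with $a:=\langle\psi,\phi\rangle\in\mathbb C$ and $\eta:=P^{\bot}\phi$, the component along $\psi$ has size $|a|=O(Z^{1/2})$ while the orthogonal part has $\|\eta\|_2=\|P^{\bot}\phi\|\le C$. From the scaling $\phi_{t,Z}(x)=Z^2\phi_t(Zx)$ one reads off $\|\phi_{t,Z}\|^2=tZ$, $\|\nabla\phi_{t,Z}\|_2\le CZ^{3/2}$, $\|\nabla\psi\|_2\le CZ$ and $\|\psi\|_3\le CZ^{1/2}$; together with the hypotheses this gives $|a|\le (tZ)^{1/2}$, $\|\nabla\eta\|_2\le\|\nabla\phi\|_2+|a|\,\|\nabla\psi\|_2\le CZ^{3/2}$, and hence $\|\eta\|_3\le C\|\eta\|_2^{1/2}\|\nabla\eta\|_2^{1/2}\le CZ^{3/4}$ by Gagliardo--Nirenberg. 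These are the only facts about $\phi$ that will be used.

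\emph{Reduction and extraction of the main term.} Since $(\gamma,\alpha)\mapsto q_{t,Z}(\gamma,\alpha,\phi)$ is quadratic, the argument of Step~1 of the proof of Lemma~\ref{le:mu-t-finite} lets us bound the infimum below over quasi-free pure states $\gamma=\sum_n\lambda_n\left|u_n\right\rangle\left\langle u_n\right|$, $\alpha=-\sum_n\sqrt{\lambda_n(1+\lambda_n)}\left|u_n\right\rangle\left\langle \overline{u_n}\right|$ with $\lambda_n\ge0$, $\{u_n\}$ orthonormal. With $g_n:=u_n\overline\phi$, $b_n:=\iint\frac{\overline{g_n(x)}g_n(y)}{|x-y|}\ge0$ and $c_n:=\operatorname{Re}\iint\frac{g_n(x)g_n(y)}{|x-y|}$ (so $|c_n|\le b_n$ by Cauchy--Schwarz for the positive Coulomb form), and using $\lambda_n-\sqrt{\lambda_n(1+\lambda_n)}\in[-\tfrac12,0)$,
\[
q_{t,Z}(\gamma,\alpha,\phi)=\sum_n\bigl[\lambda_n(u_n,h_{t,Z}u_n)+\lambda_n b_n-\sqrt{\lambda_n(1+\lambda_n)}\,c_n\bigr]\ \ge\ \sum_n\bigl[\lambda_n(u_n,h_{t,Z}u_n)+\bigl(\lambda_n-\sqrt{\lambda_n(1+\lambda_n)}\bigr)b_n\bigr].
\]
Expanding $b_n$ through $\phi=a\psi+\eta$ gives $b_n=\theta\beta_n+r_n$ with $\theta:=|a|^2/\|\phi_{t,Z}\|^2=\|P\phi\|^2/\|\phi_{t,Z}\|^2\le1$, $\beta_n:=\iint\frac{\overline{(u_n\phi_{t,Z})(x)}(u_n\phi_{t,Z})(y)}{|x-y|}\ge0$, and $r_n$ collecting the terms with at least one factor of $\eta$. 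Because $\phi_{t,Z}$ is real, $\sum_n[\lambda_n(u_n,h_{t,Z}u_n)+(\lambda_n-\sqrt{\lambda_n(1+\lambda_n)})\beta_n]$ is exactly $q_{t,Z}(\gamma,\alpha)\ge Z^2\mu(t)$ for this pair; using $(u_n,h_{t,Z}u_n)\ge0$ and $\theta\le1$ (so $\lambda_n(u_n,h_{t,Z}u_n)\ge\theta\lambda_n(u_n,h_{t,Z}u_n)$) we conclude
\[
q_{t,Z}(\gamma,\alpha,\phi)\ \ge\ \frac{\|P\phi\|^2}{\|\phi_{t,Z}\|^2}\,Z^2\mu(t)-\sum_n\min\{\tfrac12,\sqrt{\lambda_n}\}\,|r_n|,
\]
and it remains to prove $\sum_n\min\{\tfrac12,\sqrt{\lambda_n}\}|r_n|\le CZ^{2-1/10}$.

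\emph{The error estimate.} Since $(\gamma,\alpha)=(0,0)$ is admissible with value $0$ and $\mu(t)\le\widetilde\mu(t)<0$ (Lemma~\ref{le:mu-t-finite}), it suffices to control the error for pairs with $q_{t,Z}(\gamma,\alpha,\phi)\le0$. For those, feeding $q_{t,Z}(\gamma,\alpha,\phi)\le0$ into the term-by-term inequality above, together with the spectral gap $(h_{t,Z}-Z^2\Delta_t)P^{\bot}\ge0$ and the observation (as in Lemma~\ref{le:mu-t-finite}) that at most one index $i_0$ has $\|P^{\bot}u_n\|^2<\tfrac12$, yields the a priori bound $\sum_{n\ne i_0}\lambda_n\le C$; here one uses the Hilbert--Schmidt estimate $\sum_n b_n^2\le\iint\frac{|\phi(x)|^2|\phi(y)|^2}{|x-y|^2}\le 4\|\phi\|_2^2\|\nabla\phi\|_2^2\le CZ^4$ for the kernel $\phi(x)\overline{\phi(y)}/|x-y|$. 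One then separates the term $n=i_0$, for which Hardy--Littlewood--Sobolev and Hölder give $|r_n|\le C\bigl(|a|\,\|\psi\|_3+\|\eta\|_3\bigr)\|\eta\|_3\le CZ^{7/4}$; for $n\ne i_0$, Cauchy--Schwarz gives $\sum_{n\ne i_0}\min\{\tfrac12,\sqrt{\lambda_n}\}|r_n|\le(\sum_{n\ne i_0}\lambda_n)^{1/2}(\sum_n|r_n|^2)^{1/2}$, and $\sum_n|r_n|^2$ is again controlled by Hilbert--Schmidt norms of Coulomb kernels, $\sum_n|r_n|^2\le C|a|^2\iint\frac{|\psi(x)|^2|\eta(y)|^2}{|x-y|^2}+C\iint\frac{|\eta(x)|^2|\eta(y)|^2}{|x-y|^2}\le C|a|^2\|\eta\|_2^2\|\nabla\psi\|_2^2+C\|\eta\|_2^2\|\nabla\eta\|_2^2\le CZ^{3}$, so the $n\ne i_0$ part is $\le CZ^{3/2}$. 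Altogether $\sum_n\min\{\tfrac12,\sqrt{\lambda_n}\}|r_n|\le CZ^{7/4}\le CZ^{2-1/10}$, which finishes the proof.

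\emph{Where the difficulty lies.} The whole content is in the last step. There is \emph{no} uniform bound on $\operatorname{Tr}\gamma$ for near-minimizers — the eigenvalue in the condensate direction $\psi$ may diverge, exactly as in the minimizing sequences for $\mu(t)$ recalled after Theorem~\ref{thm:GSE-Bogoluibov-bosonic-atom} — so the error cannot be absorbed by any reservoir of $\operatorname{Tr}[h_{t,Z}\gamma]$ and must instead be tamed purely through the orthogonality $\eta\perp\psi$ and the $O(1)$ smallness of $\|\eta\|_2$, by carefully tracking the power of $Z$ carried by each piece (it is the single ``condensate'' index $i_0$ that forces the $Z^{7/4}$ loss). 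The exponent $1/10$ in the statement is far from optimal and merely a comfortable margin for this somewhat crude interpolation; sharper bookkeeping would replace it by any constant $<1/4$.
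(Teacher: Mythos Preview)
Your overall strategy---reduce to quasi-free pure states, write $\phi=a\psi+\eta$, peel off $\theta\,q_{t,Z}(\gamma,\alpha)$ as the main term, and bound the cross terms using the a~priori estimate $\sum_{n\ne i_0}\lambda_n\le C$---is exactly the right picture, and your error bounds (the Hilbert--Schmidt estimates on $\sum_n b_n^2$ and $\sum_n |r_n|^2$, the separation of the single condensate index $i_0$) are correct and efficient. But there is a genuine gap in the extraction of the main term.

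After the preliminary inequality $-\sqrt{\lambda_n(1+\lambda_n)}\,c_n\ge -\sqrt{\lambda_n(1+\lambda_n)}\,b_n$ you arrive at $T=\sum_n[\lambda_n h_n+(\lambda_n-\sqrt{\lambda_n(1+\lambda_n)})b_n]$, and after substituting $b_n=\theta\beta_n+r_n$ you claim that $S:=\sum_n[\lambda_n h_n+(\lambda_n-\sqrt{\lambda_n(1+\lambda_n)})\beta_n]$ ``is exactly $q_{t,Z}(\gamma,\alpha)$''. This is not true: the genuine $\alpha$-contribution in $q_{t,Z}(\gamma,\alpha)$ is $-\sqrt{\lambda_n(1+\lambda_n)}\,\operatorname{Re}\iint |x-y|^{-1}u_n(x)u_n(y)\phi_{t,Z}(x)\phi_{t,Z}(y)$, which equals $-\sqrt{\lambda_n(1+\lambda_n)}\,\beta_n$ only when $u_n$ is real. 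In general your $S$ is only a \emph{lower} bound for $q_{t,Z}(\gamma,\alpha)$, so $q_{t,Z}(\gamma,\alpha)\ge Z^2\mu(t)$ does not give $S\ge Z^2\mu(t)$. The step $|c_n|\le b_n$ is precisely what destroys the identification: it throws away the phase information in $\alpha$ that you need to rebuild $q_{t,Z}(\gamma,\alpha)$.

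The paper circumvents this by \emph{not} applying $|c_n|\le b_n$ globally. It decomposes $\overline{\phi(x)}\phi(y)$ and $\overline{\phi(x)}\,\overline{\phi(y)}$ directly into $P\phi$ and $P^\bot\phi$ parts, so that the leading piece is literally $\theta\,q_{t,Z}(\gamma,\alpha)$ (after absorbing the phase of $a$ into $\alpha$). The price is that the $\alpha$-error terms then carry the unbounded coefficient $\sqrt{\lambda_n(1+\lambda_n)}$, and this is why the paper needs a case distinction on $\operatorname{Tr}\gamma$: for $\operatorname{Tr}\gamma\le Z^{1/2-\varepsilon}$ the Hilbert--Schmidt norm of $\alpha$ is small enough to absorb the error directly, while for $\operatorname{Tr}\gamma\ge Z^{1/2-\varepsilon}$ one isolates the condensate mode $u_1$, uses $|c_1|\le b_1$ \emph{only there}, and exploits $\|P^\bot u_1\|^2\le C/\lambda_1\le CZ^{-1/2+\varepsilon}$ to compare with an explicit $(\gamma'',\alpha'')\in\mathcal G^{\rm B}$ built from $P u_1$ and $P^\bot\gamma'P^\bot$. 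Your uniform treatment is cleaner, but as written it does not close; to make it work you would need an independent proof that $S\ge Z^2\mu(t)$ for arbitrary (complex) orthonormal $\{u_n\}$, and that is not obvious.
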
 

\begin{proof} 1. We first consider the case when $\Tr \gamma$ is small. Assume that $\Tr \gamma \le Z^{1/2-\eps}$, where $\eps=1/10$. In the integral involved with $\gamma$, we use the decomposition
\[
\overline {\phi (x)} \phi (y) = \overline {P\phi (x)} P\phi (y) + \overline {P\phi (x)} P^ \bot  \phi (y) + \overline {P^ \bot  \phi (x)} \phi (y).
\]
Observe that all terms involved with $P^ \bot  \phi$ have negligible contribution. For example,
\[
\left| {\iint\limits_{\mathbb{R}^3  \times \mathbb{R}^3 } {\frac{{\gamma (x,y)\overline {P^ \bot  \phi (x)} \phi (y)}}
{{|x - y|}}}} \right| \leqslant 2\operatorname{Tr} (\gamma _{}^2 )^{1/2} ||P^ \bot  \phi ||.||\nabla \phi || \leqslant CZ^{2 - \varepsilon } .
\]
Thus
\[
\iint\limits_{\mathbb{R}^3  \times \mathbb{R}^3 } {\frac{{\gamma _\varepsilon  (x,y)\overline {\phi (x)} \phi (y)}}
{{|x - y|}}} \geqslant \iint\limits_{\mathbb{R}^3  \times \mathbb{R}^3 } {\frac{{\gamma _\varepsilon  (x,y)\overline {P\phi (x)} P\phi (y)}}
{{|x - y|}}} - CZ^{2 - \varepsilon }. 
\]

Together with the similar bound on the integral  involved with $\alpha$, we arrive at
\bqq
  q_{t,Z} (\gamma ,\alpha ,\phi ) &\geqslant & \left( {1 - \frac{{||P\phi ||^2 }}
{{||\phi ||^2 }}} \right)\operatorname{Tr} [h_{t,Z} \gamma ] + \frac{{||P\phi ||^2 }}
{{||\phi _{t,Z}||^2 }}q_{t,Z} (\gamma ,\alpha ) - CZ^{2 - \varepsilon }  \hfill \\
  & \geqslant & \frac{{||P\phi ||^2 }}
{{||\phi_{t,Z} ||^2 }}Z^2 \mu (t) - CZ^{2 - \varepsilon }  .
\eqq

2. Now we consider the case when $\Tr \gamma $ is large. Assume $\Tr \gamma \ge Z^{1/2-\eps}$. Following the proof of Lemma \ref{le:mu-t-finite}, we may assume that
\[
\gamma  = \lambda _1 \left| {u_1 } \right\rangle \left\langle {u_1 } \right| + \gamma ',\alpha  =  - \sqrt {\lambda _1 (1 + \lambda _1 )} \left| {u_1 } \right\rangle \left\langle {\overline {u_1 } } \right| + \alpha '
\]
where $||u_1||=1$ and $(\gamma',\alpha')$ is the 1-pdm of a pure quasi-free state such that
$$\Tr \gamma' \le C,~\lambda_1 ||P^{\bot}u_1||^2\le C ~~\text{and}~\gamma'u_1=0=\alpha'u_1.$$
Because $\lambda_1=\Tr \gamma-\Tr \gamma'\ge Z^{1/2-\eps}-C$ and $\lambda_1 ||P^{\bot}u_1||^2\le C$, we have 
$$||P^{\bot}u_1||^2\le CZ^{-1/2+\eps}.$$
As a consequence,
$$ \Tr(P\gamma')=\sum_{n\ne 1} \lambda_n ||Pu_n||^2 \le \Tr \gamma' \sum_{n\ne 1} ||Pu_n||^2 \le  \Tr \gamma' ||P^{\bot} u_1||^2 \le Z^{-1/2+\eps}.$$ 

3. We shall compare $q_{t,Z}(\gamma,\alpha,\phi)$ with $q_{t,Z}(\gamma'',\alpha'')$ where
\[
\begin{gathered}
  \gamma '' = {\lambda _1 P\left| {u_1 } \right\rangle \left\langle {u_1 } \right|P + P^ \bot  \gamma 'P^ \bot  }, \hfill \\
  \alpha '' =  - \sqrt {\lambda _1 (1 + \lambda _1 )} P\left| {u_1 } \right\rangle \left\langle {\overline {u_1 } } \right|P + P^ \bot  \alpha 'P^ \bot   .\hfill \\ 
\end{gathered} 
\]
It is easy to see that $(\gamma'',\alpha'')\in \G^{\rm B}$.

We first consider the terms involved with $u_1$. We have
\[
\begin{gathered}
  \lambda _1 \iint\limits_{\mathbb{R}^3  \times \mathbb{R}^3 } {\frac{{u_1 (x)\overline {u_1 (y)} \overline {\phi (x)} \phi (y)}}
{{|x - y|}}} - \sqrt {\lambda _1 (1 + \lambda _1 )} \operatorname{Re} \iint\limits_{\mathbb{R}^3  \times \mathbb{R}^3 } {\frac{{u_1 (x)u_1 (y)\overline {\phi (x)} \overline {\phi (y)} }}
{{|x - y|}}} \hfill \\
   \geqslant (\lambda _1  - \sqrt {\lambda _1 (1 + \lambda _1 )} )\iint\limits_{\mathbb{R}^3  \times \mathbb{R}^3 } {\frac{{u_1 (x)\overline {u_1 (y)} \overline {\phi (x)} \phi (y)}}
{{|x - y|}}}. \hfill \\ 
\end{gathered} 
\]
Then we use the decomposition
\bqq
  u_1 (x)\overline {u_1 (y)}  &=& Pu_1 (x)\overline {Pu_1 (y)}  + Pu_1 (x)\overline {P^ \bot  u_1 (y)}  + P^ \bot  u_1 (x)\overline {u_1 (y)} , \hfill \\
  \overline {\phi (x)} \phi (y) &=& \overline {P\phi (x)} P\phi (y) + \overline {P\phi (x)} P^ \bot  \phi (y) + \overline {P^ \bot  \phi (x)} \phi (y).
  \eqq
Note that all terms involved with either $P^{\bot}u_1$ or $P^{\bot}\phi$ have negligible contribution. For example, we have
\[
\begin{gathered}
  \left| {\iint\limits_{\mathbb{R}^3  \times \mathbb{R}^3 } {\frac{{Pu_1 (x)\overline {Pu_1 (y)} \overline {P^ \bot  \phi (x)} \phi (y)}}
{{|x - y|}}}} \right| \leqslant 2||Pu_1 ||^2 ||P^ \bot  \phi ||.||\nabla P\phi || \leqslant CZ^{3/2} , \hfill \\
  \left| {\iint\limits_{\mathbb{R}^3  \times \mathbb{R}^3 } {\frac{{P^ \bot  u_1 (x)\overline {u_1 (y)} \overline {P\phi (x)} P\phi (y)}}
{{|x - y|}}}} \right| \leqslant 2||P^ \bot  u_1 ||.||u_1 ||.||P\phi ||.||\nabla P\phi || \leqslant CZ^{2 - 1/4 + \varepsilon /2} . \hfill \\ 
\end{gathered} 
\]
Thus
\bq \label{eq:pert-quadratic-u1}
 &~& \lambda _1 \iint\limits_{\mathbb{R}^3  \times \mathbb{R}^3 } {\frac{{u_1 (x)\overline {u_1 (y)} \overline {\phi (x)} \phi (y)}}
{{|x - y|}}} - \sqrt {\lambda _1 (1 + \lambda _1 )} \operatorname{Re} \iint\limits_{\mathbb{R}^3  \times \mathbb{R}^3 } {\frac{{u_1 (x)u_1 (y)\overline {\phi (x)} \overline {\phi (y)} }}
{{|x - y|}}} \nn\hfill \\
   &\geqslant & (\lambda _1  - \sqrt {\lambda _1 (1 + \lambda _1 )} )\iint\limits_{\mathbb{R}^3  \times \mathbb{R}^3 } {\frac{{u_1 (x)\overline {u_1 (y)} \overline {\phi (x)} \phi (y)}}
{{|x - y|}}} \nn\hfill \\
   &\geqslant & (\lambda _1  - \sqrt {\lambda _1 (1 + \lambda _1 )} )\iint\limits_{\mathbb{R}^3  \times \mathbb{R}^3 } {\frac{{Pu_1 (x)\overline {Pu_1 (y)} \overline {P\phi (x)} P\phi (y)}}
{{|x - y|}}} - CZ^{2 - 1/4 + \varepsilon /2}. 
\eq

Next, consider the terms involved with $(\gamma',\alpha')$. In the integral, 
\[
{\iint\limits_{\mathbb{R}^3  \times \mathbb{R}^3 } {\frac{{\gamma '(x,y)\overline {\phi (x)} \phi (y)}}
{{|x - y|}}}}
\]
we use the decomposition
\bqq
  \gamma ' &=& P^ \bot  \gamma 'P^ \bot   + P^ \bot  \gamma 'P + P\gamma ' ,\hfill \\
  \overline {\phi (x)} \phi (y) &=& \overline {P\phi (x)} P\phi (y) + \overline {P^ \bot  \phi (x)} P\phi (y) + \overline {P^ \bot  \phi (x)} \phi (y) .
  \eqq 
Observe that all terms involved with either $P\gamma '$ or $P^ \bot  \phi$ have negligible contribution. For example, we have 
\bqq
  \left| {\iint\limits_{\mathbb{R}^3  \times \mathbb{R}^3 } {\frac{{(P\gamma ')(x,y)\overline {P\phi (x)} P\phi (y)}}
{{|x - y|}}}} \right| &\leqslant & 2[\operatorname{Tr} (P(\gamma ')^2 P)]^{1/2} ||P\phi ||.||\nabla P\phi || \hfill \\
   &\leqslant & 2[\operatorname{Tr} (\gamma ')]^{1/2} [\operatorname{Tr} (P\gamma ')]^{1/2} ||P\phi ||.||\nabla P\phi ||   \leqslant CZ^{2 - 1/8 + \varepsilon /4} 
   \eqq
and
\[
\left| {\iint\limits_{\mathbb{R}^3  \times \mathbb{R}^3 } {\frac{{(P^ \bot  \gamma 'P^ \bot  )(x,y)\overline {P^ \bot  \phi (x)} \phi (y)}}
{{|x - y|}}}} \right| \leqslant 2[\operatorname{Tr} (P^ \bot  (\gamma ')^2 P^ \bot  )]^{1/2} ||P^ \bot  \phi ||.||\nabla P\phi || \leqslant CZ^{3/2} .
\]
Thus
\bq \label{eq:pert-quadratic-gamma'}
\iint\limits_{\mathbb{R}^3  \times \mathbb{R}^3 } {\frac{{\gamma '(x,y)\overline {\phi (x)} \phi (y)}}
{{|x - y|}}} \geqslant \iint\limits_{\mathbb{R}^3  \times \mathbb{R}^3 } {\frac{{(P^ \bot  \gamma 'P^ \bot  )(x,y)\overline {P\phi (x)} P\phi (y)}}
{{|x - y|}}} - CZ^{2 - 1/8 + \varepsilon /4} .
\eq
Similarly we have
\bq \label{eq:pert-quadratic-alpha'}
\iint\limits_{\mathbb{R}^3  \times \mathbb{R}^3 } {\frac{{\alpha '(x,y)\overline {\phi (x)} \phi (y)}}
{{|x - y|}}} \geqslant \iint\limits_{\mathbb{R}^3  \times \mathbb{R}^3 } {\frac{{(P^ \bot  \alpha 'P^ \bot  )(x,y)\overline {P\phi (x)} P\phi (y)}}
{{|x - y|}}} - CZ^{2 - 1/8 + \varepsilon /4} .
\eq

Putting (\ref{eq:pert-quadratic-u1}), (\ref{eq:pert-quadratic-gamma'}), (\ref{eq:pert-quadratic-alpha'}) together and using the fact $h_{t,Z}\ge 0$ and $h_{t,Z}P=0$, we obtain
\[
q_{t,Z} (\gamma ,\alpha ,\phi ) \geqslant \frac{{\left\| {P\phi } \right\|^2 }}
{{||\phi _{t,Z} ||^2 }} q_{t,Z} (\gamma '',\alpha '') - CZ^{2 - 1/8 + \varepsilon /2}  \geqslant \frac{{\left\| {P\phi } \right\|^2 }}
{{||\phi _{t,Z} ||^2 }} Z^2 \mu (t) - CZ^{2 - 1/8 + \varepsilon /4} .
\]

4. In summary, from Case 1 and Case 2 we have in any case
\[
q_{t,Z} (\gamma ,\alpha ,\phi ) \geqslant \frac{{\left\| {P\phi } \right\|^2 }}
{{||\phi _{t,Z} ||^2 }} Z^2 \mu (t) - C\max \{ Z^{2 - \varepsilon } ,Z^{2 - 1/8 + \varepsilon /4} \} .
\]
Choosing $\eps=1/10$ we obtain
\[
\mathop {\inf }\limits_{(\gamma ,\alpha ) \in \G^{\rm B} } q_{t,Z} (\gamma ,\alpha ,\phi ) \geqslant \frac{{\left\| {P\phi } \right\|^2 }}
{{||\phi _{t,Z} ||^2 }} Z^2 \mu (t) - CZ^{2 - 1/10} .
\]
\end{proof}  

\subsection{Bogoliubov ground state energy}

We are now ready to give the proof of Theorem \ref{thm:GSE-Bogoluibov-bosonic-atom}.

\begin{proof} {\bf Upper bound.} Fix $\eps>0$ small. Choose $(\gamma_{t,\eps},\alpha_{t,\eps})\in \G^{\rm B}$  such that $$q_t(\gamma_{t,\eps},\alpha_{t,\eps})\le \mu(t)+\eps.$$
Choosing$\gamma(x,y)=Z^3\gamma_{t,\eps}(Zx,Zy)$, $\alpha(x,y)=Z^3\alpha_{t,\eps}(Zx,Zy)$ and $\phi(x)= Z^2\phi_{t-\Tr(\gamma_{t,\eps})/Z} (Zx)$, we have $\Tr(\gamma)+||\phi||^2=tZ$ and
\bqq
  \E^{\rm B} (\gamma ,\alpha ,\phi ,Z) &=& Z^3 E_{Z = 1}^H (\phi _{t - \Tr(\gamma _{t,\eps} )/Z} ) + Z^2 \left[ {q_t (\gamma _{t,\eps} ,\alpha _{t,\eps} ) + e'(t)\Tr(\gamma _{t,\eps} )} \right] \hfill \\
   &~&~+ Z\left[ {D(\rho _{\gamma _{t,\eps} } ,\rho _{\gamma _{t,\eps} } ) + X(\gamma _{t,\eps} ,\gamma _t ) + X(\alpha _{t,\eps} ,\alpha _{t,\eps} )} \right] \hfill \\
   &\le & Z^3 e(t - \Tr(\gamma _{t,\eps} )/Z) + Z^2 [\mu (t) + e'(t)\Tr(\gamma _{t,\eps} )] + Z C_\eps \hfill \\
   &=& Z^3 \left[ {e(t) - (\Tr(\gamma _{t,\eps})/Z)e'(t) + o (Z^{ - 1} )\Tr \gamma_\eps} \right] \hfill\\
&~&+ Z^2 \left[ {\mu (t)+\eps + \Tr(\gamma _{t,\eps} )e'(t)} \right] +  Z C_\eps \hfill \\
   &=& Z^3 e(t) + Z^2 (\mu (t)+\eps + o(1) C_\eps ).
 \eqq
Thus
$$ E^{\rm B}(N,Z)\le  Z^3 e(t) + Z^2 (\mu (t)+\eps + o(1) C_\eps ).$$
Because $\eps>0$ can be chosen as small as we want, we can conclude that
$$ E^{\rm B}(N,Z)\le Z^3 e(t) + Z^2 \mu (t) + o(Z^2 ).$$
 
{\bf Lower bound.} It suffices to consider $(\gamma,\alpha,\phi)$ such that $\E^{\rm B} (\gamma ,\alpha ,\phi,Z )\le Z^3e(t)$, and hence $\Tr[-\Delta \widetilde\gamma]\le CZ^3$. We shall denote by $P$ the one-dimensional projection onto the Hartree ground state $\phi_{t,Z}$ and $P^{\bot}=1-P$.

In the expression of $\E^{\rm B}(\gamma,\alpha,\phi,Z)$, if we ignore the non-negative terms $X(\gamma,\gamma)$, $X(\alpha,\alpha)$ and estimate the direct term by 
\bqq
  D(\rho _{\widetilde\gamma } ,\rho _{\widetilde\gamma } )&=&2D(\rho _{\widetilde\gamma } ,|\phi _{t,Z} |^2 ) - D(|\phi _{t,Z} |^2 ,|\phi _{t,Z} |^2 )+D(\rho _{\widetilde\gamma }-|\phi_{t,Z}|^2,\rho _{\widetilde\gamma }-|\phi_{t,Z}|^2)\hfill\\
 &\geqslant& 2D(\rho _{\widetilde\gamma } ,|\phi _{t,Z} |^2 ) - D(|\phi _{t,Z} |^2 ,|\phi _{t,Z} |^2 ) = 2D(\rho _{\widetilde\gamma } ,|\phi _{t,Z} |^2 )  + Z^3 e(t)- Z^2 e'(t)\Tr(\widetilde\gamma )
   \eqq
then we arrive at 
\bq \label{eq:E>=H+0q} E^{\rm B}(\gamma ,\alpha ,\phi,Z ) \geqslant Z^3 e(t)+\Tr( h_{t,Z} \widetilde\gamma) + \iint\limits_{\mathbb{R}^3  \times \mathbb{R}^3 } {\frac{{\gamma (x,y) \overline{\phi (x)} {\phi (y)}}}
{{|x - y|}}} + \operatorname{Re} \iint\limits_{\mathbb{R}^3  \times \mathbb{R}^3 } {\frac{{\alpha (x,y)\overline{\phi (x)}\overline{\phi (y)}}}
{{|x - y|}}}.
\eq 

By the same argument of the proof of Lemma \ref{le:mu-t-finite} we have
\[
 \frac{1}{2} \operatorname{Tr} [h_{t,Z} \gamma ] + \iint\limits_{\mathbb{R}^3  \times \mathbb{R}^3 } {\frac{{\gamma (x,y) \overline{\phi (x)} {\phi (y)}}}
{{|x - y|}}} + \operatorname{Re} \iint\limits_{\mathbb{R}^3  \times \mathbb{R}^3 } {\frac{{\alpha (x,y)\overline{\phi (x)}\overline{\phi (y)}}}
{{|x - y|}}} \ge -CZ^2 .
\]
Putting this bound together with the gap $\operatorname{Tr} [h_{t,Z} \widetilde\gamma ] \ge \Delta_t Z^2 \Tr(P^{\bot}\widetilde \gamma)$ into (\ref{eq:E>=H+0q}), and comparing with the upper bound $\E^{\rm B} (\gamma ,\alpha ,\phi,Z )\le Z^3e(t)$ we obtain $||P^{\bot}\phi||\le C$.

We are now able to apply Lemma \ref{le:pert-quadratic} to conclude from (\ref{eq:E>=H+0q}) that
\bqq
E^{\rm B}(\gamma ,\alpha ,\phi ,Z ) \geqslant Z^3 e(t) + (\phi, h_{t,Z} \phi)+ \frac{{\left\| {P\phi } \right\|^2 }}
{{||\phi _{t,Z} ||^2 }} Z^2\mu (t) -CZ^{2-1/10}.
\eqq
Because ${\left\| {P\phi } \right\|^2} \le {\left\| {\phi_{t,Z} } \right\|^2} = tZ $, we obtain the desired lower bound. 
\end{proof}

From the above proof of the lower bound, we also obtain the following estimates on the ground state, which will be useful in the proof of the binding up to the critical number $t_cZ$.

\begin{lemma}[Properties of Bogoliubov minimizers]\label{le:estimate-ground-state} If $(\gamma,\alpha,\phi)$ is a minimizer for $E^{\rm B}(N,Z)$ (or more generally, if $E^{\rm B} (\gamma ,\alpha ,\phi ,Z)=Z^3 e(t)+ Z^2 \mu (t)+o(Z^2)$) then $\Tr(P^ \bot \widetilde\gamma) \le  C$, $\left\langle {\phi ,{h_{t,Z}}\phi } \right\rangle  = o({Z^2})$ and 
\bqq
  D({\rho _{\widetilde \gamma }} - |{\phi _{t,Z}}{|^2},{\rho _{\widetilde \gamma }} - |{\phi _{t,Z}}{|^2}) = o({Z^2}).
\eqq
In particular, it follows from $\left\langle {\phi ,{h_{t,Z}}\phi } \right\rangle  = o({Z^2})$  that ${\left\| {P\phi } \right\|^2} = tZ + o(Z)$. Here $P$ is the one-dimensional projection onto the Hartree ground state $\phi_{t,Z}$.
\end{lemma}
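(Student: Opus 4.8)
The plan is to re-run the lower bound in the proof of Theorem~\ref{thm:GSE-Bogoluibov-bosonic-atom}, but this time keeping all the non-negative terms that are discarded there, and then to read the estimates off by comparing with the (nearly optimal) value of the energy. Let $P$ be the rank-one projection onto $\phi_{t,Z}$ and $\psi_0=\phi_{t,Z}/\|\phi_{t,Z}\|$. As in that proof, the hypothesis $\E^{\rm B}(\gamma,\alpha,\phi,Z)=Z^3e(t)+O(Z^2)$ already yields $\Tr[-\Delta\widetilde\gamma]\le CZ^3$ (so $\|\nabla\phi\|\le CZ^{3/2}$), $\|\phi\|^2\le\|\phi_{t,Z}\|^2=tZ$, and, via the crude bound $\frac{1}{2}\Tr[h_{t,Z}\gamma]+(\text{the }\gamma,\alpha,\phi\text{-integrals})\ge-CZ^2$ together with the gap $h_{t,Z}\ge Z^2\Delta_t P^{\bot}$, the bounds $\Tr[h_{t,Z}\widetilde\gamma]\le CZ^2$, $\Tr(P^{\bot}\widetilde\gamma)\le C$ and $\|P^{\bot}\phi\|\le C$. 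Doing the same completion of square as there but retaining $D(\rho_{\widetilde\gamma}-|\phi_{t,Z}|^2,\rho_{\widetilde\gamma}-|\phi_{t,Z}|^2)$, $X(\gamma,\gamma)$, $X(\alpha,\alpha)\ge0$ and splitting $\Tr[h_{t,Z}\widetilde\gamma]=(\phi,h_{t,Z}\phi)+\Tr[h_{t,Z}\gamma]$, I get
\[
\E^{\rm B}(\gamma,\alpha,\phi,Z)\ge Z^3e(t)+D(\rho_{\widetilde\gamma}-|\phi_{t,Z}|^2,\rho_{\widetilde\gamma}-|\phi_{t,Z}|^2)+X(\gamma,\gamma)+X(\alpha,\alpha)+(\phi,h_{t,Z}\phi)+q_{t,Z}(\gamma,\alpha,\phi),
\]
with $q_{t,Z}$ the functional of Lemma~\ref{le:pert-quadratic}; that lemma applies (its hypotheses on $\phi$ hold by the above) and gives $q_{t,Z}(\gamma,\alpha,\phi)\ge\frac{\|P\phi\|^2}{\|\phi_{t,Z}\|^2}Z^2\mu(t)-CZ^{2-1/10}\ge Z^2\mu(t)-CZ^{2-1/10}$, using $\|P\phi\|^2\le tZ$ and $\mu(t)\le0$.

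Comparing this with $\E^{\rm B}(\gamma,\alpha,\phi,Z)=Z^3e(t)+Z^2\mu(t)+o(Z^2)$ forces
\[
D(\rho_{\widetilde\gamma}-|\phi_{t,Z}|^2,\rho_{\widetilde\gamma}-|\phi_{t,Z}|^2)+X(\gamma,\gamma)+X(\alpha,\alpha)+(\phi,h_{t,Z}\phi)\le o(Z^2)+CZ^{2-1/10}=o(Z^2),
\]
so each non-negative summand is $o(Z^2)$. This gives $(\phi,h_{t,Z}\phi)=o(Z^2)$ and $D(\rho_{\widetilde\gamma}-|\phi_{t,Z}|^2,\rho_{\widetilde\gamma}-|\phi_{t,Z}|^2)=o(Z^2)$ directly, and since $(\phi,h_{t,Z}\phi)\ge Z^2\Delta_t\|P^{\bot}\phi\|^2$ we improve the a priori bound to $\|P^{\bot}\phi\|^2=o(1)$. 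Together with $\Tr(P^{\bot}\widetilde\gamma)\le C$ this covers all the asserted estimates except the last.

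For $\|P\phi\|^2=tZ+o(Z)$: the particle-number constraint gives $\|P\phi\|^2=\|\phi\|^2-\|P^{\bot}\phi\|^2=(N-\Tr\gamma)-o(1)=tZ-\Tr\gamma-o(1)$, so it suffices to show $\Tr\gamma=o(Z)$ — and this is exactly where the retained term $X(\gamma,\gamma)=o(Z^2)$ matters (it is thrown away in the lower-bound proof of Theorem~\ref{thm:GSE-Bogoluibov-bosonic-atom}). From $\Tr(P^{\bot}\widetilde\gamma)\le C$ we get $\langle\psi_0,\gamma\psi_0\rangle=\Tr\gamma-\Tr(P^{\bot}\gamma)\ge\Tr\gamma-C$; hence if $\Tr\gamma$ were large the top eigenvalue $\lambda_1=\|\gamma\|$ would satisfy $\lambda_1\ge\Tr\gamma-C$ and its eigenvector $u_1$ would obey $|\langle\psi_0,u_1\rangle|^2\ge1-C/\lambda_1$, i.e.\ $u_1$ is $L^2$-close to $\psi_0$. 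Because $\psi_0$ lives at length scale $Z^{-1}$, $D(|\psi_0|^2,|\psi_0|^2)=D(|\phi_{t,Z}|^2,|\phi_{t,Z}|^2)/\|\phi_{t,Z}\|^4\ge cZ$; combining the $L^2$-closeness with $\|\nabla u_1\|^2\le\Tr[-\Delta\gamma]/\lambda_1\le CZ^3/\lambda_1$ to control $\||u_1|^2-|\psi_0|^2\|_{L^{6/5}}$ one gets $D(|u_1|^2,|u_1|^2)\ge c'Z$, whence
\[
c'Z\,(\Tr\gamma-C)^2\le\lambda_1^2\,D(|u_1|^2,|u_1|^2)\le X(\gamma,\gamma)=o(Z^2),
\]
so $\Tr\gamma=o(Z^{1/2})=o(Z)$ and $\|P\phi\|^2=tZ+o(Z)$. (If $\Tr\gamma$ is not large there is nothing to prove, so the dichotomy closes.)

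The step I expect to be the main obstacle is this last one: converting the $L^2$-closeness $u_1\approx\psi_0$ into the quantitative Coulomb lower bound $D(|u_1|^2,|u_1|^2)\ge c'Z$ requires playing the closeness rate $O(\lambda_1^{-1/2})$ against the kinetic bound $\|\nabla u_1\|^2\le CZ^3/\lambda_1$ in an interpolation estimate for $\||u_1|^2-|\psi_0|^2\|_{L^{6/5}}$, and splitting into cases according to the size of $\lambda_1$ relative to $Z$. Everything else is a routine re-run of the machinery already developed for Theorem~\ref{thm:GSE-Bogoluibov-bosonic-atom} and Lemma~\ref{le:pert-quadratic}.
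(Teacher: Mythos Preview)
Your treatment of the first three estimates is correct and is exactly what the paper intends: re-run the lower bound of Theorem~\ref{thm:GSE-Bogoluibov-bosonic-atom} keeping the non-negative terms $D(\rho_{\widetilde\gamma}-|\phi_{t,Z}|^2,\rho_{\widetilde\gamma}-|\phi_{t,Z}|^2)$ and $(\phi,h_{t,Z}\phi)$, apply Lemma~\ref{le:pert-quadratic}, and compare with the hypothesis on the energy.

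For the final claim $\|P\phi\|^2=tZ+o(Z)$, however, you make your own life much harder than necessary, and the detour you take has real gaps. The simple route is the one you discard in the line
\[
q_{t,Z}(\gamma,\alpha,\phi)\ \ge\ \frac{\|P\phi\|^2}{\|\phi_{t,Z}\|^2}\,Z^2\mu(t)-CZ^{2-1/10}\ \ge\ Z^2\mu(t)-CZ^{2-1/10}.
\]
Do \emph{not} perform the second inequality. Since $\mu(t)\le\widetilde\mu(t)<0$ (Lemma~\ref{le:mu-t-finite}) and $\|P\phi\|^2\le\|\phi_{t,Z}\|^2=tZ$, the quantity
\[
\Bigl(1-\tfrac{\|P\phi\|^2}{tZ}\Bigr)\,|\mu(t)|\,Z^2
\]
is one more non-negative term in your sum. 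Comparing with $\E^{\rm B}=Z^3e(t)+Z^2\mu(t)+o(Z^2)$ then forces it to be $o(Z^2)$, i.e.\ $\|P\phi\|^2=tZ+o(Z)$, with no need to control $\Tr\gamma$ at all. This is precisely how the statement is meant to drop out of the lower-bound argument.

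As for your $X(\gamma,\gamma)$ route: besides the step you already flag, the inequality $\lambda_1^2\,D(|u_1|^2,|u_1|^2)\le X(\gamma,\gamma)$ is not justified. Writing $\gamma=\lambda_1|u_1\rangle\langle u_1|+\gamma'$ gives
\[
|\gamma(x,y)|^2=\lambda_1^2|u_1(x)|^2|u_1(y)|^2+|\gamma'(x,y)|^2+2\lambda_1\operatorname{Re}\bigl(u_1(x)\overline{u_1(y)}\,\overline{\gamma'(x,y)}\bigr),
\]
and the cross term has no sign; the functions $u_i\overline{u_j}$ are not $D$-orthogonal, so the diagonal piece does not bound $X(\gamma,\gamma)$ from below. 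You would need a genuine Cauchy--Schwarz argument in the $X$-inner product together with separate control of $X(\gamma',\gamma')$, which is extra work with no payoff given the one-line alternative above.
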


\subsection{Comparison to quantum energy: a heuristic discussion}

Let us discuss on the comparison between the Bogoliubov ground state energy $E^{\rm B}(N,Z)$ and the quantum energy $E(N,Z)$ in Conjecture \ref{conj:bosonic-atom-energy}.

First at all, due to the variational principle, the Bogoliubov energy $E^{\rm B}(N,Z)$ is a rigorous upper bound to the quantum grand canonical energy
$$
E^{\rm g}(N,Z)=\inf \{(\Psi, \bigoplus_{N=0}^\infty H_{N,Z} \Psi), \Psi\in \F, ||\Psi||=1\}.
$$
It is believed that the ground state energy $E(N,Z)$ is a convex function on $N$ (see \cite{LS10}, p. 229), which is equivalent to $E^{\rm g}(N,Z)=E(N,Z)$. If this conjecture is correct then the Bogoliubov energy $E^{\rm B}(N,Z)$ is also an upper bound to the canonical energy $E(N,Z)$.

In the following, we shall argue heuristically why the Bogoliubov energy $E^{\rm B}(N,Z)$ is a lower bound to $E(N,Z)$ (up to an error $o(Z^2)$). Some further work is required to make the argument rigorous.  

Choosing an orthonormal basis $\{u_n\}_{n=0}^\infty$ for $\h$ with $u_0=\phi_{t,Z}/||\phi_{t,Z}||$, we can represent the Hamiltonian $\mathbb{H}_Z=\bigoplus_{N=0}^\infty H_{N,Z}$ in the second quantization
$$ \mathbb{H}_Z=\sum_{m,n\ge 0} h_{m,n}a^*_m a_n+\frac{1}{2} \sum_{m,n,p,q\ge 0}W_{m,n,p,q} a^*_m a^*_n a_p a_q $$
where $a_n=a(u_n)$ and  
$$h_{m,n}=(u_m, (-\Delta-Z|x|^{-1})u_n), W_{m,n,p,q}=\iint_{\R^3 \times \R^3} {\frac{\overline{u_m(x)}\overline{u_n(y)} u_p(x) u_q(y)}{|x-y|}}.$$
Assume that $\Psi$ is a ground state for $E(N,Z)$. We shall denote by $\left\langle {\mathbb{H}_Z} \right\rangle_\Psi$ the expectation $\left\langle {\Psi, \mathbb{H}_Z \Psi} \right\rangle$.

{\bf Step 1.} As in \cite{BLLS93} we have the condensation $\Tr (P^{\bot} \gamma_{\Psi})\le C$ where $P$ is the one-dimensional projection onto $u_0$. Let us denote $\gamma =P^{\bot}\gamma_\Psi P^{\bot}$, $\alpha=P^{\bot}\alpha_\Psi P^{\bot}$, $\N_0=a_0^*a_0$ and $N_0=\left\langle {\N_0} \right\rangle_\Psi$. Then $(\gamma,\alpha)\in \G^{\rm B}$ and $N-N_0=\Tr(\gamma)\le C$. 

{\bf Step 2.} The leading term $Z^3e(t)$ of the ground state energy $E(N,Z)$ comes from the terms of full condensation, namely ${h_{00}}a_0^*a_0$ and $W_{0000}a_0^*a_0^*a_0a_0$. Similarly to the computation to the energy of product functions, we have
\bq \label{eq:Bosonic-energy-condensation-part}
 &~& {h_{00}}\left\langle {a_0^*{a_0}} \right\rangle_\Psi  + {W_{0000}}\left\langle {a_0^*a_0^*{a_0}{a_0}} \right\rangle _\Psi \nn \hfill \\
   &=& \left\langle {{u_0},\left( { - \Delta  - Z|x{|^{ - 1}}} \right){u_0}} \right\rangle {N_0} + (\left\langle {\N_0^2 } \right\rangle _\Psi -N_0 ) D(|{u_0}{|^2},|{u_0}{|^2}) \nn\hfill \\
   &\ge & \left\langle {{u_0},\left( { - \Delta  - Z|x{|^{ - 1}}} \right){u_0}} \right\rangle {N_0} + (N_0^2-N_0) D(|{u_0}{|^2},|{u_0}{|^2}) \nn\hfill \\
      &\geqslant & \frac{{{N_0}{Z^3}}}{{{N_0} - 1}}e\left( {\frac{{{N_0} - 1}}{Z}} \right)\nn \hfill \\
   &=& {Z^3}e(t) - {Z^2}e'(t)\Tr (\gamma) + {Z^2}[{t^{ - 1}}e(t) - e'(t)] + o({Z^2}) .
 \eq
As a consequence, the expectation of the rest of the Hamiltonian $\mathbb{H}_Z$ should be of order $O(Z^2)$.

{\bf Step 3.} Because almost of particles live in the condensation $u_0$, we may hope to eliminate all terms $W_{m,n,p,q} a^*_m a^*_n a_p a_q $ in the two-body interaction which have only $0$ or $1$ operator $a^{\#}_0$ (where $a^{\#}_0$ is either $a_0$ or $a^{*}_0$).

{\bf Step 4.} Now we apply the Bogoliubov principle in which we replace any $a^{\#}_0$ by $\sqrt{N_0}\approx \sqrt{N}$. We can see that the terms with 1 and 3 operators $a^{\#}_0$ should be canceled together. In fact,
\bqq
 &~& \sum\limits_{m \geqslant 1} {\left( {{h_{0m}}{{\left\langle {a_0^*{a_m}} \right\rangle }_\Psi } + {W_{000m}}{{\left\langle {a_0^*a_0^*{a_0}{a_m}} \right\rangle }_\Psi }} \right)}  \hfill \\
   &\approx & \sum\limits_{m \geqslant 1} {\left( {{h_{0m}}{{\left\langle {a_0^*{a_m}} \right\rangle }_\Psi } + {W_{000m}}{{\left\langle {a_0^*{a_m}} \right\rangle }_\Psi }} \right)}  \hfill \\
   &=& \sum\limits_{m \geqslant 1} {\left\langle {{u_m},\left( { - \Delta  - Z|x{|^{ - 1}} + N|{u_0}|*|.{|^{ - 1}}} \right){u_0}} \right\rangle {{\left\langle {a_0^*{a_m}} \right\rangle }_\Psi }}  = 0 \hfill \\
   &=& \sum\limits_{m \geqslant 1} {\left\langle {{u_m}, {Z^2}e'(t){u_0}} \right\rangle {{\left\langle {a_0^*{a_m}} \right\rangle }_\Psi }}  = 0 .
   \eqq
Here we use the fact that $u_0$ is the ground state for the Hartree mean-field operator
$$ h_{t,Z}=-\Delta-Z|x|^{-1}+|\phi_{t,Z}|^2*|.|^{-1}-Z^2e'(t)$$   
It remains the terms with precisely 0 or 2 operators  $a^{\#}_0$,
\bq \label{eq:Bosonic-energy-quaratic-part-1}
 &~& \sum\limits_{m,n \geqslant 1} {\left( {{h_{mn}}{{\left\langle {a_m^*{a_n}} \right\rangle }_\Psi } + {W_{m00n}}{{\left\langle {a_m^*a_0^*{a_0}{a_n}} \right\rangle }_\Psi }} \right)} \nn \hfill \\
   &\approx & \sum\limits_{m,n \geqslant 1} {\left( {{h_{mn}}{{\left\langle {a_m^*{a_n}} \right\rangle }_\Psi } + N{W_{m00n}}{{\left\langle {a_m^*{a_n}} \right\rangle }_\Psi }} \right)}  \nn \hfill \\
   &=& \operatorname{Tr} \left[ {\left( { - \Delta  - Z|x{|^{ - 1}} + N|{u_0}{|^2}*|.{|^{ - 1}}} \right)\gamma } \right]
 \eq
 and
 \bq \label{eq:Bosonic-energy-quaratic-part-2}
  &~&\sum\limits_{m,n \geqslant 1} {\left( {{W_{m0n0}}{{\left\langle {a_m^*a_0^*{a_0}{a_n}} \right\rangle }_\Psi } + \operatorname{Re} [{W_{mn00}}{{\left\langle {a_m^*a_n^*{a_0}{a_0}} \right\rangle }_\Psi }]} \right)}  \nn \hfill \\
  &\approx & \sum\limits_{m,n \geqslant 1} {\left( {N{W_{m0n0}}{{\left\langle {a_m^*{a_n}} \right\rangle }_\Psi } + N\operatorname{Re} [{W_{mn00}}{{\left\langle {a_m^*a_n^*} \right\rangle }_\Psi }]} \right)} \nn \hfill \\
   &=& \operatorname{Re} \iint\limits_{{\mathbb{R}^3} \times {\mathbb{R}^3}} {\frac{{(\gamma (x,y) +  \alpha (x,y)){\phi _{t,Z}}(x){\phi _{t,Z}}(y)}}{{|x - y|}}dxdy}.
   \eq

{\bf Step 5.} Putting the approximations (\ref{eq:Bosonic-energy-condensation-part}), (\ref{eq:Bosonic-energy-quaratic-part-1}) and (\ref{eq:Bosonic-energy-quaratic-part-2}) together we obtain the desired lower bound
\bqq
  {\left\langle {{\mathbb{H}_Z}} \right\rangle _\Psi } &\geqslant & {Z^3}e'(t) + {Z^2}[{t^{ - 1}}e(t) - e'(t)] \hfill \\
   &~&+ \operatorname{Tr} [{h_{t,Z}}\gamma ] + \operatorname{Re}\iint\limits_{{\mathbb{R}^3} \times {\mathbb{R}^3}} {\frac{{[\gamma (x,y) +  \alpha (x,y)]{\phi _t}(x){\phi _t}(y)}}{{|x - y|}}} + o({Z^2}).
 \eqq
 Because $(\gamma,\alpha)\in \G^{\rm B}$ and $\gamma \phi_{t,Z}=0$ one has
 \[  {Z^2}[{t^{ - 1}}e(t) - e'(t)] + \operatorname{Tr} [{h_{t,Z}}\gamma ] +  \operatorname{Re}\iint\limits_{{\mathbb{R}^3} \times {\mathbb{R}^3}} {\frac{{[\gamma (x,y) + \alpha (x,y)]{\phi _{t,Z}}(x){\phi _{t,Z}}(y)}}{{|x - y|}}} \ge Z^2 \mu (t) .
\]
Thus we arrive at the desired lower bound
\[{\left\langle {{\mathbb{H}_Z}} \right\rangle _\Psi } \geqslant {Z^3}e'(t) + Z{}^2\mu (t) + o({Z^2}).\]
\section*{Appendix}
\addcontentsline{toc}{section}{Appendix}

\begin{proof}[Proof of Lemma \ref{le:relation-gamma-alpha}] It is obvious that $\Gamma\ge 0$ if and only if $\gamma\ge 0$, $\alpha^*=J\alpha J$ and 
\bqq \label{eq:Gamma-positive-a}
\left\langle {f \oplus Jg,\Gamma f \oplus Jg} \right\rangle  = (f,\gamma f) + (g,(1 + \gamma )g) + 2\operatorname{Re} (\alpha J f,g) \geqslant 0,~~\forall f,g\in \h.
\eqq
Using a simple scaling $g=tg$, $t\in \C$, we can see that the latter inequality is equivalent to 
$$
(f,\gamma f)(g,(1 + \gamma )g) \ge |(\alpha Jf,g)|^2,~~\forall f,g\in \h.
$$
Replacing $g$ by $(1+\gamma)^{-1}g$, we can rewrite the latter inequality as
\bq \label{eq:Gamma-positive-b}
(f,\gamma f)(g,(1 + \gamma )^{-1}g) \ge |(\alpha J f,(1+\gamma)^{-1}g)|^2,~~\forall f,g\in \h.
\eq

Note that (\ref{eq:relation-gamma-alpha}) follows from (\ref{eq:Gamma-positive-b}) by choosing $g =\alpha Jf$. Reversely, we can see that (\ref{eq:relation-gamma-alpha})  implies (\ref{eq:Gamma-positive-b}) by using  the Cauchy-Schwarz inequality for the positive definite quadratic form $Q(u,v)=(u,(1+\gamma)^{-1}v)$, i.e.  
\[
(f,\gamma f)(g,(1 + \gamma )^{ - 1} g) \geqslant (\alpha Jf,(1 + \lambda )^{ - 1} \alpha Jf)(g,(1 + \gamma )^{ - 1} g) \geqslant \left| {(\alpha Jf,(1 + \gamma )^{ - 1} g)} \right|^2.
\]
\end{proof}

\begin{proof}[Proof of Theorem \ref{thm:unitary-implementation}] The proof below follows \cite{So07} Theorem 9.5 (sufficiency) and \cite{Ru77} Theorem 6.1 (necessity). 

{\bf Sufficiency.} Assume that $VV^*$ is trace class on $\h$. We shall construct the unitary $\U_\V$. 

1. Let $\{u_i\}_{i\ge 1}$ be an orthonormal basis for $\h$. Recall that an orthonormal basis for $\F^{B,F}(\h)$ is given by
\[
\left| {n_{i_1 } ,...,n_{i_M } } \right\rangle  = \left( {n_{i_1 } !...n_{i_M } !} \right)^{ - 1/2} a ^* (u_{i_M } )^{n_{i_M } } ...a^* (u_{i_1 } )^{n_{i_1 } } \left| 0 \right\rangle ,
\]
where $n_j$ run over $0,1,2,...$ such that there are only finite $n_j>0$. 

We start by constructing the new vacuum $\left| 0 \right\rangle_\V=\U_\V\left| 0 \right\rangle$ which is characterized by
\bqq
A(\V(u_i\oplus 0))\left| 0 \right\rangle_\V=0.
\eqq
for all $i=1,2,...$, namely 
$$A(\V (u_i\oplus 0))=A(Uu_i\oplus JVJu_i)=a(Uu_i)+a^*(VJ u_i)$$
are the new annihilation operators.

2. The first step is to choose an convenient basis $\{u_i\}$. From $\V^*\S\V=\S=\V\S\V^*$ we have 
$$UU^*=1+VV^*,~U^*U=1+J^*V^*VJ$$
and $C=C^*$ where $C=U^*VJ$. Since $U^*U-1$ is trace class, $U^*U$ has an orthonormal eigenbasis on $\h$. On the other hand, because $U^*U$ commutes with the conjugate linear map $C=U^*J^*V$ and $C^*C$ is trace class on $\h$, we can find an orthonormal basis $\{u_i\}_{i\ge 1}$ for $\h$ consisting of eigenvectors of $U^*U$  such that they are also eigenvectors of $C$. 

Denote $\mu_i:=||Uu_i||\ge 1$ and $f_i:=\mu_i^{-1}Uu_i$. Then $\{f_i\}_{i\ge 1}$ is an orthonormal basis for $\h$. Since  
\[
(f_j , VJ u_i ) = \mu_j^{ - 1} (Uu_j , VJ u_i ) = \mu_j^{ - 1} (u_j ,Cu_i ) = 0 
\]
for all $j\ne i$, we must have $VJ u_i\in \Span\{f_i\}$. Note that if we change $u_i$'s by complex phases then it still holds that $(u_j ,Cu_i ) =0$ for all $i\ne j$ (although  $u_i$'s maybe no longer eigenvectors of $C$). Therefore, we can change $u_i$'s by complex phases to obtain $VJ u_i=\nu_i f_i$ for some $\nu_i\ge 0$. Thus there is an orthonormal basis $\{f_i\}_{i\ge 1}$ for $\h$ such that the new annihilation operators are 
\bqq
A(\V (u_i \oplus 0))=\mu_i a(f_i)+\nu_i a^*(f_i), ~~~i=1,2,...
\eqq
where $\mu_i\ge 1$, $\nu_i\ge 0$, $\mu _i^2-\nu_i^2=1$ and $\sum_{i=1}\nu_i^2=\Tr(VV^*)<\infty$.

3. This representation allows us to construct the new vacuum $\left| 0 \right\rangle_\V$ explicitly
\bqq
  \left| 0 \right\rangle_\V &=& \mathop {\lim }\limits_{M \to \infty } \prod\limits_{j = 1}^M {(1 - (\nu _j /\mu _j )^2 )^{1/4} \sum\limits_{n = 0}^\infty  {\left( { - \frac{{\nu _j }}
{{2\mu _j }}} \right)^n \frac{{a^* (f_j )^{2n} }}
{{n!}}} } \left| 0 \right\rangle \nn \hfill \\
   &=& \prod\limits_{j = 1}^{} {(1 - (\nu _j /\mu _j )^2 )^{1/4} \exp \left[ { - \sum\limits_{i = 1} {\frac{{\nu _i }}
{{2\mu _i }}a ^* (f_i )^2 } } \right]} \left| 0 \right\rangle.
\eqq
It is straightforward to check that $\left| 0 \right\rangle_\V$ is well defined and is annihilated by the new annihilation operators $A(\V (u_i\oplus 0))$. Having the new vacuum $\left| 0 \right\rangle_\V$, we can define $\left| n_{i_1},...,n_{i_M} \right\rangle_\V=\U \left| n_{i_1},...,n_{i_M} \right\rangle$ by
$$
\left| n_{i_1},...,n_{i_M} \right\rangle_\V=(n_{i_1}!...n_{i_M}!)^{-1/2} A^*(\V(u_{i_M}\oplus 0))^{n_{i_M}}...A^*(\V(u_{i_1}\oplus 0))^{n_{i_1}}\left| 0 \right\rangle_\V.
$$

4. Finally we need to prove that the new vectors $\left| n_{i_1},...,n_{i_M} \right\rangle_\V$ indeed form a basis for $\F$. The trick is to use the formula 
\bqq
  \left| 0 \right\rangle = \prod\limits_{j = 1} {(1 - (\nu _j /\mu _j )^2 )^{ - 1/4} \exp \left[ {\sum\limits_{i = 1} {\frac{{\nu _j }}
{{2\mu _j }}a_ + ^* (f_i )^2 } } \right]} \left| 0 \right\rangle_\V.
\eqq
and express the old basis vectors $\left| n_{i_1},...,n_{i_M} \right\rangle$ in terms of the new ones. Since the new vectors $\left| n_{i_1},...,n_{i_M} \right\rangle_\V$ span all of the old basis vectors $\left| n_{i_1},...,n_{i_M} \right\rangle$, the new ones span the whole space $\F$. 

{\bf Necessity.} Assume that there exists a normalized vector $\left| 0 \right\rangle _\V\in \F$ such that $A (V(u\oplus 0))\left| 0 \right\rangle _\V  = 0$ for all $u\in \h$. We shall prove that $VV^*$ must be trace class on $\h$. 

5. Let $\left| 0 \right\rangle _\V  =  \mathop\bigoplus_{N=0}^{\infty}\Psi_N$ where $\Psi _N  \in  \mathop\bigotimes_{\text{sym}}^N \h$. Then the condition $A(V(u\oplus 0))\left| 0 \right\rangle _\V  = 0$ is equivalent to
\bq
a(Uu)\Psi_1=0~\text{and }~a (Uu)\Psi _{N+2}  + a^* (VJ u)\Psi _{N}  = 0~~\text{for all} ~u\in \h,N=0,1,2,...   \label{eq:Shale-Stinespring-necessity-vanish}
\eq

Since $UU^*=1+VV^*\ge 1$ we have $\Ker(U^*)=\{0\}$, and hence $\overline {\text{Ran}(U)}=\h$. Therefore, it follows from $a(Uu)\Psi_1=0$ for all $u\in \h$ that $\Psi_1=0$. Then, by induction using (\ref{eq:Shale-Stinespring-necessity-vanish}) we obtain $\Psi_1=\Psi_3=\Psi_5=...=0$. 

If $\Psi_0=0$ then the same argument deduces $\Psi_0=\Psi_2=\Psi_4=...=0$ which contradicts with $\left| 0 \right\rangle _\V\ne 0$. Thus $\Psi_0\in \mathbb{C}\backslash\{0\}$ and from (\ref{eq:Shale-Stinespring-necessity-vanish}) with $N=0$ we have
\bq
a (Uu)\Psi _2+ \Psi _0VJ u =  0~~\text{for all}~u\in\h. \label{eq:Shale-Stinespring-necessity-N2}
\eq

6. Introducing the conjugate linear map $H:\h\to\h$ defined by
\[
(H\varphi _1,\varphi _2 ) = \left(\Psi _2, {\varphi _1  \otimes \varphi _2} \right)~~\text{for all}~ \varphi _1,\varphi _2\in \h.
\]
A  straightforward computation shows that $\Tr(H^* H) =\left\| {\Psi _2 } \right\|^2$. Moreover using (\ref{eq:Shale-Stinespring-necessity-N2}) and the symmetry of $\Psi_2$ we have 
\bqq
  (- \Psi _0VJ\varphi _1 , \varphi _2 ) &=& (a(U\varphi _1 )\Psi _2 ,\varphi _2) = (\Psi _2, a^* (U\varphi_1 )\varphi _2,) \hfill \\
   &=& \sqrt{2} (\Psi _2,U\varphi _1  \otimes \varphi _2) =\sqrt{2}  (H U\varphi _1 ,\varphi _2)
\eqq
for all $\varphi_1,\varphi_2\in \h$. This means $- \Psi _0VJ=HU$. Because $U$ is bounded and $H^*H$ is trace class on $\h$, we conclude that 
$$VV^*=2\Psi_0^{-2}HUU^*H^*$$
is trace class on $\h$.
\end{proof}

\begin{proof} [The rest part of proof of Theorem \ref{thm:quasi-free-state}]

We now prove that $\Gamma$ is the 1-pdm of the state $\rho=\Tr[G]^{-1}G$. Recall that $\F$ has the orthonormal basis 
\[
\left| {n_1 ,n_2 ,...} \right\rangle  = (n_1 !n_2 !...)^{ - 1/2} (a_1^* )^{n_1 } (a_2^* )^{n_2 } ...\left| 0 \right\rangle 
\]
where $\left| 0 \right\rangle$ is the vacuum and $n_1,n_2...$ run over $0,1,2,...$ such that there are only finite $n_j>0$. A  straightforward computation shows that
 \begin{eqnarray*}
\Tr (G) &=& \sum\limits_{n_j  = 0,1,2,...} {\left\langle {n_1 ,n_2 ,...} \right| G \left| {n_1 ,n_2 ,...} \right\rangle }  \hfill \\
  &=& \sum\limits_{n_j=0,1,...; j\in I} {(n_1 !n_2 !...)^{ - 1} \left\langle 0 \right|\prod\limits_{i\in I} {\left( {a_i ^{n_i } \exp [- \lambda_i a_i^* a_i ](a_i^*)^{n_i} } \right)} \left| 0 \right\rangle }  \hfill \\
   &=& \sum\limits_{n_j=0,1,...; j\in I} {(n_1 !n_2 !...)^{ - 1} \left\langle 0 \right|\prod\limits_{i \in I} {\left( {a_i ^{n_i } \sum\limits_{k =0}^\infty  {\frac{{( -\lambda _i )^k (a_i^* a_i )^k }}
{{k!}}} (a_i^* )^{n_i } } \right)} \left| 0 \right\rangle }  \hfill \\
   &=& \sum\limits_{n_j=0,1,...; j\in I} {(n_1 !n_2 !...)^{ - 1} \left\langle 0 \right|\prod\limits_{i \in I} {\left( {\sum\limits_{k = 0}^\infty  {\frac{{( - e_i )^k (n_i )^k (n_i !)}}
{{k!}}} } \right)} \left| 0 \right\rangle }  \hfill \\
   &=& \sum\limits_{n_j=0,1,...; j\in I} {\prod\limits_{i}{ {e^{ - \lambda _i n_i }}}}  =\prod\limits _{i \in I} {\frac{1}
{{1 - e^{- \lambda _i}}}} < \infty
\end{eqnarray*}
since $\sum_{i\in I} e^{- \lambda _i}<\infty$. Thus $\rho$ is well-defined. 

We check that $\Gamma$ is indeed the 1-pdm of $\rho$. Note that $\left| {n_1 ,n_2 ,...} \right\rangle$ and $G\left| {n_1 ,n_2 ,...} \right\rangle$ have the same number of particle $u_i$ for any $i=1,2,...$. By the same way of determining $\operatorname{Tr} (G)$ we find that $
\operatorname{Tr} (a_i a_j G) = 0$ and 
\bqq
  &~&\operatorname{Tr} (a_i^* a_j G) =\delta _{ij} \operatorname{Tr} (a_i^* a_i G) \hfill \\
   &=& \delta _{ij} \left( {\prod\limits_{k \in \operatorname{I} ,k \ne i} {(1 + \lambda _k )} } \right)\left( {\sum\limits_{n_i  = 0}^\infty  {(n_i !)^{ - 1} \left\langle 0 \right|a_i^{n_i } a_i^* a_i \exp ( - c_i a_i^* a_i )(a_i^* )^{n_i } \left| 0 \right\rangle } } \right) \hfill \\
   &=& \delta _{ij} \left( {\prod\limits_{k \in \operatorname{I} ,k \ne i} {(1 + \lambda _k )} } \right)\left( {\sum\limits_{n_i  = 0}^\infty  {(n_i !)^{ - 1} \left\langle 0 \right|a_i^{n_i } a_i^* a_i \sum\limits_{r = 0}^\infty  {\frac{{( - c_i )^r (a_i^* a_i )^r }}
{{r!}}} (a_i^* )^{n_i } \left| 0 \right\rangle } } \right) \hfill \\
   &=& \delta _{ij} \left( {\prod\limits_{k \in \operatorname{I} ,k \ne i} {(1 + \lambda _k )} } \right)\left( {\sum\limits_{n_i  = 0}^\infty  {(n_i !)^{ - 1} \left\langle 0 \right|\sum\limits_{r = 0}^\infty  {\frac{{( - c_i )^r (n_i )^{r + 1} (n_i !)}}
{{r!}}} \left| 0 \right\rangle } } \right) \hfill \\
   &=& \delta _{ij} \left( {\prod\limits_{k \in \operatorname{I} ,k \ne i} {(1 + \lambda _k )} } \right)\left( {\sum\limits_{n_i  = 0}^\infty  {\exp ( - c_i n_i )n_i } } \right) = \delta _{ij} \lambda _i \prod\limits_{k \in \operatorname{I} } {(1 + \lambda _k )} 
\eqq
in which we have used
\bqq
  \sum\limits_{n_i  = 0}^\infty  {\exp ( - c_i n_i )n_i }  &=&  - \frac{d}
{{dc_i }}\sum\limits_{n_i  = 0}^\infty  {\exp ( - c_i n_i )}  =  - \frac{d}
{{dc_i }}\frac{1}
{{1 - \exp ( - c_i )}} \hfill \\
   &=& \frac{{\exp ( - c_i )}}
{{(1 - \exp ( - c_i ))^2 }} = \lambda _i (1 + \lambda _i ).
\eqq
From the above computations we find that 
\[
\rho (a_i a_j ) = \left( {\operatorname{Tr} (G)} \right)^{ - 1} \operatorname{Tr} (a_i a_j G) = 0 = (u_i ,\alpha Ju_j )
\]
and
\[
\rho (a_i^* a_j ) = \left( {\operatorname{Tr} (G)} \right)^{ - 1} \operatorname{Tr} (a_i^* a_j G) = \delta _{ij} \lambda _i  = (u_i ,\gamma u_j )
\]
for any $i,j$. Thus $\Gamma$ is indeed the 1-pdm of $\rho$. 

3. Finally, we check that $\rho$ is a quasi-free state. One way to do it is to consider $\rho$ as a limit of appropriate Gibbs states, see \cite{BLS94} (eq. (2b.34)). In the following, we shall give a more direct approach by mimicking the proof of Wick's Theorem in \cite{Ga60}.

It suffices to prove  (\ref{eq:Wick-odd})-(\ref{eq:Wick-even}) when $A(F_i)$ is either a creation or annihilation operator, which we denote by $c_i$. Our aim is to show that
\begin{eqnarray}\label{eq:Wick-induction}
 &~&{\mathop{\rm Tr}\nolimits} [c_1 c_2 c_3 c_4 ...c_k G] = \frac{{{\mathop{\rm Tr}\nolimits} [c_1 c_2 G]}}{{{\mathop{\rm Tr}\nolimits} [G]}}{\mathop{\rm Tr}\nolimits} [c_3 c_4 ...c_k G] \hfill\\ 
  &~&~~~~~~~+ \frac{{{\mathop{\rm Tr}\nolimits} [c_1 c_3 G]}}{{{\mathop{\rm Tr}\nolimits} [G]}}{\mathop{\rm Tr}\nolimits} [c_2 c_4 ...c_k G] + ... + \frac{{{\mathop{\rm Tr}\nolimits} [c_1 c_k G]}}{{{\mathop{\rm Tr}\nolimits} [G]}}{\mathop{\rm Tr}\nolimits} [c_2 c_3 ...c_{k - 1} G] \nonumber
 \end{eqnarray}
and the result follows immediately by a simple induction. By the same way of computating $\Tr[G]$
 we may check that
\begin{equation}\label{eq:Tr-ccG}
\frac{{{\mathop{\rm Tr}\nolimits} [c_1 c_2 G]}}{{{\mathop{\rm Tr}\nolimits} [G]}} = f(c_1 )[c_1 ,c_2 ]
\end{equation}
where $[c_1,c_2]=c_1c_2-c_2c_1\in \{0,-1,1\}$ and
\begin{equation}\label{eq:Tr-ccG-f}
f(c_1 ) = \left\{ \begin{array}{l}
 (1 - e^{-\lambda _j } )^{ - 1} {\rm ~if~} c_1=a_j, ~j\in I, \\ 
 (1 - e^{\lambda _j } )^{ - 1} ~~{\rm ~if~} c_1=a_j^*, ~j\in I, \\ 
 1 ~~~~~~~~~~~~~~~{\rm ~if~} c_1=a_j, ~j\notin I,\\ 
 0 ~~~~~~~~~~~~~~~{\rm ~if~} c_1=a_j^*, ~j\notin I.\\ 
 \end{array} \right.
\end{equation}
Thus (\ref{eq:Wick-induction}) is equivalent to 
\begin{eqnarray}\label{eq:Wick-induction-new}
 &~&{\mathop{\rm Tr}\nolimits} [c_1 c_2 c_3 c_4 ...c_k G] = f(c_1 )[c_1 ,c_2 ]{\mathop{\rm Tr}\nolimits} [c_3 c_4 ...c_k G] \hfill\\ 
  &~&~~~~~~~+ f(c_1 )[c_1 ,c_3 ]{\mathop{\rm Tr}\nolimits} [c_2 c_4 ...c_k G] + ... + f(c_1 )[c_1 ,c_k ]{\mathop{\rm Tr}\nolimits} [c_2 c_3 ...c_{k - 1} G] \nonumber .
 \end{eqnarray}

We can prove (\ref{eq:Wick-induction-new}) as follows. From the identity
$$
c_1 c_2 c_3 c_4 ...c_k=[c_1 ,c_2 ]c_3 c_4 ...c_k+ ... + c_2 c_4 ...c_{k - 1}[c_1 ,c_k ]+c_2 c_3 c_4 ...c_kc_1 
$$
we deduce that
\begin{eqnarray}
{\mathop{\rm Tr}\nolimits} \left[ {c_1 c_2 c_3 c_4 ...c_k G} \right] &=&{\mathop{\rm Tr}\nolimits} \left[ {[c_1 ,c_2 ]c_3 c_4 ...c_k G} \right] \hfill\\
&~&+ ... +{\mathop{\rm Tr}\nolimits} \left[ {c_2 c_4 ...c_{k - 1}[c_1 ,c_k ] G} \right]+{\mathop{\rm Tr}\nolimits} \left[ {c_2 c_3 c_4 ...c_kc_1 G} \right]\nonumber \label{eq:Wick-induction-1}.
 \end{eqnarray}

We first consider when $c_1$ is either $a_j$ or $a_j^*$ with $j\in I$. In this case it is straightforward to see that
$
c_1 G  = e^{ \pm \lambda _j } c_1 G
$
where (+) if $c_1=a_j^*$ and (-) if $c_1=a_j$. This implies that 
\begin{eqnarray}\label{eq:Wick-induction-2}
{\mathop{\rm Tr}\nolimits} \left[ {c_2 c_3 c_4 ...c_k c_1G} \right]= e^{ \pm \lambda _j } {\mathop{\rm Tr}\nolimits} \left[ {c_2 c_3 c_4 ...c_k Gc_1} \right] = e^{ \pm \lambda _j } {\mathop{\rm Tr}\nolimits} \left[ {c_1 c_2 c_3 c_4 ...c_k G} \right].
\end{eqnarray}
Substituting (\ref{eq:Wick-induction-2})  into (\ref{eq:Wick-induction-1}) we conclude that
\bqq
 {\mathop{\rm Tr}\nolimits} \left[ {c_1 c_2 c_3 c_4 ...c_k G} \right] &=& \frac{{[c_1 ,c_2 ]}}{{1 - e^{ \pm \lambda _j } }}{\mathop{\rm Tr}\nolimits} \left[ {c_3 c_4 ...c_k G} \right]  \hfill\\ 
&+& \frac{{[c_1 ,c_3 ]}}{{1 - e^{ \pm \lambda _j } }}{\mathop{\rm Tr}\nolimits} \left[ {c_2 c_4 ...c_k G} \right] + ... + \frac{{[c_1 ,c_k ]}}{{1 - e^{\pm \lambda _j } }}{\mathop{\rm Tr}\nolimits} \left[ {c_2 c_4 ...c_{k - 1} G} \right]
\eqq 
which is precisely the desired identity (\ref{eq:Wick-induction-new}).
 
If $c_1=a_j$ for some $j\notin I$ then 
$$
{\mathop{\rm Tr}\nolimits} [c_2 c_3 c_4 ...c_k c_1G]=0
$$
since $a_jG=0$ and (\ref{eq:Wick-induction-new}) follows from (\ref{eq:Wick-induction-1}).

Finally if $c_1=a_j^*$ for some $j\notin I$ then 
\[
{\mathop{\rm Tr}\nolimits} [c_1 c_2 c_3 c_4 ...c_k G]={\mathop{\rm Tr}\nolimits} [c_2 c_3 c_4 ...c_k Gc_1]=0
\]
since $G a_j^*=0$ and we obtain (\ref{eq:Wick-induction-new}).
\end{proof}

\addcontentsline{toc}{section}{Acknowledgments}
\text{}\\
{\bf Acknowledgments:} I thank my advisor Jan Philip Solovej for giving me the problem and various helpful discussions.


\end{document}